\tikzset{fontscale/.style = {font=\relsize{#1}}
    }
\definecolor{darkblue}{RGB}{0,0,127} 
\definecolor{darkgreen}{RGB}{0,130,80}
\definecolor{darkred}{RGB}{160,5,5}
\pgfplotsset{compat=newest}
\newlength\figureheight
\newlength\figurewidth
\DeclareMathOperator{\Gal}{Gal}
\DeclareMathOperator{\Char}{char}
\DeclareMathOperator{\Ann}{Ann}
\DeclareFontFamily{U}{mathb}{\hyphenchar\font45}
\DeclareFontShape{U}{mathb}{m}{n}{
      <5> <6> <7> <8> <9> <10> gen * mathb
      <10.95> mathb10 <12> <14.4> <17.28> <20.74> <24.88> mathb12
      }{}
\DeclareSymbolFont{mathb}{U}{mathb}{m}{n}
\DeclareMathSymbol{\bigast}{2}{mathb}{"06}
\def\XXint#1#2#3{{\setbox0=\hbox{$#1{#2#3}{\int}$}
     \vcenter{\hbox{$#2#3$}}\kern-.5\wd0}}
 \newtheorem{lemma}{Lemma}
 \newtheorem{theorem}{Theorem}
\theoremstyle{definition}
\newtheoremstyle{remark}
{}   
{}   
{\normalfont}  
{}       
{\itshape} 
{.}         
{5pt plus 1pt minus 1pt} 
{}          
\theoremstyle{remark}
\newtheorem*{remark}{Remark}
\newlist{alternative}{enumerate}{4}     
\setlist[alternative,1]{label=\arabic*., ref=\arabic*}
\setlist[alternative,2]{label=(\alph*), ref=\thealternativei.(\alph*)}
\setlist[alternative,3]{label=\roman*., ref=\thealternativei.(\thealternativeii).\roman*}
\setlist[alternative,4]{label=\Alph*., ref=\thealternativei.(\thealternativeii).\thealternativeiii.\Alph*}
\setlist[enumerate,1]{label=\arabic*., ref=\arabic*}
\setlist[enumerate,2]{label=(\alph*), ref=\theenumi.(\alph*)}
\setlist[enumerate,3]{label=\roman*., ref=\theenumi.(\theenumii).\roman*}
\setlist[enumerate,4]{label=\Alph*., ref=\theenumi.(\theenumii).\theenumiii.\Alph*}
\newcommand{\oline}[1]{\overline{#1}}
\newcommand{\R}[1]{{Ref.~\onlinecite{#1}}}
\newcommand{\drawgenerator}[8]{%
\xymatrix@!0{%
& #8 \ar@{-}[ld]\ar@{.}[dd] \ar@{-}[rr] & & #7 \ar@{-}[ld]  \\%
#1 \ar@{-}[rr] \ar@{-}[dd] &  & #2 \ar@{-}[dd] &            \\%
& #6 \ar@{.}[ld] &  & #5 \ar@{-}[uu] \ar@{.}[ll]       \\%
#3 \ar@{-}[rr] &  & #4 \ar@{-}[ru]                       %
}}
\newcommand{\plaquette}[4]{
\xymatrix@!0{%
#1 \ar@{-}[r] \ar@{-}[d]  & #2 \ar@{-}[d] 
\\
#3 \ar@{-}[r]  & #4
}}
\global\long\def\av#1{\left\langle #1 \right\rangle }
\global\long\def\im{\text{Im}}
\Crefname{enumi}{Case}{Cases}
\Crefname{subsection}{Subsection}{Subsections}
\begin{document}

\title{Bifurcating entanglement-renormalization group flows of fracton stabilizer models}

\author{Arpit Dua}
\affiliation{Department of Physics, Yale University, New Haven, CT 06520-8120, USA}
\affiliation{Yale Quantum Institute, Yale University, New Haven, CT 06520, USA}

\author{Pratyush Sarkar}
\affiliation{Department of Mathematics, Yale University, New Haven, CT 06511, USA} 

\author{Dominic~J. Williamson}
\thanks{Current Address: Stanford Institute for Theoretical Physics, Stanford University, Stanford, CA 94305, USA}
\affiliation{Department of Physics, Yale University, New Haven, CT 06520-8120, USA}

\author{Meng Cheng}
\affiliation{Department of Physics, Yale University, New Haven, CT 06520-8120, USA}
\affiliation{Yale Quantum Institute, Yale University, New Haven, CT 06520, USA}

\begin{abstract}
We investigate the entanglement-renormalization group flows of translation-invariant topological stabilizer models in three dimensions. Fracton models are observed to bifurcate under entanglement renormalization, generically returning at least one copy of the original model. 
Based on this behavior we formulate the notion of bifurcated equivalence for fracton phases, generalizing foliated fracton equivalence. 
The notion of quotient superselection sectors is also generalized accordingly. We calculate bifurcating entanglement-renormalization group flows for a wide range of examples and, based on those results, propose conjectures regarding the classification of translation-invariant topological stabilizer models in three dimensions.  
\end{abstract}
  
\maketitle

The renormalization group (RG) is a ubiquitous concept throughout theoretical physics. 
Intuitively, real-space renormalization~\cite{Wilson1975} involves the rescaling of a system such that the short-distance correlations are integrated out. 
At the scale-invariant fixed points of an RG transformation, the system has either zero or infinite correlation length, corresponding to gapped or critical phases respectively. 
As such, RG serves as an important tool for the analysis of long-wavelength physics in a given theory. 

Since rescaling transformations increase the local Hilbert space dimension, conventional RG approaches have relied on truncating the local Hilbert space in a manner that does not affect long-range correlations. 
To further ensure that the long-range entanglement structure is preserved, which is particularly crucial for the classification and characterization of quantum matter, a more careful approach must be taken.  
A real-space renormalization procedure that exclusively removes short-range entanglement consists of applying local unitary circuits and projecting out degrees of freedom that have been completely disentangled into trivial states only~\cite{haah2014bifurcation}. 
Along with the coarse-graining of degrees of freedom to rescale the system, such a procedure is referred to as Entanglement Renormalization (ER)~\cite{Vidal2007}. 
Stable fixed points under ER are identified as representatives of quantum phases of matter. For many exactly solvable models with zero correlation length, ER can be implemented directly at the level of the Hamiltonian~\cite{Konig2009,Aguado2008}. 

In this paper we apply ER to study the long-range entanglement structures of topological stabilizer models. In two dimensions, every translation-invariant topological stabilizer model is equivalent to copies of the 2D toric code~\cite{Haah2018a,bombin2012universal,bombin2014structure} and hence their classification is complete. The 2D toric code is a fixed point under ER. Hence, under ER, any 2D translation-invariant topological stabilizer model flows towards copies of the 2D toric code. In contrast, translation-invariant topological stabilizer models in three dimensions exhibit a rich variety of quantum phases due to the existence of fracton topological order~\cite{chamon2005quantum,PhysRevB.81.184303,bravyi2011topological,Chamon_quantum_glassiness,PhysRevB.95.245126,vijay2016fracton,Williamson_cubic_code,vijay2017isotropic,vijay2017generalization,PhysRevB.96.165106,HHB_models,PhysRevB.97.155111,PhysRevB.97.041110,prem2018cage,Bulmash2018,hao_twisted,Brown2019,finite_temp_Xcube,hsieh_halasz_partons,Prem2019,Bulmash2019}. 
While no systematic classification theorem or procedure has yet been established for these models, they can be organized into broad classes~\cite{Dua_Classification_2019} based on the properties of their excitations and compactifications~\cite{Dua2019_compactify}. 

In \R{haah2014bifurcation} it was found that Haah's cubic code, the canonical example of a type-II fracton topological order with no string operators, bifurcates under ER. 
More precisely, under ER the cubic code splits into two decoupled models on a coarse-grained lattice: the original cubic code and cubic code B. The possibility of bifurcating ER was previously envisioned in Refs.~\onlinecite{Evenbly2014real,Evenbly2014class,Evenbly2014scaling} with the goal of describing critical states that violate the area law in two or more dimensions. 
Our goal is to extend the classification of topological phases in terms of ER fixed points to models that may bifurcate under ER. 
This approach faces an immediate conceptual hurdle: bifurcating models are not scale-invariant. In fact, under the conventional notion of quantum phase~\cite{chen2010local}, a bifurcating model defined on a lattice of spacing $a$ and the same model defined on a lattice of spacing $2a$ belong to different phases of matter. 
Therefore one needs to revisit the definition of thermodynamic quantum phases under such circumstances.

Fracton models are examples of bifurcating models. Under ER they may self-bifurcate or bifurcate into distinct models. 
Since self-bifurcating models give rise to an arbitrarily large number of copies of themselves under repeated ER, it is reasonable to consider them as free resources in the infrared limit. One can then formulate a generalized notion of fixed point where such free resources, i.e. the self-bifurcating models, are quotiented out. 
This generalizes the usual disentangling and projection steps in conventional ER where trivial product state degrees of freedom, which are the simplest example of a self-bifurcating state, serve as the free resource. 
The classification of bifurcating models via ER is then divided into two steps: first the classification of self-bifurcating fixed points and then the classification of quotient fixed points. 
The classification of self-bifurcating fixed points was previously studied from a resource oriented renormalization group perspective in Ref.~\onlinecite{swingle_greevy1,swingle_greevy2}. 

A similar point of view on fixed points of bifurcating ER has already played a key role in the understanding of foliated fracton models~\cite{shirley2018FoliatedFracton,shirley2017fracton,shirley2018Foliated,shirley2018universal,shirley2018Fractional}. 
The notion of foliated equivalence, local unitary equivalence up to adding stacks of 2D toric code, is rooted in the ER flow of foliated fracton models which produce stacks of 2D toric codes when they bifurcate. A stack of 2D toric codes is the simplest nontrivial self-bifurcating topological state in 3D. It is easy to see that it self-bifurcates under coarse-graining in the direction orthogonal to the toric code planes. 
The X-cube model is the canonical example with nontrivial foliated fracton order: under ER that coarse-grains by a factor of two along any axis it bifurcates, returning a copy of itself and a stack of 2D toric codes orthogonal to the axis. This reveals the X-cube model's foliation structure and that it is a foliated fixed point, as it is foliated equivalent to itself after coarse-graining. 

In this work, we propose to generalize the notion of foliated fracton equivalence to \textit{bifurcated equivalence} which allows any self-bifurcating state as a resource, thus providing an equivalence relation that is relevant for all fracton models. 
We study the ER of a large range of fracton models, including 17 of Haah's cubic codes~\cite{haah2011local} and all of Yoshida's first-order fractal spin liquids~\cite{yoshida2013exotic}, and find that they are all bifurcating fixed points. 
That is, under ER they either self-bifurcate, producing several identical copies, or they bifurcate into a copy of themselves along with some distinct models, referred to as B models. 
We demonstrate that the B models are self-bifurcating for all the examples considered. 
Our ER results are presented in table~\ref{results_ERG}. 
Moreover, we find that the form of the B models is constrained by the mobilities of the topological quasi-particles in the original models. 
For instance, the presence of a planon in the original model causes a stack of 2D toric codes to appear amongst the B models.  
Such constraints on the B models motivate us to put forward several conjectures concerning the structure of 3D topological stabilizer models that have implications for their classification.

The paper is laid out as follows: in Sec.~\ref{ER}, we introduce ER for gapped quantum phases of matter, discuss conventional and bifurcating ER fixed points, and define bifurcated equivalence and quotient superselection sectors. In Sec.~\ref{poly_frame}, we review Haah's polynomial framework. 
In Sec.~\ref{examples}, we explicitly identify bifurcating behavior in the ERG flows of a large range of models including 17 cubic codes~\cite{haah2011local} and all first-order fractal spin liquids~\cite{yoshida2013exotic}. 
In Sec.~\ref{classes_TO_ER}, we discuss possible ERG flows for the distinct classes of topological order~\cite{Dua_Classification_2019}.
In Sec.~\ref{sec:quotient_sectors}, we study quotient superselection sectors in several examples. 
In the appendix, we provide numerical results for the number of encoded qubits for all models discussed, complimented by derivations of analytical expressions for a select few. We also present the ER for the X-cube model explicitly. The explicit ER process for all other models as listed is shown in the {\small {MATHEMATICA}} file SMERG.nb provided as supplementary material. 

\section{Entanglement renormalization and phase equivalence}
In this section we introduce the notions of quantum phase of matter and entanglement renormalization. We discuss how the definition of phase equivalence is informed by ERG fixed points. We then introduce the more general notion of bifurcated equivalence based upon bifurcating ERG fixed points. We also discuss how the notion of superselection sector is generalized to quotient superselection sector for bifurcating ERG fixed point models.

\subsection{Gapped quantum phases of matter}

Throughout the paper we consider lattice Hamiltonians in 3D with short-range interactions. We only consider translation-invariant Hamiltonians defined on cubic lattices. 

A model is in a zero temperature gapped quantum phase if, in the energy spectrum, there is a finite gap between a nearly degenerate ground state subspace and the first excited state. 
Technically the energy gap must be uniformly lower bounded by a positive constant for a sequence of increasing system sizes approaching the thermodynamic limit~\cite{qimqm}. 
The energy splitting for the nearly degenerate ground state subspace must also vanish super-polynomially in the thermodynamic limit. 
The thermodynamic limit is approached as the ratio of system size $L$ to the lattice constant, or the short-distance cutoff length scale, $a$ goes to infinity $L/a\rightarrow \infty$. 
Two Hamiltonians are in the same phase if they are connected by a path of uniformly gapped Hamiltonians, possibly after adding in additional spins governed by the trivial paramagnetic Hamiltonian whose ground state is a product state. 
The addition of trivial degrees of freedom serves to stabilize the equivalence relation and allows for the comparison of models with a different number of spins per unit cell. In particular, models with the same $L$ but different $a$ can be meaningfully compared. 
Conversely, a quantum phase transition between gapped phases necessarily involves the gap closing. 

The physical characteristics of zero temperature gapped phases can be studied via their ground states. 
Such ground states are in the same phase if and only if they are related by a quasi-adiabatic evolution~\cite{hastings2005quasiadiabatic}, possibly after stabilization by adding spins in a trivial product state. 
Constant-depth local unitary circuits are often used as a proxy for phase equivalence~\cite{chen2010local}, although strictly speaking they provide only a sufficient condition~\cite{Haah2018}. 
To compare bulk phase equivalence more generally one should consider stabilized (approximate) locality-preserving unitary maps, or quantum cellular automata. 
For dispersionless commuting projector Hamiltonians it appears sufficient to consider stabilized exact locality-preserving unitary maps, up to a change in the choice of local Hamiltonian terms that preserves the ground space and gap but may shift higher energy levels. 

Throughout this work, topological orders are stable gapped quantum phases defined by a topological degeneracy on torus and characterized by the existence of nontrivial quasiparticles. This includes: 
\begin{itemize}
    \item Topological quantum liquid phases, whose ground state degeneracies on the torus have a constant upper bound. It is widely believed that these phases can be described by topological quantum field theories (TQFT) at low energy. We will thus refer to topological orders belonging to class as TQFT phases.
    \item Fracton phases whose ground state degeneracies do not have a constant upper bound. The low-energy behaviors of these phases are not described by any conventional TQFTs. 
\end{itemize}

\subsection{Entanglement renormalization transformations}
\label{ER}
We now describe ER transformations following the pioneering work~\cite{haah2014bifurcation}. Given a gapped Hamiltonian, an ER transformation consists of the following steps:
\begin{enumerate}
    \item Coarse-grain the system by enlarging the unit cell by a factor $c>1$.
    \item Apply local unitary transformations to the Hamiltonian to remove short-range entanglement.
    \item Project out local degrees of freedom that are completely disentangled into a trivial state. 
\end{enumerate}
The result is a new Hamiltonian in the same phase of matter, defined on the coarse-grained lattice.

In order to maintain a well-defined notion of phase throughout a renormalization procedure, we fix the ratio $L/a$ to be infinite by taking an infinite system size $L$ while successively coarse-graining a finite lattice constant $a$ by a factor $c>1$. 
For a system of finite size, coarse-graining amounts to reducing the number of unit cells in the system. 
In the thermodynamic limit this remains true, as the ratio $\frac{L_1/a_1}{L_2/a_2}$ of the number of unit cells before and after coarse graining is equal to $c$ even though both numerator and denominator diverge. 


\begin{figure}[t]
\centering
\includegraphics[scale=0.8]{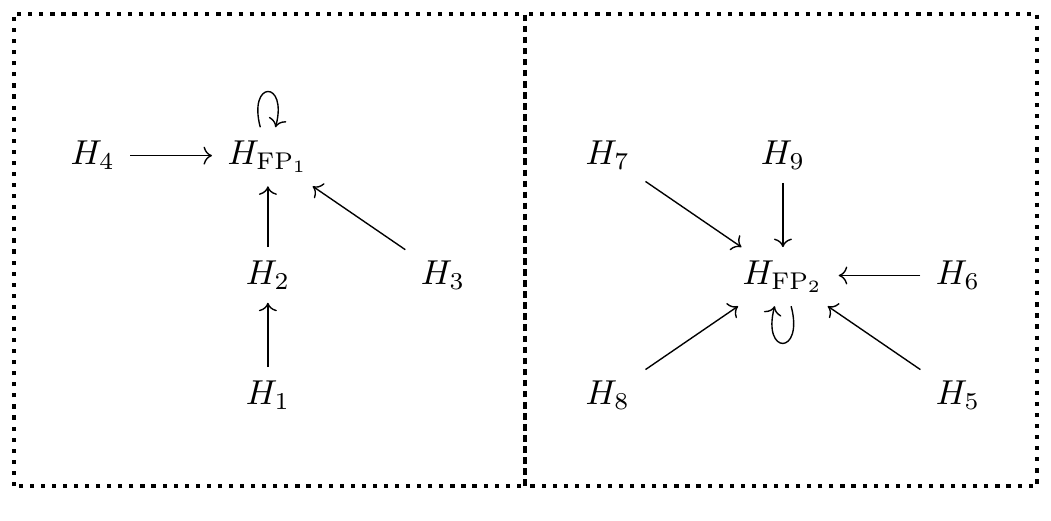}
\caption{Conventional RG flows within gapped phases. $H_i$ indicate different models that flow towards fixed points models $H_{FP_i}$. Here, $H_{1\text{--}4}$ are in one phase while $H_{5\text{--}9}$ are in another. These phases are represented by regions in parameter space, separated by a dotted phase transition line.} 
\label{Conventional RG}
\end{figure}

Conventionally it is expected that ERG flows within a gapped phase carry models towards a representative fixed point model that is invariant under the ERG flow. We depict such a situation schematically in Fig.~\ref{Conventional RG}, which shows Hamiltonians flowing towards RG fixed points in two gapped phases. 
The set of stable ERG fixed points then suffice to classify the gapped phases.

We remark that under ER copies of the trivial Hamiltonian are produced at an exponential rate, as the trivial Hamiltonian  in $D$ spatial dimensions splits into $c^D$ copies of itself under coarse-graining by a factor $c$. 
Hence to find any fixed points we must mod out by this self-replicating trivial model. 
Furthermore, for an ERG fixed point model to lie in the same phase after a change of scale, the trivial Hamiltonian must also be modded out in the definition of phase. This is clearly a necessary condition for a definition of phase that does not rely on a choice of lattice scale, as is commonly desired.

Fixed point Hamiltonians under the form of ER we consider here must satisfy 
\begin{eqnarray}
     UH(a)U^\dagger\equiv H(ca)  \, ,
    \label{convRG}
\end{eqnarray} 
for some finite-depth quantum circuit $U$, where $c$ is the coarse-graining factor.   
The Hamiltonian $H(ca)$ has the same local terms as the original Hamiltonian $H(a)$, but the degrees of freedom sit on the sites of a coarse-grained lattice with spacing $ca$. 
The equivalence $\equiv$ denotes equality up to the addition of disentangled spins governed by the trivial Hamiltonian to either side, and changing the choice of local Hamiltonian terms in a way that preserves the ground space. In particular, this implies that $H(a)$ and $H(ca)$ are in the same phase. 
This relation holds for commuting projector Hamiltonians that describe TQFT phases~\cite{qdouble,Levin2005,koenig2010quantum,walker2012,williamson2016hamiltonian}, and hence they are fixed points under ER~\cite{Konig2009,Aguado2008}. This allows the lattice scale to be ignored in the definition of TQFT phases. 

\subsection{Bifurcating entanglement renormalization and bifurcated equivalence}
Models that are governed by conventional ERG fixed-point Hamiltonians with topological order fall into the category of TQFT phases. All known two-dimensional fixed-point models are of this type. 
However, in three dimensions, fracton models with topological order have been discovered that bifurcate under ER. 
This means that performing one step of ER on a model produces multiple nontrivial decoupled models. That is, for some finite-depth quantum circuit $U$ we have
\begin{eqnarray}
    UH(a)U^\dagger\equiv H_1(c a)+ H_2(c a)+...+H_b(c a) \, ,
\label{bif}
\end{eqnarray}
where $c$ is again the coarse-graining factor and $b$ is the number of nontrivial decoupled models, or branches. We remark that this decomposition into decoupled models may not be unique. 
An illustration of a bifurcating ER transformation is shown in Fig.~\ref{bifRGillus}.

\begin{figure}[t]
\centering
\includegraphics[scale=1]{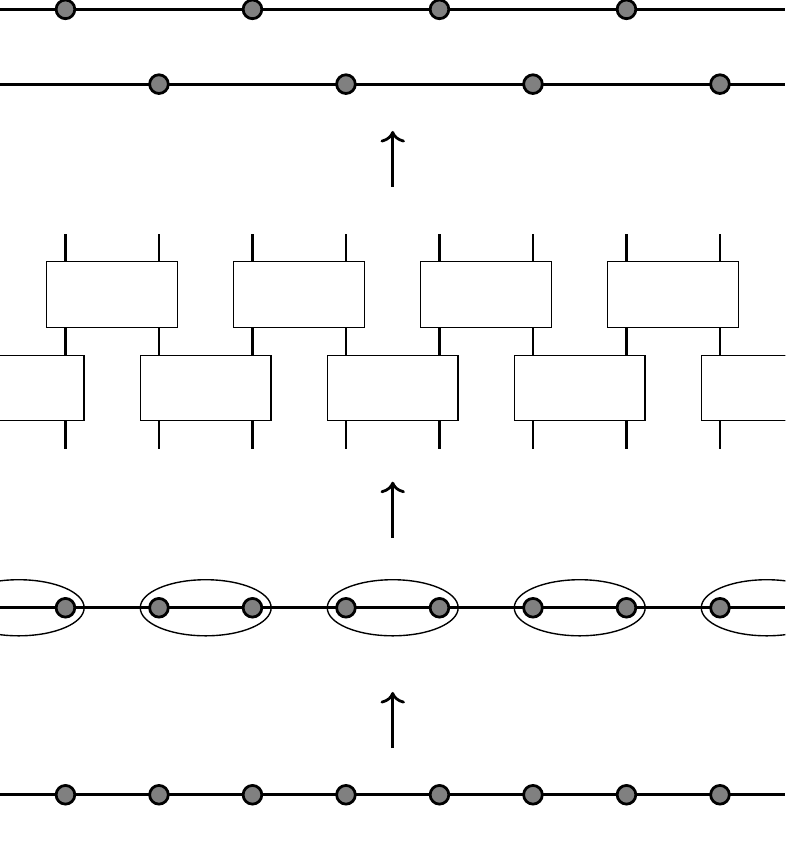}
\caption{An illustration of bifurcating ER for a 1D model. We start with a translation-invariant spin chain, perform coarse-graining by grouping pairs of spins into new sites and then apply a local unitary consisting of two-spin gates that is translation-invariant on the coarse-grained lattice. 
This disentangles two decoupled copies of the original spin chain on the coarse-grained lattice.} 
\label{bifRGillus}
\end{figure}










A model $H(a)$ is a \textit{bifurcating} fixed point if any of the resulting models on the right hand side of Eq.~\eqref{bif} are equivalent to it, i.e. $H_1(a)=H(a)$ without loss of generality. In particular, all of the examples that appear in this work are bifurcating fixed points. Bifurcating fixed points can be either \textit{self-bifurcating} fixed points or \textit{quotient} fixed points which are defined as follows:

\begin{itemize}
    \item A model $H(a)$ is a \textit{self-bifurcating} fixed point with \textit{branching number} $b$ if all of the resulting models  on the right hand side of Eq.~\eqref{bif} are equivalent to it, i.e. $H_i(a)=H(a)$ for $i=1,\dots,b$. For example, a self-bifurcating fixed point with branching number $b=2$ satisfies
    \begin{eqnarray}
        UH_{SB}(a)U^\dagger\equiv H_{SB}(c a)+ H_{SB}(c a) \, .
        \label{eq:sbex}
    \end{eqnarray}
    \item A bifurcating fixed point model $H(a)$ is a \textit{quotient} fixed point if the models $H_i(a)$ for $i\geq 2$ are self-bifurcating fixed points that are not equivalent to $H_1(a)$. More specifically we may refer to such an $H(a)$ as a quotient fixed point with respect to the self-bifurcating Hamiltonian $\sum_{i\geq 2}H_i(a)$. An example of a quotient fixed point model, with respect to two decoupled $b=2$ self-bifurcating fixed points, is given by 
    \begin{eqnarray}
        UH_{B}(a)U^\dagger\equiv H_{B}(c a)+ H_{SB1}(c a)+ H_{SB2}(c a) \, ,
    \label{eq:qfpex}
    \end{eqnarray}
where $H_{SB1}$ and $H_{SB2}$ both satisfy Eq.\eqref{eq:sbex}.
\end{itemize}

\begin{figure}[t]
\centering
\sidesubfloat[]{\includegraphics[scale = 1.12]{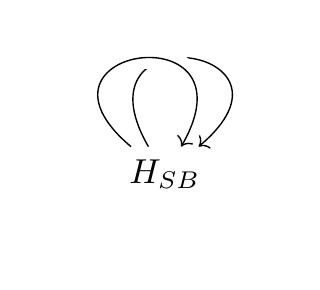}}\\
\sidesubfloat[]{\includegraphics[scale = 1.12]{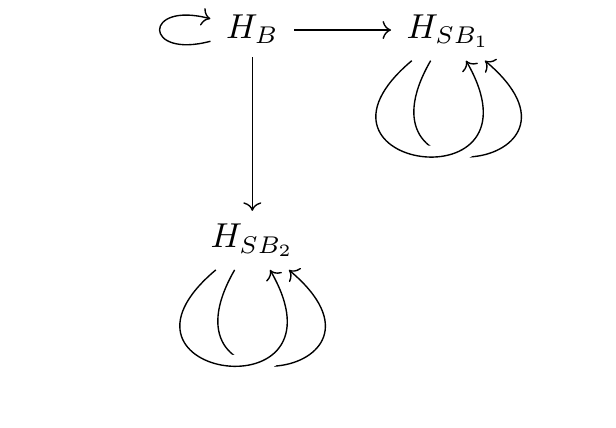}}\\
\caption{Bifurcating RG flow. (a) $H_{SB}$ denotes self-bifurcating fixed point models. (b) $H_{B}$ denotes bifurcating fixed point model while $H_{SB1}$ and $H_{SB2}$ denote self-bifurcating fixed point models.} 
\label{Bifurcating RG}
\end{figure}


In Fig.~\ref{Bifurcating RG}, we represent the above self-bifurcating and quotient ERG fixed point examples using a diagrammatic notation. 
In our bifurcating ERG fixed point diagrams the models resulting from one step of ER, corresponding to the right hand side of Eq.~\eqref{bif}, are found by following all arrows leaving a model, corresponding to the left hand side of Eq.~\eqref{bif}. 
Such a diagram represents a generalized fixed point when all arrows leaving models in the diagram return to models within the diagram. This captures conventional fixed points, bifurcating fixed points and limit cycles, which can be removed by increasing the amount of coarse-graining performed during one step of ER. 
Further examples of bifurcating ERG fixed point models are presented in section~\ref{examples}, with the results summarized in table~\ref{results_ERG}. We remark that since Eq.~\eqref{bif} is not unique for a given Hamiltonian $H(a)$ it is possible for a model to be a self-bifurcating fixed point under one ERG flow, and a quotient fixed point under another, see section~\ref{ex:cc14} for such an example. 

In contrast to conventional ER fixed points, bifurcating ER fixed point Hamiltonians on different lattices are not in the same phase, i.e. $H(a)$ is not phase equivalent to $H(c a)$. This is evident from the presence of nontrivial models $H_{i}$ with ${i\geq 2}$ on the right hand side of Eq.~\eqref{bif}. 
For a self-bifurcating model, $H_{SB}(a)$, that satisfies Eq.\eqref{eq:sbex} with $c=2$ the reason for this inequivalence is especially clear: two copies of $H_{SB}(2a)$ cannot be equivalent to a single copy unless $H_{SB}$ is in the trivial phase. This is depicted in Fig.~\ref{phase_equiv_SB}. 


\begin{figure}[t]
\centering
\includegraphics[scale=1.4]{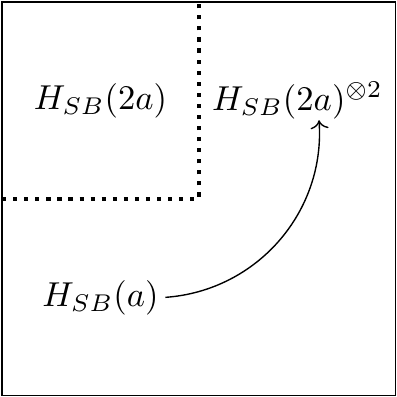}
\caption{Under one step of ER indicated by an arrow, a nontrivial self-bifurcating model denoted $H_{SB}$ with a lattice constant $a$ becomes two copies of the model on the coarse-grained lattice $H_{SB}(2a)$. Since $H_{SB}(2a)^{\otimes 2}$ is not in the same phase as $H_{SB}(2a)$, $H_{SB}(a)$ is not in the same phase as $H_{SB}(2a)$. A conventional phase boundary is indicated by a dotted line.} 
\label{phase_equiv_SB}
\end{figure}
Self-bifurcating fixed point models produce copies of themselves at an exponential rate under ERG flow. This rate is given by the branching number, which obviously satisfies $b \leq c^D$, but can be further shown to satisfy $b < c^{D-1}$ for states that satisfy an area law~\cite{haah2014bifurcation}, which are the relevant ones in the study of gapped phases. 
This is analogous to the exponential splitting of trivial models in the ERG flow of a conventional fixed point model. 
Inspired by the modding out of trivial models in the definition of gapped phase equivalence, here we introduce the notion of \textit{bifurcated} equivalence where all self-bifurcating models are modded out. 
More specifically, we write $H_1(a_1) \sim_{H_B} H_2(a_2)$ if $H_1(a_1)$ stacked with some number of copies of $H_B$ lies in the same conventional phase as $H_2(a_2)$ stacked with some number of copies of $H_B$. This is bifurcated equivalence with respect to a self-bifurcating fixed point model $H_B$. When $H_B$ is a trivial Hamiltonian we recover the conventional phase equivalence, when $H_B$ is a stack of 2D topological orders we recover foliated fracton equivalence. 
Similarly we write $H_1(a_1) \prescript{}{H_{B_1}}{\sim}_{H_{B_2}} H_2(a_2)$ if $H_1(a_1)$ stacked with copies of $H_{B_1}$ is equivalent to $H_2(a_2)$ stacked with copies of $H_{B_2}$. 
More generally we denote bifurcated equivalence by ${H_1(a_1) \sim H_2(a_2)}$ whenever there exist self-bifurcating models $H_{B_1},H_{B_2}$ such that ${H_1(a_1) \prescript{}{H_{B_1}}\sim_{H_{B_2}} H_2(a_2)}$. 
See  Fig.~\ref{CC_13_14_17}~(b) for a nontrivial example involving both foliated fracton equivalence and the more general bifurcated equivalence. 

The bifurcated equivalence relation serves to essentially remove dependence on the lattice scale from the equivalence class of a quotient fixed point Hamiltonian since $H(a) \sim H(ca)$ in that case.  
We remark that models may be bifurcated equivalent, even when they are not in the same conventional gapped phase. In particular any self-bifurcating model is in the trivial bifurcated equivalence class, even though the model may be in a nontrivial conventional phase. In Fig.~\ref{phase_equiv_SB} the dotted line denotes a conventional phase boundary, whereas the whole diagram lies in the same trivial bifurcated equivalence class. It is also useful to generalize the equivalence relation $\equiv$ accordingly by allowing for stacking with arbitrary self-bifurcating models, which we denote $\cong$. 
With this definition in hand the condition for a quotient fixed point resembles the conventional fixed point condition 
\begin{eqnarray}
     UH(a)U^\dagger \cong H(ca) \, .
\end{eqnarray}

\subsection{(Quotient) superselection sectors}

A nontrivial superselection sector on some region $\mathcal{R}$ is an excitation that can be supported on $\mathcal{R}$ but not created by any operator within a neighborhood of $\mathcal{R}$. Superselection sectors are equivalent if they are related by the application of an operator within a neighborhood of $\mathcal{R}$, or equivalently fusion with a local excitation in $\mathcal{R}$. 
Under ER excitations may split due to a change in the choice of local Hamiltonian terms allowed in the $\equiv$ relation. This may create local excitations in the trivial Hamiltonians that are modded out by the $\equiv$ relation. Hence for superselection sectors to be invariant under ER, local excitations must be modded out in their definition.  

For the same reason, to define \textit{quotient superselection sectors} (QSS) that are invariant under quotient ER we must mod out any excitations that can flow into a self-bifurcating fixed point model, as these models are modded out by the $\cong$  relation. 
Representatives of potential QSS are then given by the fixed point excitations under quotient ER.  
This captures the notion of QSS for foliated fracton models as a special case when the self-bifurcating model is taken to be a stack of 2D topological orders~\cite{shirley2018Fractional}. 

\section{Entanglement renormalization in the polynomial framework}
\label{poly_frame}

In this section we introduce translation-invariant stabilizer models,  Clifford ER transformations and phase equivalence relations, along with their descriptions in the language of polynomial rings from commutative algebra.

\subsection{Translation-invariant stabilizer models}

Translation-invariant stabilizer Hamiltonians are specified by a choice of mutually commuting local Pauli stabilizer generators $h^{(i)}$. The generators become the interaction terms in a Hamiltonian, 
\begin{align}
    H = \sum_{\vec{v}} (\openone -  h^{(i)}_{\vec{v}})  \, ,
\end{align}
where $\vec{v}$ are lattice vectors. In the above equation, $h^{(i)}_{\vec{v}}$ indicates a local generator $h^{(i)}$ after translation by a lattice vector $\vec{v}$. The local generator $h^{(i)}$ is a tensor product of local Pauli operators acting on a set of qubits or qudits. Without loss of generality, we consider stabilizer models on a cubic lattice.

\subsubsection{Clifford phase equivalence and entanglement renormalization}
Maps between translation-invariant stabilizer Hamiltonians are given by locality-preserving Clifford operations, which map local Pauli operators to local Pauli operators. These Clifford operations include local Clifford circuits which are generated by CNOT, Phase and Hadamard gates, and nontrivial Clifford cellular automata, which are required to disentangle certain invertible phases~\cite{Haah2018}.  In addition, locality-preserving automorphisms of the lattice, such as the redefinition of coordinates via modular transformations and coarse graining, are also included.  

Phase equivalences of translation-invariant stabilizer Hamiltonians are given by locality-preserving Clifford operations up to stacking with trivial models. 

For our Clifford ER transformations, we restrict to local Clifford circuits, coarse-graining, and discarding trivial models, as the modular transformations and other nontrivial locality-preserving operations can be moved to a single final step when comparing models. A change in the choice of local stabilizer generators that preserves the stabilizer group is also allowed when comparing two models, as in the $\equiv$ relation above. As CNOT, Phase and Hadamard gates generate the Clifford group, they are sufficient to implement ER of Pauli stabilizer codes. 

In 2D all translation-invariant topological stabilizer models were classified and shown to be equivalent, under locality-preserving Clifford operations, to copies of the 2D toric code. This implies that all translation-invariant topological stabilizer models in 2D flow to ER fixed points. Conversely in 3D examples of translation-invariant topological stabilizer models that bifurcate under ER are known, and the classification problem remains completely open, due to the existence of fracton models. Our goal is to study the bifurcating ERG flows of known fracton stabilizer models to gain clues about the 3D classification problem.

The examples considered in this work are all in CSS form~\cite{PhysRevA.54.1098,Steane2551}, which should be preserved under ER, hence we have found it sufficient to consider Clifford circuits that consist of CNOT gates alone. 
In particular, the unitaries used in the ER of our examples are given by Clifford circuits ${U=U_1 U_2...U_N}$ where $N$ is finite and each layer of gates $U_i$ is a translation-invariant tensor product of CNOT gates.
 
\subsection{The polynomial framework}
 
Translation-invariant stabilizer Hamiltonians can be conveniently expressed in terms of polynomials. The use of a polynomial description in a similar context dates back to work on classical cyclic codes~\cite{Imai1977TDC,cecc,multivariable_cecc}. 
The polynomial approach for quantum codes on a lattice was primarily developed by Haah. Interested readers are directed to Ref.~\onlinecite{haah2013commuting} for further details. We proceed by introducing several definitions from the polynomial language that are necessary and sufficient to describe ER. These definitions are demonstrated via examples.

\subsubsection{The stabilizer map}

For a stabilizer model on a cubic lattice, the stabilizer generators supported on a cubic unit cell can be expressed in terms of the position labels of the vertices on the cube as shown below
\begin{eqnarray}
\begin{array}{c}
\drawgenerator{xz}{xyz}{x}{xy}{y}{1}{yz}{z}
\end{array}
\, .
\label{pos_unit_cella}
\end{eqnarray}

The canonical example of a type-II model is Haah's code, or cubic code 1, which has the following stabilizer generators 
\begin{align}
\begin{array}{c}
\drawgenerator{XI}{II}{IX}{XI}{IX}{XX}{XI}{IX}
\quad
\drawgenerator{ZI}{ZZ}{IZ}{ZI}{IZ}{II}{ZI}{IZ}
\end{array}
\, .
\end{align}
In the $X$-stabilizer generator, the sites on which the first qubit is acted upon by the Pauli $X$ operator are at positions $1$, $xy$, $xz$ and $yz$ of the unit cell. 
We take this set of positions $(1,xy,xz,yz)$ and write a polynomial corresponding to this set $1+xy+xz+yz$. Similarly, the polynomial corresponding to the action of the Pauli $X$ operator on the second qubit on vertices in the unit cell is $1+x+y+z$. 
The polynomials corresponding to the action of the $Z$ operator on the first and second qubits on each site involved in the $Z$-stabilizer generator are given by $xy+xz+yz+xyz$ and $x+y+z+xyz$.  These polynomials refer to exponents of Pauli operators, and hence the addition of polynomials corresponds to the multiplication of operators. 
One can consider this polynomial representation to be a map from the position labels to the set of Pauli operators. This map is called the stabilizer map which can be written as a $2q\times t$ matrix where $q$ is the number of qubits on each vertex in the unit cell and $t$ is the number of stabilizer generators per unit cell in the translation-invariant Hamiltonian. In such a matrix, each column represents a term in the Hamiltonian and all translates of these terms can be generated by acting on the column by multiplication with monomials of translation variables. 
We illustrate the polynomial representation for a non-CSS model using the example of Wen's plaquette model in Fig.~\ref{poly_non_CSS}.

In this paper, we focus on CSS models for which the stabilizer map takes the form  
\begin{align}
    \sigma = \left(\begin{array}{cc}
         \sigma_X & 0  \\
          0 & \sigma_Z 
    \end{array}\right)
    \, ,
\end{align}
where $\sigma_{X\,(Z)}$ is the map for the $X\,(Z)$-sector. Each column, labeled $C_i$, specifies a stabilizer generator. 
The rows of the $\sigma_{X\,(Z)}$ block are labeled $R^X_i$ ($R^Z_i$), where the row index $i$ is the index of the qubit in the unit cell. For example, the stabilizer map for cubic code 1 is 
\begin{eqnarray}
    \sigma = \left(\begin{array}{cc}
         1+xy+xz+yz & 0   \\
         1+x+y+z & 0 \\
          0 & xy+xz+yz+xyz \\
          0 & x+y+z+xyz 
    \end{array}\right)
     .
    \label{CC1rep1}
\end{eqnarray}
We remark that all stabilizers of the model can be generated by the action of the stabilizer map on a column of translation variables. 
The translation action can be expressed in terms of the position variables $x,y,z$. For example, the $X$-stabilizer on the unit cell at position $x$ relative to the origin is denoted 
\begin{align}
\left(\begin{array}{cc}
         (1+xy+xz+yz)x \\
         (1+x+y+z)x\\
         0\\
         0
    \end{array}\right)
    \, .
\end{align}
Using this, or any other translation of the $X$-stabilizer generator in Eq~\eqref{CC1rep1} lead to the same Hamiltonian. 
Hence, multiplying columns by monomials results in an equivalent stabilizer map. 
For example, we could divide the second column of the cubic code stabilizer map by $xyz$, corresponding to a unit translation in the negative direction along each axis. The resulting equivalent stabilizer map can be written as
\begin{eqnarray}
        \sigma = \left(\begin{array}{cc}
         1+xy+xz+yz & 0   \\
         1+x+y+z & 0 \\
          0 & 1+\overline{x}+\overline{y}+\overline{z} \\
          0 & 1+\overline{x}\overline{y}+\overline{x}\overline{z}+\overline{y}\overline{z}
    \end{array}\right)
    \label{CC1rep2}
\end{eqnarray}
where we have introduced the inverse variables $\overline{x}, \overline{y}$ and $\overline{z}$ which satisfy $x\overline{x}=1$, $y\overline{y}=1$ and $z\overline{z}=1$. In the terminology of commutative algebra, introducing negative powers for translation variables involves going from a polynomial ring to a Laurent polynomial ring.

\begin{figure}
    \centering
    \includegraphics[scale=0.94]{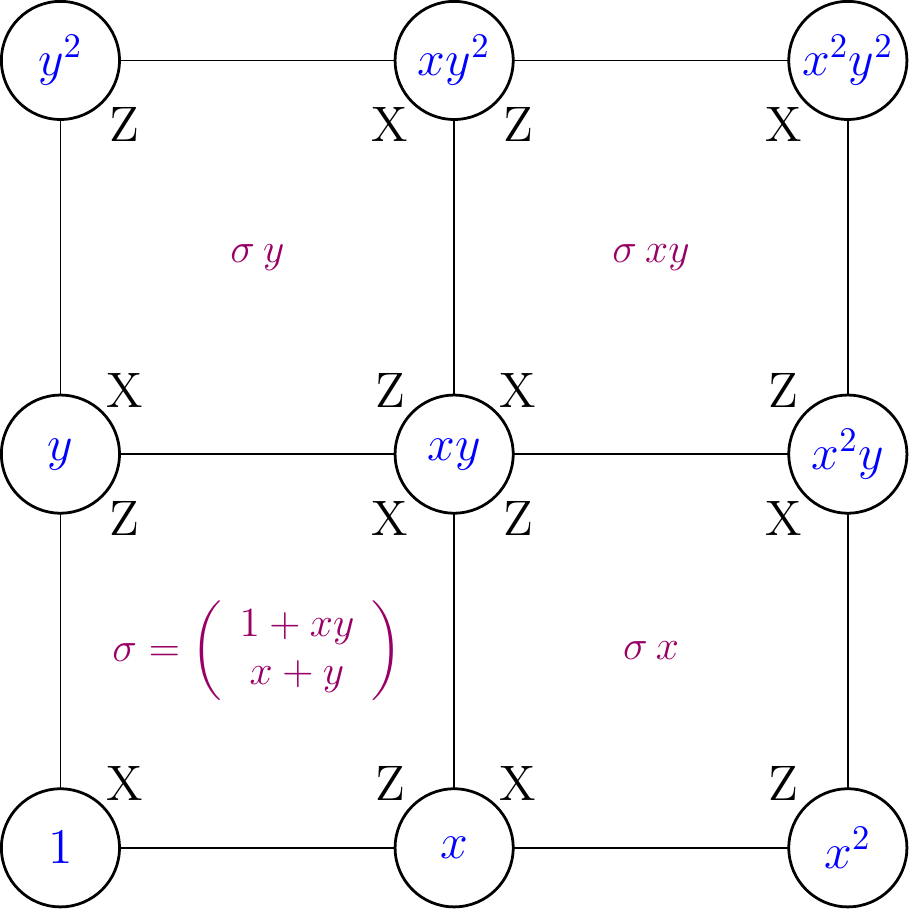}
    \caption{The stabilizer map of Wen's plaquette model. The qubit positions are expressed in terms of the translation variables $x,y,z$ in blue. The polynomial representation for each plaquette stabilizer generator is written in purple, where the first (second) row contains the positions acted upon by the $X$ ($Z$) operator. These representations can be obtained by applying the stabilizer map $\sigma$ to monomials that specify the positions of the respective stabilizer generators.}
    \label{poly_non_CSS}
\end{figure}

\subsubsection{Entanglement renormalization in the polynomial language}
As discussed above, in the Clifford ER procedure after coarse-graining certain operations are allowed. These operations consist of acting on the stabilizer generators with Clifford gates, changing the choice of generators for the same stabilizer group and shifting the lattice sites. In the polynomial language, these operations are represented by matrices acting on the stabilizer map, from the left for Clifford gates and qubit shifts, and from the right for the shifting and redefinition of stabilizer generators. These operations were described in Ref.~\onlinecite{haah2014bifurcation}. 
As our focus is on CSS models, we restrict our discussion to Clifford circuits made up of CNOT gates. 
In which case the action of ER operations on the stabilizer map are as follows:
\begin{itemize}[leftmargin=*]
    \item Row operations 
    \begin{itemize}
        \item Elementary row operations: \\
        An elementary row operation on a stabilizer map with rows $R^{X(Z)}_i$ is specified by two row indices $a\neq b$ and a monomial $f$ and acts as  follows in the $X$-sector, 
        \begin{align}
        \phantom{---}\text{CNOT}(a,b,f):{R^X_a\mapsto R^X_a +f(x,y,z) R^X_b} \, .
        \end{align}
        This operation corresponds to a translation-invariant implementation of CNOT gates between the target qubits specified by $a$ and $f$ with the control qubits specified by $b$. 
        The corresponding action in the $Z$-sector is given by 
        \begin{align}
        \phantom{---}\text{CNOT}(a,b,f): R^Z_b \mapsto R^Z_b+f(\overline{x},\overline{y},\overline{z}) R^Z_a \, .
        \end{align} 
       
    \item Row multiplication by a monomial: \\ 
    Multiplying any of the rows in the stabilizer map by a monomial corresponds to shifting those qubits in some direction. 
    For the polynomial entries $\alpha_{ab}$, in a row specified by constant $a$,  the transformation ${\alpha_{ab}\rightarrow x^i y^j z^k \alpha_{ab}}$ is allowed, for any finite integers $i,j,k$. 
    \end{itemize}
    \item{Column operations}
    \begin{itemize}
        \item Elementary column operations: \\
        An elementary column operation on a stabilizer map with columns $C_i$ is specified by two column indices $a\neq b$ and a monomial $f$ and acts as  follows: 
        \begin{align}
        \text{Col}(a,b,f): {C_a\mapsto C_a +f(x,y,z) C_b} \, .
        \end{align} 
        This changes the choice of stabilizer generators by replacing those corresponding to $C_a$ by products of themselves with translates of the generator corresponding to $C_b$. 
        Such a change of choice of generators results in a phase equivalent Hamiltonian with the same ground space and shifted excitation energy levels.
        
        \item  Column multiplication by a monomial: \\
        This corresponds to changing the choice of a stabilizer generator, translating it by a monomial. This has no effect on the Hamiltonian described by the stabilizer map, as it already includes all translations of the generators. 
        For example, we used such a transformation above to modify the polynomial representation for the $Z$-stabilizer term of the cubic code to go from Eq.~\eqref{CC1rep1} to Eq.~\eqref{CC1rep2}. 
    \end{itemize}
\end{itemize}

\subsubsection{Modular transformations}

The Clifford ER process only involves equivalences between models generated by coarse-graining, local Clifford gates, shifting the lattice sites and changing the choice of column generators. 
More generally modular transformations, which are locality-preserving automorphisms of the cubic lattice including shear transformations, preserve the bulk properties of a model.  Hence, when deciding phase equivalence, modular transformations need to be taken into account. Formally, they correspond to the redefinition of translation variables $(x,y,z)$ to $(f_1(x,y,z),f_2(x,y,z),f_3(x,y,z))$, where $f_i(x,y,z)$ for $i=1,2,3$ are monomials in the translation variables that induce a bijection of the lattice sites. 

Modular transformations are not used during the ER process. However, they are used when checking the equivalence of models that result from ER. 
We have also used them to find equivalences between some of the cubic codes. For example, cubic code 5 is related to cubic code 9 (and cubic code 15 is related to cubic code 16) via a modular transformation and hence they are in the same phase~\cite{Dua_Classification_2019}.

\subsubsection{The excitation map}

\begin{figure}
    \centering
    \includegraphics[scale=0.94]{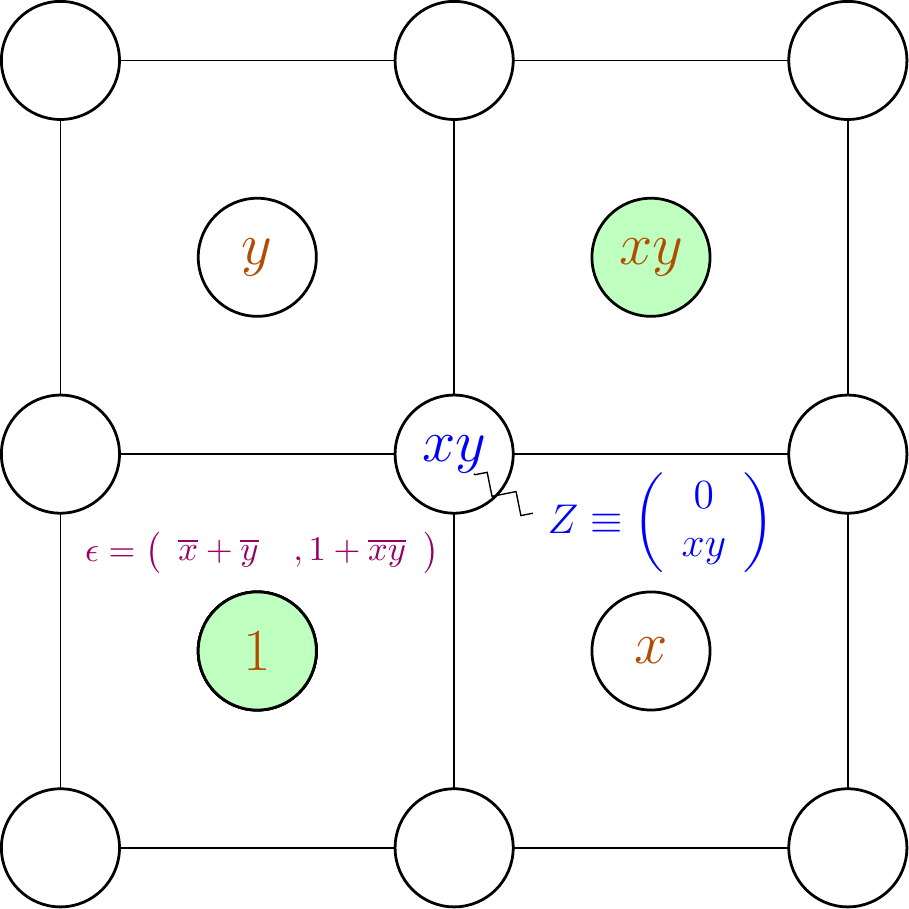}
    \caption{Excitation map for Wen's plaquette model. The positions of stabilizer generators are written on the dual lattice. All but one of the labels for qubits and Pauli operators have been omitted for simplicity. 
    A local operator $Z$ acting at position $xy$ and its polynomial column representation are depicted. The first (second) row of the column contains the position where $X$ ($Z$) acts. 
    Applying the excitation map $\epsilon$, also depicted, to the polynomial column returns the positions of excited stabilizers $1+xy$ (indicated by shaded circles).}
    \label{excmap_WP}
\end{figure}

The excitations created by the action of a Pauli operator can be found by applying the excitation map $\epsilon$ to the polynomial representation of that Pauli operator. 
The map $\epsilon$ is expressed in terms of the stabilizer map $\sigma$ via
\begin{align}
\epsilon:= \sigma^\dagger \lambda,
&& \text{where} &&
    \lambda=\left(
\begin{array}{cc}
     \bf{0} & \bf{1}  \\
     \bf{-1} & \bf{0} 
\end{array}\right)
\end{align}
is the symplectic matrix and $\bf{1}$ denotes the $q\times q$ identity matrix. The rows in the excitation map correspond to excitations of different stabilizer generators. The operators that do not excite translations of a particular generator correspond to columns that give 0 when acted upon by the corresponding row of $\epsilon$. 
For example, one can see from $\lambda$ that no $X$ stabilizer generators are excited by the action of a Pauli $X$ operator, as expected. 
More generally, the kernel of $\epsilon$ contains the operators that commute with all local Hamiltonian terms. In particular, the condition that the Hamiltonian terms themselves commute can be recast as $\epsilon \sigma = 0$, i.e. $\im\, \sigma \subseteq \ker\, \epsilon$. 
This containment is saturated, $\im\, \sigma = \ker\, \epsilon$,  for infinite boundary conditions if and only if the Hamiltonian described by $\sigma$ is topologically ordered~\cite{haah2013commuting,haah2013}. Since $\ker\, \epsilon$ consists of operators of finite extent in the polynomial formalism, $\im\, \sigma = \ker\, \epsilon$ implies that any operator that commutes with the Hamiltonian is in the span of the stabilizer group and so can be written as a sum of products of generators. This implies that any local operator must act as the identity on the degenerate ground space of the Hamiltonian with periodic boundary conditions, up to a proportionality constant that may be~0. 

In Fig.~\ref{excmap_WP}, we depict how the excitation map gives the positions of excited stabilizers due to the action of a local operator $Z$ on a qubit at position $xy$ in the Wen-plaquette model. Another phase equivalent example is the $\mathbb{Z}_2$ toric code whose stabilizer map is given by
\begin{eqnarray}
    \sigma=\left(\begin{array}{cc}
     1+y & 0  \\
     1+x & 0  \\
     0   & 1+\overline{x}\\
     0   & 1+\overline{y}
\end{array}\right) \, ,
\label{sigmatc}
\end{eqnarray}
and whose excitation map is given by 
\begin{align}
\epsilon= \left(\begin{array}{cccc}
     0 & 0 & 1+\overline{y} & 1+\overline{x}  \\
     1+x & 1+y & 0 & 0  \\
\end{array}\right) \, .
\end{align}
Considering the action of a local $Z$ operator on the first qubit at position $x$. The action of this local operator is represented by the following column
\begin{align}
\hat{o}=\left(\begin{array}{c}   0 \\
     0 \\
     x \\
     0
\end{array}\right) \, ,
\end{align} 
and the action of the excitation map on this operator gives
\begin{align}
    \epsilon \hat{o}= \left(\begin{array}{cc}
         0  \\
         x+x\overline{y} 
    \end{array} \right) \, .
\end{align}
This implies that the $Z$ stabilizers at positions $x$ and $x\overline{y}$ are excited due to the action of $\hat{o}$ on $\mathbb{Z}_2$ toric code. Similarly as for the polynomial description of Pauli operators, addition of excitation polynomials corresponds to fusion of excitations. 

The image of the excitation map $\text{im}(\epsilon)$ contains topologically trivial configurations of excitations. For cubic code stabilizer maps,  which take the following form  
\begin{align}
    \sigma=\left(\begin{array}{cc}
     f & 0  \\
     g & 0  \\
     0   & \overline{g}\\
     0   & \overline{f}
\end{array}\right).
\label{smap_fg}
\end{align}
The excitation map is given by 
\begin{align}
    \epsilon=  \left(\begin{array}{cccc}
     0 & 0 & \overline{f} & \overline{g}  \\
     g & f & 0 & 0  \\
\end{array}\right) \, 
\label{emap_fg}.
\end{align} 
For any CSS model, the $X$ and the $Z$ generator excitation sectors are decoupled, and in the case of the cubic codes they are related by a spatial inversion transformation. 
For cubic codes, the polynomials in the image of the excitation map for $Z$ generators belong to the ideal\footnote{An ideal $I$ of a polynomial ring $R$ contains elements $r_I$ such that $r_I r\in I$ for all $r\in R$ and all $r_I \in I$.} generated by $f(x,y,z)$ and $g(x,y,z)$ i.e. $p(x,y,z)f+q(x,y,z)g$ where $p(x,y,z)$ and $q(x,y,z)$ are arbitrary polynomials with $\mathbb{Z}_2$ coefficients. We refer to this ideal, $\av{f,g}$, as the stabilizer ideal~\cite{haah2013commuting}.

\subsubsection{Coarse-graining the stabilizer and excitation maps}
\label{cgmap}

Coarse-graining by a factor of 2 enlarges the unit cell by the same factor. 
Hence, after coarse-graining, the original translation variables $x^2$, $y^2$ and $z^2$ are transformed to $x'$, $y'$ and $z'$ on the new lattice. 
Suppose coarse-graining by a factor of 2 in the $x$-direction is performed, i.e. $x^2 \mapsto x'$. 
The coarse-grained unit cell has double the number of qubits and stabilizer generators. This coarse-graining transformation is implemented by the transformation of the original translation variables 
\begin{align}
x\mapsto \left(\begin{array}{cc}
     0 & x^\prime \\
     1 & 0
\end{array}\right) \, ,
&&
y\mapsto \left(
\begin{array}{cc}
     y & 0 \\
     0 & y
\end{array}\right) \, , 
&& z\mapsto \left(
\begin{array}{cc}
     z & 0 \\
     0 & z 
\end{array}\right) \, ,
\end{align}
where $x^\prime=x^2$ is a new translation variable. For example, this coarse-graining sends the stabilizer map of the 2D toric code from Eq.~\eqref{sigmatc} to 
\begin{eqnarray}
  \sigma^\prime=  \left(\begin{array}{cccc}
        1+y & 0 & & \\
         0 & 1+y & &\\
         1 & x^\prime  & &  \\
         1 & 1 & & \\
          & & 1 & 1\\
          & & \overline{x}^\prime & 1\\
          & & 1+\overline{y} & 0\\
          & & 0 & 1+\overline{y}
    \end{array}
    \right) \, .
    \label{CG_TC}
\end{eqnarray}
It is shown below that after the application of local CNOT gates, column operations and the removal of qubits in the trivial state, the original stabilizer map is recovered. 

The coarse-grained excitation map is defined in terms of the coarse-grained stabilizer map via $\epsilon^\prime={\sigma^\prime}^\dagger \lambda$.

\subsubsection{Coarse-graining factor and trivial charge configurations}

For the qubit-stabilizer models studied in this paper, we consider coarse-graining by factors of $2$, i.e. $c=2$. 
It was shown in Ref.~\onlinecite{haah2013,haah2014bifurcation} for cubic code 1 that the set of trivial charge configurations referred to as the annihilator of the charge module~\cite{haah2013}, denoted $\mathbf{A}$, shows self-reproducing behavior under coarse-graining by a factor of 2. 
The annihilator of the cubic code 1, $\mathbf{A}$, is given by the ideal $\av{1+x+y+z,1+xy+yz+xz}$. Here, for example, $1+x+y+z$ specifies a trivial charge configuration with the stabilizers excited at positions $1$, $x$, $y$ and $z$ under the action of a local operator. After coarse-graining, the annihilator $\mathbf{A}_c$ is given by $\av{1+x^\prime+y^\prime+z^\prime,1+x^\prime y^\prime+y^\prime z^\prime+x^\prime z^\prime}$ in terms of the coarse-grained variables $x^\prime=x^2, y^\prime=y^2, z^\prime = z^2$ and hence has the same form as $\mathbf{A}$. This suggests that cubic code 1 renormalizes into a model similar to itself. In fact, the annihilators of the charge modules for the two codes that are extracted after ER of cubic code 1 i.e. itself and cubic code 1B, are exactly the same. We find this self-reproducing behavior for all the cubic codes and the corresponding B models; the form of the annihilator after coarse-graining by a factor of 2, $\mathbf{A}_c$ retains the original form as $\mathbf{A}$ just like in the case of cubic code 1. Conversely, under coarse-graining by a factor of 3, the annihilator does not retain the original form. 
For any self-bifurcating fixed point model, it obviously follows that the models extracted after ER retain the same annihilator. 
For the bifurcating quotient fixed point models, which split into a copy of themselves and some B models under ER, having the same form of annihilator after coarse-graining implies that the annihilators of the B models contain the original annihilator $\mathbf{A}$.

\begin{table*}[t!]
\centering
\setlength{\tabcolsep}{8pt}
\renewcommand{\arraystretch}{1.45}
\begin{tabular}{c|ccccc}
Model & Particle mobilities & Type & ER & ER of the B models & QSS group
\tabularnewline
\hline 
3DTC & 3 & TQFT & 3DTC & & ${\mathbb{Z}_2}$ \tabularnewline
X-cube & 0,1,2 & foliated type-I & X-cube+$\text{STC}$ & $\text{STC}$+$\text{STC}$ & ${\mathbb{Z}}_2\oplus{\mathbb{Z}}_2^2$ \tabularnewline
CC$_1$& 0 & type-II & CC$_1$+$\text{CCB}_1$& $\text{CCB}_1$+$\text{CCB}_1$ & ${\mathbb{Z}}_2\oplus{\mathbb{Z}}_2$\tabularnewline
CC$_{2,3,4,7,8,10}$ & 0  & type-II & CC$_i$+CC$_i$ & & 0 \tabularnewline
CC$_{5,6,9}$ & 0,1 & fractal type-I & CC$_i$+CC$_i$ & & 0 \tabularnewline
CC$_{11\text{--}17}$\footnote{The ER results shown require an initial coarse-graining step for CC$_{13}$ and CC$_{17}$.}\textsuperscript{,\ref{xf}} 
& 0,1,2 & fractal type-I & CC$_i$+$\text{CCB}_i$+$\text{STC}$ & $\text{CCB}_i$+$\text{CCB}_i$ & ${\mathbb{Z}}_2\oplus {\mathbb{Z}}_2$\footnote{This QSS was calculated for $\text{CC}_{11}$ in Sec.~\ref{CC11QSS}. The QSS of $\text{CC}_{12\text{--}17}$ can be calculated similarly.} \tabularnewline
CC$_{14}$\footnote{CC$_{14}$ shows both self-bifurcating and bifurcating ERG behavior\label{xf}} & 0,1,2 & fractal type-I & CC$_{14}$+CC$_{14}$ & & 0 \tabularnewline
First-order FSL  & model dependent & model dependent & FSL+FSL & & 0
\end{tabular}    \caption{Entanglement renormalization group (ERG) flows and quotient superselection sectors (QSS) of 3D stabilizer models.}
\label{results_ERG}
\end{table*}

\section{Examples of bifurcating entanglement renormalization}
\label{examples}

In this section, we find bifurcating ERG flows for explicit examples of fracton models. 
Some of the fracton models treated are found to be self-bifurcating fixed points while others are quotient bifurcating fixed points. 
Moreover, we find that some models may admit several qualitatively different ERG flows. We give such an example that is either a self-bifurcating fixed point or a quotient bifurcating fixed point, depending on the ER transformation chosen. 

The models we consider include several examples with known ERG flows: the 3D toric code (3DTC), a stack of 2D toric codes along an axis $\hat{i}$ (STC$_{\hat{i}}$), the X-cube model (X-cube), and Haah's cubic codes 1 ($\text{CC}_1$) and 1B ($\text{CCB}_1$). Beyond these known examples we also find the ERG flows of the 16 other CSS cubic codes~\cite{haah2011local,haah2013} {2-17} ($\text{CC}_{2\text{--}17}$) and the B codes thus produced (CCB$_{11\text{--}17}$), as well as all first-order fractal spin liquids~\cite{yoshida2013exotic} (FSL), a simple example of which is the Sierpinski fractal spin liquid~\cite{Chamon_quantum_glassiness} (SFSL). In the next section we put the ERG flows found for these examples into context by organizing them according to the type of 3D topological order each model displays~\cite{Dua_Classification_2019}. 

We have followed a simple heuristic to find ER transformations for the example models listed in table~\ref{results_ERG}. We outline this process in appendix~\ref{XC_ER}, and apply it to the  X-cube model as a demonstrative example.

\subsection{Self-bifurcating fixed points}

\begin{figure}[t]
\centering
\sidesubfloat[]{\includegraphics[scale=1.12]{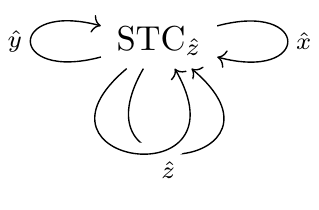}}\\
\sidesubfloat[]{\includegraphics[scale=1.12]{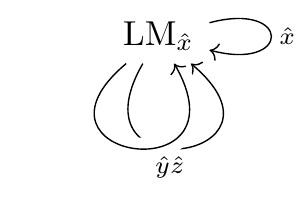}}\\
\sidesubfloat[]{\includegraphics[scale=1.12]{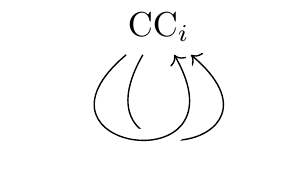}}\\
\sidesubfloat[]{\includegraphics[scale=1.12]{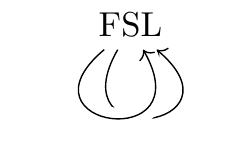}}\\
\caption{
	Self-bifurcating ERG flow diagrams. 
(a) ER of a stack of 2D toric codes along the $\hat{z}$ direction, parallel to the $xy$ plane, denoted by $\text{STC}_{\hat{z}}$. The arrows labeled $\hat{x}$ and $\hat{y}$ indicate directions in which the model is invariant under coarse-graining. Conversely, under coarse-graining along $\hat{z}$, $\text{STC}_{\hat{z}}$ self-bifurcates into two copies, which is indicated by a pair of arrows labeled $\hat{z}$. 
An arrow with no label indicates coarse-graining in all three lattice directions, in this case that produces the same result as coarse-graining in $\hat{z}$. 
(b) ER of a lineon model, denoted $\text{LM}_{\hat{x}}$, where the topological excitations of the model can all move along $\hat{x}$. Cubic codes 5, 6, 7 and 9 are examples of similar lineon models. 
(c) ER of self-bifurcating cubic codes CC$_{i}$ for $i$ in the range 2--10 or 14.
(d) ER of any first-order fractal spin liquid~\cite{yoshida2013exotic}.
} 
\label{SBERG}
\end{figure}

The coarse-graining step of the ER procedure can involve one, two or all three lattice directions. Self-bifurcating models can be divided into three categories according to the minimal number of directions one must coarse-grain for the model to bifurcate. 
Sorting the self-bifurcating models in this way is convenient when it comes to organizing them according to the type of topological order they exhibit, as discussed in next section. 

An important necessary condition for self-bifurcation with $c=b=2$ is that the number of encoded qubits under periodic boundary conditions doubles when the number of sites $L/a$ along each axis is doubled. This is because a self-bifurcating model with $c=b=2$ on a lattice with $2L/a$ sites along each axis splits into two copies of itself when coarse grained by a factor of 2, each of which lives on a decoupled lattice with $L/a$ sites. 

We discuss examples of stabilizer models that we have found to self-bifurcate under ER below. Examples of self-bifurcating models and their ERG flows are depicted in Fig.~\ref{SBERG} using our diagrammatic notation for bifurcating ERG fixed points.

\subsubsection{A stack of 2D toric codes}
The stabilizer map for a stack of 2D toric codes along the $\hat{z}$ direction, parallel to the $xy$ plane, appears identical to that in Eq.~\eqref{sigmatc}. 
After coarse-graining in the $\hat{x}$ direction the stabilizer map is identical to that in Eq.~\eqref{CG_TC}. 
Applying a row operation 
\begin{align*}
    \text{CNOT}(2,4,1+y)  = \left(\begin{array}{cccccccc}
        1 & 0 & 0 & 0 &  & & &\\
         0 & 1 & 0 & 1+y &  & & &\\
         0 & 0 & 1 & 0 &  & & & \\
          0 & 0 & 0 & 1 &  & & &\\
           & & & &  1 & 0 & 0 & 0\\
          & & & &  0 & 1 & 0 & 0\\
          & & & &  0 & 0 & 1 & 0\\
          & & & &  0 & 1+\bar{y} & 0 & 1
    \end{array}
    \right)
\end{align*}
to the coarse-grained stabilizer matrix from Eq.~\eqref{CG_TC} results in
\begin{align}
\text{CNOT}(2,4,1+y)
&\small{
    \left(\begin{array}{cccc}
        1+y & 0 & & \\
         0 & 1+y & &\\
         1 & x^\prime  & &  \\
         1 & 1 & & \\
          & & 1 & 1\\
          & & \bar{x}^\prime & 1\\
          & & 1+\bar{y} & 0\\
          & & 0 & 1+\bar{y}
    \end{array}
    \right)}\\
&=
    \left(\begin{array}{cccc}
        1+y & 0 & & \\
         1+y & 0 & &\\
         1 & x^\prime  & &  \\
         1 & 1 & & \\
          & & 1 & 1\\
          & & \bar{x}^\prime & 1\\
          & & 1+\bar{y} & 0\\
          & & \bar{x}^\prime(1+\bar{y}) & 0
    \end{array}
    \right)
    \, .
\end{align}
After performing additional row operations CNOT$(3,4,x')$, CNOT$(1,2,x')$, CNOT$(1,3,1+y)$ and column operations Col$(1,2,1)$ and Col$(4,3,1)$, the model becomes
\begin{align}
  \sigma^\prime=  \left(\begin{array}{cccc}
        0 & 0 & & \\
         1+y & 0 & &\\
         1+x^\prime & 0  & &  \\
         0 & 1 & & \\
          & & 1 & 0\\
          & & 0 & 1+\overline{x}^\prime\\
          & & 0 & 1+\overline{y}\\
          & & 0 & 0
    \end{array}
    \right)
\end{align}
which is nothing but a stack of 2D toric codes and two qubits per site in a product state. Hence, the stack of toric codes along the $\hat{z}$ axis is a fixed point under ER in $x$. 
The stack of 2D toric codes stabilizer map in Eq.~\eqref{sigmatc} has $x\leftrightarrow y$ symmetry up to relabeling of the qubits. Hence, it is also a fixed point under ER in $y$. 
Furthermore, for the stack of 2D toric codes along the $\hat{z}$ direction,  we can trivially coarse-grain the stabilizer map along $\hat{z}$ by taking 
\begin{align}
x\mapsto \left(
\begin{array}{cc}
     x & 0 \\
     0 & x
\end{array}\right) \, , && y\mapsto \left(
\begin{array}{cc}
     y & 0 \\
     0 & y
\end{array}\right) \, ,
\end{align}
as $z$ does not enter  the stabilizer map. 
This simply results in two decoupled copies of the stack of 2D toric codes Hamiltonian. 
Hence, a stack of 2D toric codes is self-bifurcating under ER. 
This provides an example of a model that self-bifurcates after ER in only one direction and is a fixed point under ER along either of the orthogonal directions.

\subsubsection{Yoshida's fractal spin liquids} 
We now move on to show that a far more interesting class of examples, the first-order fractal spin liquids of Yoshida~\cite{yoshida2013exotic}, are all self-bifurcating under ER. 
The general form of the stabilizer map for these models is 
\begin{equation}
    \begin{pmatrix}
    1+f(x) y & 0\\
    1+g(x) z & 0\\
    0 & 1+g(\overline{x})\overline{z}\\
    0 & 1+f(\overline{x})\overline{y}
    \end{pmatrix},
\end{equation}
where $f$ and $g$ are polynomials in the single translation variable $x$. Such a model is type-II if and only if $f$ and $g$ are not algebraically related~\cite{yoshida2013exotic}. 
This class of models also contains fractal type-I lineon models for ${f=1}$,  ${g\neq 0,1}$, stacks of 2D toric code for ${f=g=1}$, stacks of 2D fractal subsystem symmetry-protected models~\cite{devakul2018fractal,devakul2018universal,Stephen2018computationally,Devakul2018,Daniel2019} for $f=0$, ${g\neq 0,1}$, and decoupled 1D cluster states for $f=0$, $g=1$. We remark that $f$ and $g$ can be exchanged in the above models up to a redefinition of the lattice and an on-site qubit swap. 

Under coarse-graining along $\hat{y}$: 
\begin{align}
x\mapsto \left(
\begin{array}{cc}
     x & 0 \\
     0 & x
\end{array}\right) \, , && 
y\mapsto \left(\begin{array}{cc}
     0 & y^\prime \\
     1 & 0
\end{array}\right) 
&& z\mapsto \left(
\begin{array}{cc}
     z & 0 \\
     0 & z
\end{array}\right) \, ,
\end{align}
where $y^\prime=y^2$ is the translation variable on the coarse-grained lattice, the stabilizer map becomes
\begin{equation}
\begin{pmatrix}
    1 & fy^\prime &  & \\
    f & 1 &  & \\
    1+gz & 0 &  & \\
    0 & 1+gz &  & \\
     &  & 1+\overline{g}\, \overline{z} & 0\\
     &  & 0 & 1+\overline{g}\, \overline{z}\\
      &  & 1 & f\\
     &  &  \overline{f}\,\overline{y^\prime} & 1
    \end{pmatrix}
    \, .
\end{equation}
Applying row operations: CNOT$(2,1,f)$, CNOT$(3,1,1+gz)$, CNOT${(3,4,fy)}$ and column operations Col$(2,1,f(x)y)$, Col$(4,3,f(\overline{x}))$ leads to the stabilizer map
\begin{equation}
\begin{pmatrix}
    1 & 0 &  & \\
    0 & 1+f^2(x)y^\prime &  & \\
    0 & 0 &  & \\
    0 & 1+g(x)z &  & \\
     &  & 0 & 0\\
     &  & 0 & 1+g(\overline{x})\overline{z}\\
      &  & 1 & 0\\
     &  &  0 & 1+f^2(\overline{x})\overline{y^\prime}
    \end{pmatrix}.
\end{equation}
The first and third qubits are disentangled in the above stabilizer map and hence can be removed\footnote{Up until this step, the same ER process as shown works if ${1+g(x)z}$ is generalized to $1+g_1(x)z+g_2(x)z^2+\cdots$, along with a similar generalization for $1+g(\oline{x})\oline{z}$.}. 
The resulting stabilizer map is 
\begin{equation}
\label{eq:fsler1}
\begin{pmatrix}
    1+f^2y^\prime & 0\\
    1+gz & 0\\
    0 & 1+\oline{g}\oline{z}\\
    0 & 1+\oline{f}^2\oline{y^\prime}
    \end{pmatrix}\, ,
\end{equation}
which is also a first-order FSL, where $f(x)$ has been replaced by $f^2(x)$. We now notice that due to the symmetry between $f$ and $g$ in a first-order FSL one can apply essentially the same ER transformation, this time coarse-graining  $z$, so that $g$ is replaced by $g^2$. For a polynomial over $\mathbb{F}_2$, a useful property is that $f^2(x)\equiv f(x^2)$. This leaves only functions of $x^2$ in the coarse-grained stabilizer map, so we coarse-grain again, along $x$ this time, sending
\begin{align}
x^2 \mapsto 
\begin{pmatrix}x^\prime & 0 \\ 0 & x^\prime\end{pmatrix} \, .
\end{align} 
This results in the stabilizer map splitting into two copies of the original model, and hence all qubit first-order FSL models are self-bifurcating fixed points with $b=2$. This includes the stack of 2D toric codes for $f=g=1$ and the trivial model for $f=g=0$. Furthermore, if $f =0,1$ then $f^2=f$ and the model is a conventional fixed point under ER that coarse-grains along $y$ only, while being a self-bifurcating fixed point under ER that coarse-grains both $x$ and $z$, and similarly if $f$ and $g$ are swapped. 

A particular example, the SFSL model, is obtained when $f(x)=1+x$, $g(x)=1$, this model has a fractal logical operator in the $xy$ plane and a lineon operator along $\hat{z}$~\cite{yoshida2013exotic}.  Our results indicate that the model is invariant under ER along the direction of the string operator and self-bifurcates under ER in the  plane of the fractal operator.

The above ER transformations were essentially based on the observation that the 2D first-order fractal subsystem symmetry breaking (classical) spin models, from which the first-order FSLs are built, are self-bifurcating under ER. 
This can easily be seen by following the ER transformation of the FSL before Eq.~\eqref{eq:fsler1} with the second qubit and generator, as well as the $z$ coordinate, dropped from the stabilizer map. 
This connection generalizes straightforwardly to provide ER transformations for 3D FSLs based upon self-bifurcating 2D fractal subsystem symmetry breaking spin models that may not be first-order. 
We remark that higher-order FSLs need not be self-bifurcating, as cubic code 1, which is not self-bifurcating, is equivalent to a second-order FSL~\cite{yoshida2013exotic}. 
FSL forms for this, and some other cubic codes are presented in appendix~\ref{CCFSLs}. The explicit mapping transformations can be found in the supplementary {\small{MATHEMATICA}} file SMERG.nb. 
An interesting open problem is the classification and characterization of higher-order self-bifurcating 2D fractal subsystem symmetry breaking spin models and the 3D FSLs built from them~\cite{ShirleyERG}. 

We remark that a form of real-space RG was considered for the FSLs in Ref.~\onlinecite{yoshida2013exotic}, however it does not conform to the strict definition of ER used here, where the phase of matter cannot change. Instead, degrees of freedom that were not fully disentangled were projected out, which is capable of changing the phase by projecting out an arbitrary nontrivial decoupled model. Our ER results are consistent with the RG results in Ref.~\onlinecite{yoshida2013exotic}. 

For the details of the ER procedures for the other examples discussed below, we refer the reader to the {\small{MATHEMATICA}} file SMERG.nb in the Supplementary Material.  

\subsubsection{Cubic codes 5, 6 and 9}
These cubic codes are fixed points under ER along one of the lattice directions, while they self-bifurcate under ER along the orthogonal plane, see Fig~\ref{SBERG}~(b).  
This is consistent with these models being fractal type-I lineon models~\cite{Dua_Classification_2019}. 

\subsubsection{Cubic codes 2--4, 7, 8 and 10}
These cubic codes are not fixed points under ER along any single lattice directions, and self-bifurcate only after doing ER along all three lattice directions together, see Fig~\ref{SBERG}~(c).  
This is consistent with them being either fractal type-I, or type-II, fracton models~\cite{Dua_Classification_2019}.  


\begin{figure}[t]
    \centering
\includegraphics[scale=1.12]{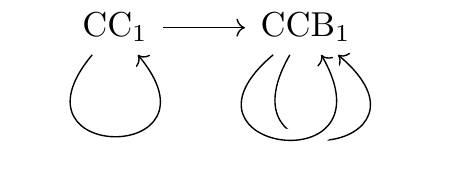}
\caption{Bifurcating ERG flow of Haah's cubic code 1. }
\label{CC1RG}
\end{figure}

\subsection{ Quotient bifurcating fixed points}
A quotient bifurcating fixed point is defined by its branching under ER into a copy of itself and an inequivalent self-bifurcating model, which may consist of several decoupled self-bifurcating models. 
Verifying the inequivalence of models can be quite subtle, as it requires one to consider arbitrary locality-preserving unitaries, including coarse graining and modular transformations. 
Ref.~\onlinecite{haah2014bifurcation} introduced an approach to proving the inequivalence of models based upon the behavior of their ground space degeneracies. 
In particular, the following sufficient condition was proposed to confirm that a model could not be a self-bifurcating fixed point
\begin{align}
k(c L)=\alpha k(L) + \beta
\, ,
\label{eq:qbcond}
\end{align}
for integers $\alpha >1,$ $\beta >0$. Where $k(L)=\log_2\,\text{GSD}(L)$ is the number of encoded qubits in the ground space on an $L\times L \times L$ system with periodic boundary conditions. 
This is in contrast with the necessary condition that $\beta=0$ in Eq.~\eqref{eq:qbcond}
for any self-bifurcating model. 

\subsubsection{Cubic code 1}
It was shown in Ref.~\onlinecite{haah2014bifurcation} that CC$_1$ bifurcates into a copy of itself and an inequivalent model CCB$_1$ under ER and hence is a quotient fixed point, see Fig.~\ref{CC1RG}. 
To prove this inequivalence the scaling of the number of encoded qubits in the ground space, ${k(2L) = 2k(L)+2}$, was used. This scaling can be directly inferred from the following formula for CC$_1$~\cite{haah2013,haah2014bifurcation,haah2013commuting} 
\begin{align}
 k(L) = 2^{l + 2}\deg_x\Big(\gcd\big((&x + 1 )^{L'} + 1, (\zeta_3 x + 1)^{L'} + 1, \nonumber \\
&({\zeta_3}^2 x + 1)^{L'} + 1\big)\Big)- 2
\, .
\end{align}
Thus, any model equivalent to CC$_1$ cannot self-bifurcate. Since $\text{CCB}_1$ was demonstrated to be a self-bifurcating model, $\text{CC}_1$ and CCB$_1$ cannot be in the same phase. 

\subsubsection{Cubic codes 11-17} 
We have found that these cubic codes are quotient fixed points under ER. CC$_i$, for $i=11,\dots,17$, branches into a copy of CC$_i$, a stack of 2D toric codes along one direction and a self-bifurcating model CCB$_{i}$, see Fig.~\ref{BRG_CC11}. 
This is consistent with these models being fractal type-I topological orders that support planons. See Appendix~D of Ref.~\onlinecite{Dua_Classification_2019} for a description of these planons. 
For CC$_{13}$ and CC$_{17}$ the quotient fixed point behavior occurs after an initial step of coarse-graining, see Fig.~\ref{CC_13_14_17}. 

\begin{figure}[t]
\centering
\includegraphics[scale=1.12]{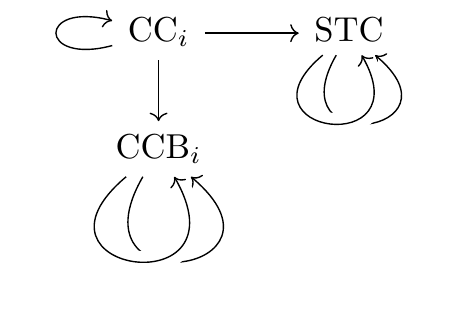}
\caption{ Bifurcating ERG flow of CC$_i$ for $i=11\text{--}17$, which support planon excitations. For CC$_{13}$ and CC$_{17}$ in particular, the ERG flow depicted requires an initial coarse-graining step shown in Fig.~\ref{CC_13_14_17}. CC$_{14}$ also admits a self-bifurcating ERG flow, see Fig.~\ref{SBERG}~(c).}
\label{BRG_CC11}
\end{figure}

\begin{figure}[t]
\centering
\sidesubfloat[]{\includegraphics[scale=1.12]{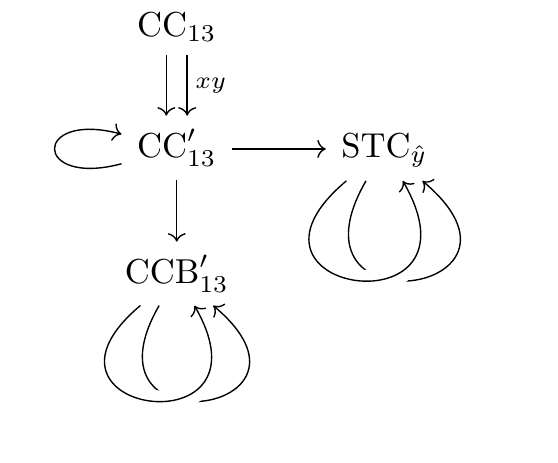}}\\
\sidesubfloat[]{\includegraphics[scale=1.12]{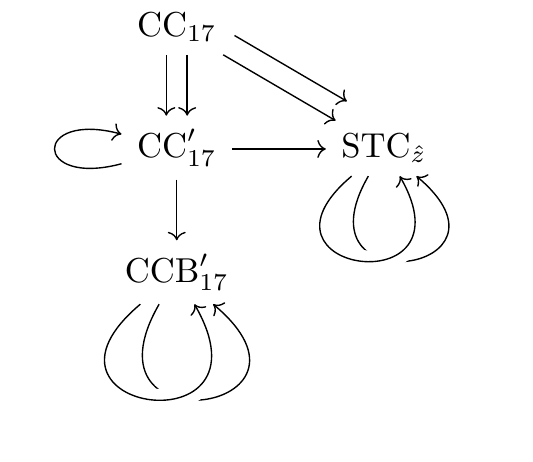}}\\
\sidesubfloat[]{\includegraphics[scale=1.12]{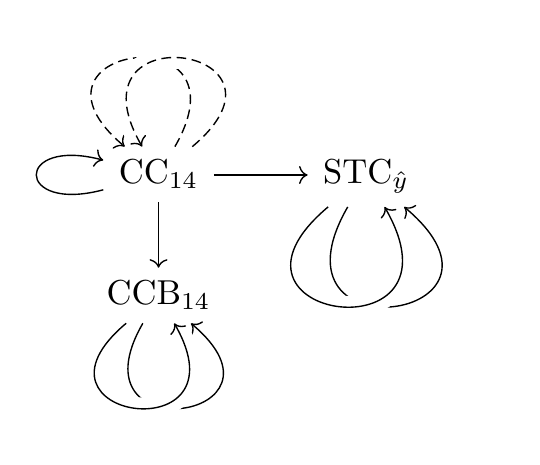}}\\
\caption{Examples of bifurcating ERG flows. a) ER of CC$_{13}$. The $xy$ label on the arrows indicates the directions under which the pre-coarse-graining is done. After an initial coarse-graining step, CC$_{13}$ splits into two copies of CC$_{13}^\prime$. CC$_{13}^\prime$ is a quotient fixed point of the type shown in Fig.~\ref{BRG_CC11}~(b). b) CC$_{17}$ splits into two copies of CC$_{17}^\prime$ and two stacks of 2D toric codes. CC$_{17}^\prime$ is another quotient fixed point of the type shown in Fig.~\ref{BRG_CC11}~(b). c) CC$_{14}$ appears to be a quotient fixed point of the type shown in Fig.~\ref{BRG_CC11}~(b) under one ERG flow, depicted with solid lines, but is a self-bifurcating fixed point under another ERG flow, depicted with dotted lines.} 
\label{CC_13_14_17}
\end{figure}

\subsection{Quotient bifurcating and self-bifurcating fixed point behavior for cubic code 14} 
\label{ex:cc14}

We have found two distinct ERG flows for CC$_{14}$: 
under one ER procedure CC$_{14}$ is self-bifurcating with $b=2$. 
Under another ER procedure  CC$_{14}$ appears to be a quotient fixed point as it bifurcates into a copy of itself, a stack of 2D toric codes and another self-bifurcating model CCB$_{14}$, see Fig.~\ref{CC_13_14_17}. 
Due to the absence of any correction factor in Eq.~\eqref{eq:qbcond} for CC$_{14}$, as it is a self-bifurcating fixed point, the standard method to argue for inequivalence to CCB$_{14}$ and a stack of 2D toric codes does not apply. On the other hand, we have been unable to find a direct equivalence between CC$_{14}$ and CCB$_{14}$ plus a stack of 2D toric codes. It is clear from the ERG flows that two copies of CC$_{14}$ is phase equivalent to CC$_{14}$, CCB$_{14}$ and a stack of toric codes. Thus in the case that the ERG flows are inequivalent, this would provide an interesting example of a phase equivalence that is catalyzed by the addition of a copy of CC$_{14}$. In any case, all of the aforementioned models are in the same trivial bifurcated equivalence class. 

This example raises the interesting question of whether all ERG flows for which a model is a bifurcating fixed point are equivalent. We remark that this is trivially true for conventional fixed points but requires a careful definition of equivalence for more general bifurcating ERG flows. 

\section{Entanglement renormalization in different types of 3D topological phases} 
\label{classes_TO_ER}

In the recent fracton literature~\cite{Dua_Classification_2019} 3D topological stabilizer models have been coarsely organized into four qualitatively distinct classes: TQFT, foliated and fractal type-I, and type-II topological orders. In this section we describe the salient features of these different classes, which are determined by the mobilities of their topological quasiparticles, and discuss the influence they have on possible ERG flows. 

\subsection{TQFT topological order}

TQFT topological order is characterized by a constant topological ground space degeneracy as the system size increases and deformable logical operators. 
In two dimensions, there is essentially only one type of translation-invariant topological stabilizer model, the 2D toric code which is described by a TQFT at low energy. 
This is due to a structure theorem~\cite{Haah2018,haah2016algebraic} stating that any translation-invariant topological stabilizer model is  equivalent under locality-preserving unitary to copies of the toric code and some disentangled trivial qubits. 

Hence all TQFT stabilizer models in 2D are fixed points under ER, equivalent to a number of copies of 2D toric code. We expect similar behavior in 3D, that all TQFT stabilizer models are fixed points under ER equivalent to a number of copies of 3D toric code and hence are described by a TQFT at low energies, although this remains to be shown.

In 3D the logical operator pairs of the toric code are composed of deformable string and membrane operators. 
More general 3D translation-invariant topological stabilizer models that have a constant ground space degeneracy as system size increases, and so are TQFT topological orders, have been shown to resemble this behavior as their logical operators also come in string-membrane pairs~\cite{yoshida2011classification}. We conjecture that a stronger structure theorem holds for such models in three dimensions: i.e. a TQFT stabilizer model in 3D is locality-preserving unitary equivalent to copies of the 3D toric code (possibly with fermionic point particle~\cite{Levin_wen_fermion,walker2012}), and disentangled trivial qubits. 

\subsection{Type-I fracton topological order}
Type-I fracton topological order is characterized by a sub-extensive ground space degeneracy and rigid string operators, which correspond to excitations with sub-dimensional mobilities. This type of topological order is divided into two broad categories which we treat separately below. 

\subsubsection{Foliated type-I topological order}
Foliated topological order is characterized via a foliation structure~\cite{shirley2018Foliated,shirley2017fracton,Slagle2018foliated}. Such models can be grown by adding layers of a 2D topological order, such as the 2D toric code, according to the foliation structure. 
The most studied example of this type is the X-cube model which can be grown by adding layers of 2D toric code as shown in Fig.~\ref{foliated_TO}. More formally, two Hamiltonians are foliated equivalent~\cite{shirley2018universal} if they are connected by adiabatic evolution and addition of layers of two dimensional gapped Hamiltonians. 
When translation invariance is enforced, foliated equivalence is the same as bifurcated equivalence with respect to stacks of 2D topological orders. 
Foliated stabilizer models can have topological  quasiparticles with a hierarchy of mobilities such as fractons, lineons and planons. Given their foliation structure in terms of 2D toric codes, they must always support planons. 
In fact, the X-cube model has planons in all the three lattice directions due to its foliation structure that allows layering with 2D toric code in the three lattice directions. 
Due to the underlying foliation structure of this class of models, the ground space degeneracy scales exponentially with the size of the system. 

\begin{figure}
\centering
\includegraphics[scale=0.19]{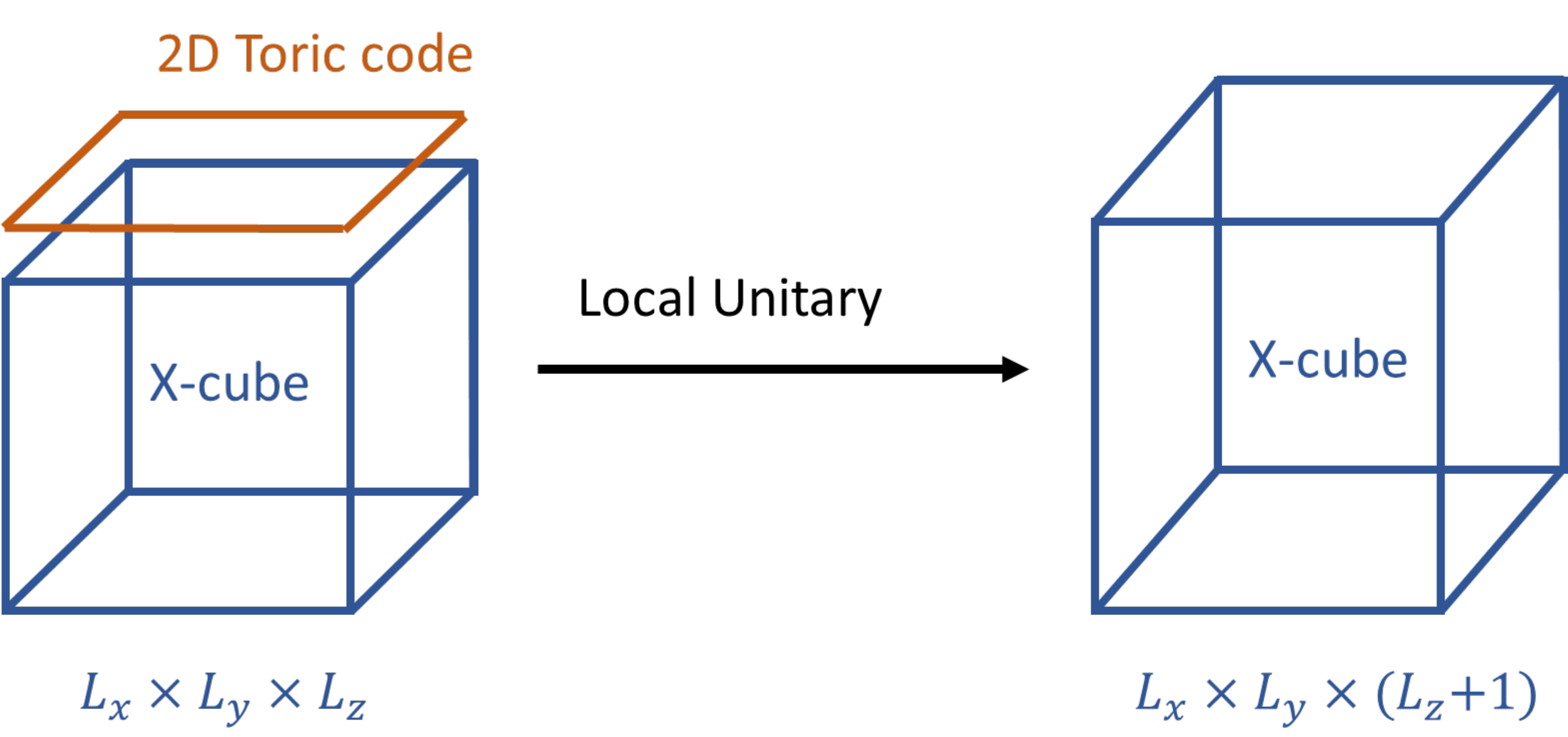}\\
\caption{Foliated topological order in the X-cube model.}
\label{foliated_TO}
\end{figure}

\begin{figure}
\centering
\includegraphics[scale=0.19]{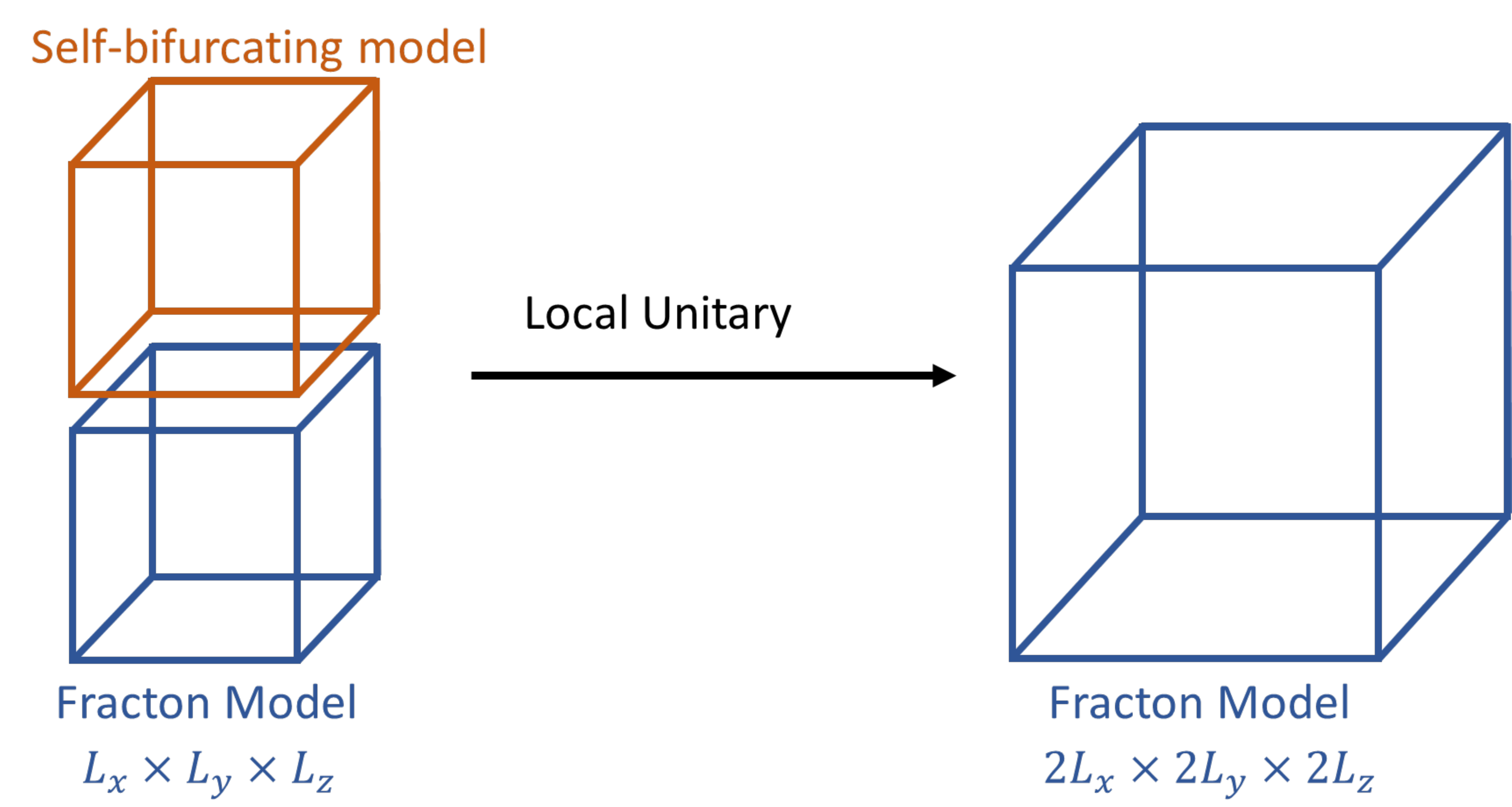}\\
\caption{Entanglement structure of more general fracton models.}
\label{foliated_TO}
\end{figure}

The canonical self-bifurcating example within this class of models is the stack of 2D toric codes. 
This serves as the B model for X-cube, which is a quotient fixed point that bifurcates into a copy of itself and a stack of 2D toric codes along each axis. 
This is a consequence of the fact that X-cube supports planons in all the three lattice planes. 
The ER process for X-cube in the polynomial language is presented in appendix~\ref{XC_ER}. 

Since all 2D topological stabilizer model are essentially copies of 2D toric code, we expect that stacks of 2D toric code are the only self-bifurcating stabilizer models in the foliated class. 
Furthermore, we expect that all nontrivial foliated stabilizer models flow to quotient fixed points with stacks of 2D toric code serving as the B models. 
An interesting open question is whether all such quotient fixed point foliated stabilizer models  are foliated equivalent to a suitably generalized X-cube model which is allowed to take on different foliation structures~\cite{shirley2018Fractional,shirley2018Foliated,shirley2018universal,Wang2019}. 

\subsubsection{Fractal type-I topological order}
Fractal type-I topological order captures type-I models that support fractal logical operators and symmetries, due to which no foliation structure in terms of 2D Hamiltonians is known or likely possible. In fact, these models need not support planons at all. 
Due to the underlying fractal symmetry of this class of models, the ground space degeneracy fluctuates with the system size. 
The simplest example of this type of model is the Sierpinski fractal spin liquid model~\cite{yoshida2013exotic, Chamon_quantum_glassiness} which supports two dimensional fractal operators in the $xy$ plane along with rigid string operators in the $\hat{z}$ direction as all particles in the model are lineons. 
This model was mentioned above as a special case of Yoshida's first-order FSLs which were shown to be self-bifurcating. 
As expected, this model is invariant under ER along the $\hat{z}$ direction alone but self-bifurcates under ER in the $xy$ plane or for all the three directions at once. 

There are also fractal type-I models amongst the cubic codes, some of which support fractons along with lineons and planons. 
We discuss the ER of two such examples below. 
The first is CC$_6$ which is self-bifurcating and the other one is CC$_{11}$ which is a quotient bifurcating fixed point: 


CC$_6$ is a self-bifurcating fixed point under ER that coarse-grains all three lattice directions together. One can perform a modular transformation on the stabilizer map of CC$_6$ to bring the lineon string operator along a lattice direction. 
The transformed model is invariant under ER that coarse grains in the direction of the string operator, while it self-bifurcates under ER that coarse grains the directions orthogonal to it. The number of encoded qubits of CC$_6$ is given by 
\begin{align*}
k(L)= 2^{l + 1}\deg_z\left(\gcd((z + 1)^{L'} + 1, (z^2 + z + 1)^{L'} + 1)\right).
\end{align*}
This obeys
\begin{align*}
      \boxed{  k(2^r L) = 2^r k(L)}
\end{align*}
which implies that the number of encoded qubits always doubles upon doubling the system size. This is indeed consistent with the self-bifurcating behavior. 

CC$_{11}$ is an important fractal type-I example as it supports topological quasiparticles of sub-dimensional mobilities 0, 1 and 2. 
Performing ER that coarse-grains all the three directions together causes CC$_{11}$ to bifurcate into a copy of itself, a lineon model CCB$_{11}$, and a stack of 2D toric codes along $\hat{x}$. 
We conjecture that this bifurcating behavior is directly related to the mobilities of topological  quasiparticles in CC$_{11}$. 
The presence of planons in the $yz$ planes leads to the extraction of a stack of toric codes parallel to $yz$ planes. 
Similarly, a subset of lineons from the original model CC$_{11}$ flow into the lineon model $\text{CCB}_{11}$ under ER. 
In fact, for each of the cubic codes 11--17, except cubic code 14, there is an  ER procedure that leads to a self-bifurcating lineon B model. 
For CC$_{14}$, we find a self-bifurcating fracton B model CCB$_{14}$ which supports a composite lineon as mentioned in Table~\ref{poly_codes} of the appendix. 
For $\text{CCB}_{11}$, the scaling of the number of encoded qubits with the system size, derived in the appendix, is found to be consistent with the self-bifurcating behavior. 

On the other hand, we have found that the number of encoded qubits for CC$_{11}$ is given by
\begin{align*}
\begin{cases}
2L,  & 3 \nmid L \\
2L - 4  + 2^{l + 2} \deg_z\Big(\gcd((z + 1)^{L'} + 1, &
\\
\phantom{2L - 4  + 2^{l + 2} \deg_z ( \text{gcd}} (\zeta_3 z + 1)^{L'} + 1)\Big), & 3 \mid L \, .
\end{cases}
\end{align*}
Using this result, we notice that there exists an infinite family of system sizes $L$ for which the scaling of number of encoded qubits has a correction factor: for $3|L$, we have
\begin{align}
 \boxed{
 k(2L)= 2k(L)+4 } \, .
\end{align}
The argument from Ref.~\onlinecite{haah2014bifurcation} can then be used to show that the correction factor of $4$ cannot be eliminated for any choice of coarse-graining of the original model. Hence, CC$_{11}$ cannot self-bifurcate and is therefore inequivalent to CCB$_{11}$. 

Numerical results for the number of encoded qubits in the ground space of CC$_{12\text{--}17}$ indicate that such an inconsistency with self-bifurcating behavior may hold for all codes except CC$_{14}$ for which the numerics are consistent with $k(2L)=2k(L)$. 
In fact, as mentioned before, we show explicitly that there are two ERG flows for CC$_{14}$, one in which it self-bifurcates and the other in which it splits into a copy of itself, a B model and a stack of 2D toric codes. 
The ER of CC$_{14}$ is shown in Fig.~\ref{CC_13_14_17}~(c). 

\subsection{Type-II topological order}
Type-II topological order is defined by the absence of any logical string operators and characterized by a sub-extensive ground space degeneracy that fluctuates with the system size. 
For the models in this class, none of the elementary topological excitations or their nontrivial composites are mobile. In the previous section we discussed the canonical example from this class, cubic code 1, which is known to be a quotient fixed point. 
CC$_{1}$ supports a fractal operator that moves excitations apart in three dimensions. 
We show in the {\small{MATHEMATICA}} file SMERG.nb that the other type-II cubic code models\footnote{CC$_{2\text{--}4}$ were rigorously  proven to be type-II in Ref.~\cite{haah2011local} while CC$_7$, CC$_8$ and CC$_{10}$ were found to be type-II according to the methods in Ref.~\cite{Dua_Classification_2019}.}: CC$_{2\text{--}4}$, CC$_7$, CC$_8$, CC$_{10}$ and a type-II first-order quantum FSL~\cite{yoshida2013exotic} are self-bifurcating fixed points.

\section{Quotient Superselection sectors} 
\label{sec:quotient_sectors} 

In this section we explore the flow of excitations under ER. We find the set of quotient superselection sectors for several quotient fixed point models by calculating their fixed point excitations under quotient ERG flow. 

To find fixed point excitations under a quotient ERG flow it is important to understand how excitations split during each step of ER. 
The local unitaries performed as part of the ER process alter the form of stabilizer generators but do not cause any splitting. 
It is the redefinition of the stabilizer generators allowed by the equivalence relation $\equiv$ during ER that determines how an excitation splits. 
This is illustrated in Fig.~\ref{quotient_sectors}. 
In the polynomial description of ER, this redefinition is captured by column operations on the stabilizer map. 
A representation $ExS$ of the excitation splitting map can be found by taking the transpose of the matrix representation of the composition of all column operations involved in an ER transformation. 
For CSS models $ExS$ is block-diagonal and can be decomposed into separate maps $ExS_X$ and $ExS_Z$ for the $X$ and $Z$ sectors, respectively. 

Quotient superselection sectors are defined to be equivalence classes of superselection sectors modulo any sectors that can flow into a self-bifurcating model. For stabilizer models the QSS form an abelian group under fusion. 
Nontrivial QSS only arise in nontrivial quotient fixed point models.  While self-bifurcating fixed point models may support highly nontrivial superselection sectors, these sectors all collapse into the trivial QSS. 
A nontrivial QSS must maintain support on a quotient fixed point model along its ERG flow. Conversely, any excitation that flows fully into self-bifurcating B models lies in the trivial QSS. Hence, representatives of potentially nontrivial QSS are given by fixed point excitations under quotient ER which mods out any B models.

\begin{figure}[t]
\centering
\sidesubfloat[]{\includegraphics[scale=1.0]{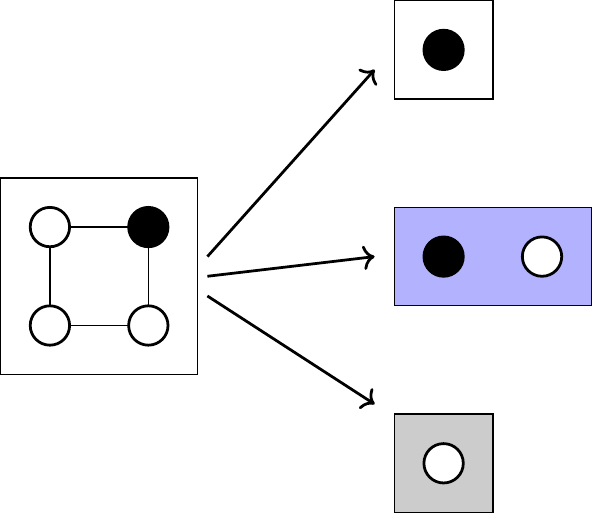}}\\
\vspace{6mm}
\sidesubfloat[]{\includegraphics[scale=1.0]{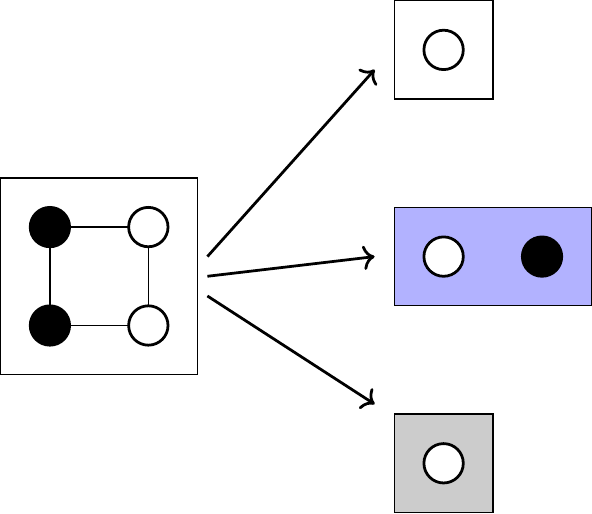}}\\
\caption{An illustration of the flow of excitations under ER for a quotient fixed point model. On the left hand side, the coarse-grained unit cell of a 2D quotient fixed point model with one type of excitation per site is depicted. The presence of an excitation is indicated by a filled circle. Under ER, this excitation can split into multiple excitations that may be supported on a coarse grained copy of the original model (white box) and the B models (blue and grey boxes). The excitation in (a) is a fixed point under quotient ER and hence represents a QSS. On the other hand the excitation in (b) corresponds to a trivial QSS as it flows to the trivial fixed point under quotient ER. } 
\label{quotient_sectors}
\end{figure}

Each row of the map $ExS$  corresponds to a different type of stabilizer generator in the coarse-grained model. 
The entries in a row are the coefficients for expressing a new stabilizer generator as a linear combination of the original stabilizer generators within the coarse-grained unit cell, 
\begin{eqnarray}
 \begin{array}{c}
\drawgenerator{xz}{xyz}{x}{xy}{y}{1}{yz}{z}
\end{array}
\label{pos_unit_cell}.
\end{eqnarray}
Here, $x$, $y$ and $z$ are the translation variables of the original lattice. For the choice of coarse-graining map described in Sec.~\ref{cgmap}, the original stabilizer generators within the coarse-grained unit cell are ordered as follows 
\begin{equation}
 1-z-y-yz-x-xz-xy-xyz \, .
 \label{order_cg_cell}
\end{equation}
According to the above ordering, the position of each stabilizer generator in the coarse-grained unit cell can be mapped to unit vectors $e_i$. This implies, for example, that the generator originally at position $x$ becomes the $5^\text{th}$ generator in the coarse-grained unit cell, corresponding to the unit vector $e_5=\left(\begin{array}{cccccccc}
0 & 0 & 0 & 0 & 1 & 0 & 0 & 0\end{array}\right)$, and similarly the generator at position $xyz$ becomes the last, corresponding to $e_8=\left(\begin{array}{cccccccc}
0 & 0 & 0 & 0 & 0 & 0 & 0 & 1\end{array}\right)$. 

As mentioned above, for all examples studied we have found that a copy of the original model appears after one step of ER. To find the QSS by looking for fixed point excitations under quotient ERG flow using the $ExS$ map, we need to consider only those rows that correspond to the copy of the original model on the coarse-grained lattice. For a CSS model with one type of $X$ stabilizer generator, we need to consider only one row of the $ExS_X$ map, denoted $ExSR_X$. This map, $ExSR_X$, specifies how $X$ type of excitations flow under quotient ER. The quotienting out of excitations that flow into self-bifurcating B models is achieved by ignoring the corresponding rows in $ExS_X$, which specify how excitations flow to those B models. The fixed points of $ExSR_X$ capture the potentially nontrivial QSS. 

To find the fixed points of $ExSR_X$ we take the infrared limit, corresponding to many steps of ER, after which an arbitrary excitation will be supported on the coarse-grained unit cell. This allows us to restrict our attention to columns that contain only $\mathbb{Z}_2$ entries with no polynomial variables. We then utilize the mapping from polynomials to unit vectors described below Eq.~\eqref{order_cg_cell} to translate the monomials in $ExSR_X$ into columns, resulting in a square matrix with $\mathbb{Z}_2$ entries, which we also denote $ExSR_X$. With this replacement, the flow of excitations within the unit cell of a quotient fixed point model under quotient ER has been specified. The fixed points of $ExSR_X$, after the replacement, provide representatives for the QSS. Several examples are worked out below. 

\subsubsection{X-cube model}

Under ER, the X-cube model is mapped to itself, stacks of 2D toric code along each axis, and decoupled trivial qubits. 
In the Quotient Superselection Sectors section of the supplementary \small{MATHEMATICA} file SMERG.nb, column 8 contains the X-stabilizer term of X-cube model while Columns 11 and 19 contain the Z stabilizer terms. The map $ExS$ can be found in terms of the column operations used during ER, as described above. 
The row corresponding to the X-stabilizer term of X-cube in the map $ExS_X$, i.e. $ExSR_X$, is given by
\[
\left(\begin{array}{cccccccc}
x^\prime+y^\prime+x^\prime y^\prime & x^\prime+y^\prime+x^\prime y^\prime & 1 & 1 & 1 & 1 & 1 & 1\end{array}\right)
\, ,
\] 
where $x^\prime$, $y^\prime$ and $z^\prime$ are again translation variables on the coarse-grained lattice. Using the fact that the charge configuration given by $1+x^\prime+y^\prime+x^\prime y^\prime$ is trivial in the X-cube model, we rewrite this row as 
\[
\left(\begin{array}{cccccccc}
1 & 1 & 1 & 1 & 1 & 1 & 1 & 1\end{array}\right)
\, .
\]
To capture the flow of excitations from the coarse-grained unit cell to the new copy of X-cube produced by ER, we write this row in terms of unit vectors as described above in Eq.~\eqref{order_cg_cell}. The $1$ entries are replaced by column vectors $e_1$ 
\[
\left(\begin{array}{cccccccc}
e_{1} & e_{1} & e_{1} & e_{1} & e_{1} & e_{1} & e_{1} & e_{1}\end{array}\right) \, ,
\]
where $e_1=\left(\begin{array}{cccccccc}
1 & 0 & 0 & 0 & 0 & 0 & 0 & 0\end{array}\right)$. 
Here, $e_1$ specifies that all $X$ stabilizer excitations within the unit cell flow to the $X$ stabilizer excitation at position $1$ of the new copy of X-cube on the coarse-grained lattice under quotient ER. 
Accordingly, the fixed point of this matrix is simply $e_1$, corresponding to the $\mathbb{Z}_2$ fracton excitation from the $X$-sector of the X-cube model.

\begin{figure}[t!]
\centering
\includegraphics[scale=1]{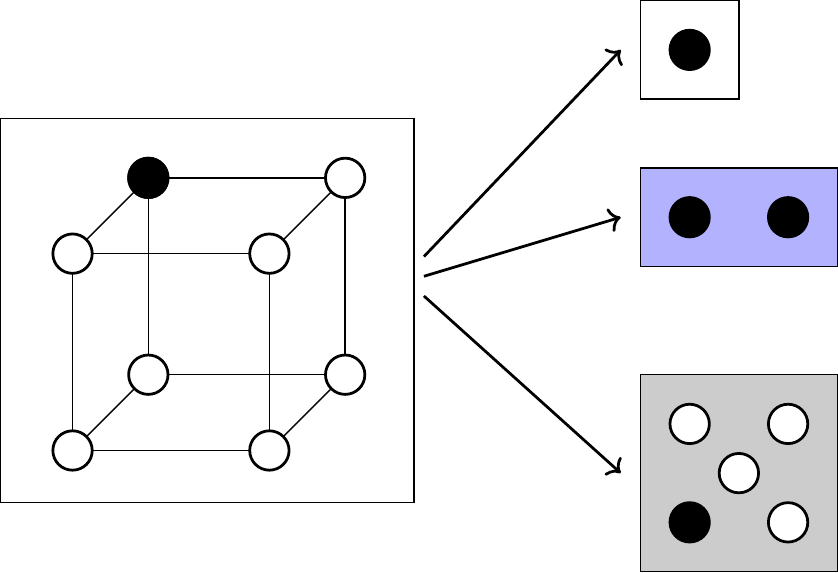}
\caption{The flow of $X$-sector excitations in cubic code 1 under the ER transformation depicted in Fig.~\ref{CC1RG}. The coarse-grained unit cell of CC$_1$ is shown as a cube containing circles that represent $X$ stabilizers, which are filled in black to represent excitations. Under ER, the $X$ excitation at position $z$ in the coarse-grained unit cell, denoted $e_2$, is mapped to excitations of the  $X$ stabilizer of the coarse grained copy of CC$_1$ represented by a white box, both $X$ stabilizers of $\text{CCB}_1$ represented by the purple box and a local excitation in the trivial sector represented by the gray box.} 
\label{quotient_sectors_CC1}
\end{figure}

On the other hand, the $Z$-sector splits into 16 different stabilizer terms during ER. Hence there are 16 columns in the map $ExS_Z$. The rows corresponding to the two $Z$-stabilizer terms in the copy of X-cube on the coarse-grained lattice are 
\begin{widetext}
\[
\left(\begin{array}{cccccccc}
1 & \overline{z}^\prime & 1 & 1 & 1 & z & 1 & 1\\
1+\overline{x}^\prime & \overline{z}^\prime+\overline{x}^\prime \overline{z}^\prime & 0 & 0 & 0 & 0 & 0 & 0
\end{array}\begin{array}{cccccccc}
1+\overline{z}^\prime & 0 & 0 & 0 & 0 & \overline{x}^\prime+\overline{x}^\prime \overline{z}^\prime & 0 & 0\\
\overline{x}^\prime & 1+\overline{z}^\prime+\overline{x}^\prime \overline{z}^\prime & 1 & 1 & \overline{x}^\prime & \overline{x}^\prime+{\overline{x}^\prime}^{2}\overline{z}^\prime+x^\prime \overline{z}^\prime & \overline{x}^\prime & 1
\end{array}\right)
\, .
\]
\end{widetext}
Using the triviality of charge configurations $1+\overline{x}^\prime$ and $1+\overline{z}^\prime$ in the X-cube model, we rewrite these rows as 
\[
\left(\begin{array}{cccccccc}
1 & 1 & 1 & 1 & 1 & 1 & 1 & 1\\
0 & 0 & 0 & 0 & 0 & 0 & 0 & 0
\end{array}\begin{array}{cccccccc}
0 & 0 & 0 & 0 & 0 & 0 & 0 & 0\\
1 & 1 & 1 & 1 & 1 & 1 & 1 & 1
\end{array}\right)
\, .
\]
Again, converting the position labels to unit vectors, we obtain the map for the flow of excitations 
\[
\left(\begin{array}{cccccccc}
e_{1} & e_{1} & e_{1} & e_{1} & e_{1} & e_{1} & e_{1} & e_{1}\\
e_0 & e_0 & e_0 & e_0 & e_0 & e_0 & e_0 & e_0
\end{array}\begin{array}{cccccccc}
e_0 & e_0 & e_0 & e_0 & e_0 & e_0 & e_0 & e_0\\
e_{1} & e_{1} & e_{1} & e_{1} & e_{1} & e_{1} & e_{1} & e_{1}
\end{array}\right) \, ,
\]
where $e_0 =\left(\begin{array}{cccccccc}
0 & 0 & 0 & 0 & 0 & 0 & 0 & 0\end{array}\right)$.  
The fixed points are
\begin{align*} 
\left(\begin{array}{c}
e_{1}\\
e_{0}
\end{array}\right) && \text{and} && \left(\begin{array}{c}
e_{0}\\
e_{1}
\end{array}\right)
\end{align*}
which correspond to the two generating $\mathbb{Z}_2$ lineons in the Z-sector of the X-cube model. 
Hence we find that the QSS group of X-cube is given by $\mathbb{Z}_2 \oplus \mathbb{Z}_2^2$, which agrees with the result in Ref.~\onlinecite{shirley2018Fractional}.

\subsubsection{Cubic code 1}

After ER, cubic code 1 bifurcates into a copy of itself, cubic code B and some disentangled trivial qubits. The matrix $ExSR_X$ is given by
\[
\left(\begin{array}{cccccccc}
1 & 1 & 1 & x^\prime & 1 & 1+x^\prime+z^\prime & 1+x^\prime+y^\prime & x^\prime+y^\prime+z^\prime\end{array}\right).
\]
Using the triviality of the charge configuration $1+x^\prime+y^\prime+z^\prime$ in cubic code 1, we rewrite this row as 
\[
\left(\begin{array}{cccccccc}
1 & 1 & 1 & x^\prime & 1 & y^\prime & z^\prime & 1\end{array}\right) \, ,
\]
which specifies the flow of excitations into the CC$_1$ model on the coarse-grained lattice. Converting the position labels into unit vectors, we can write this as a matrix
\begin{equation}
	\begin{split}
 \left(\begin{array}{cccccccc}
e_{1} & e_{1} & e_{1} & e_{5} & e_{1} & e_{3} & e_{2} & e_{1}\end{array}\right) \, .
\end{split}
\label{CC1_excflow}
\end{equation}
This means, for example, that the $X$-excitation at position $yz$ in the unit cell shown flows to an excitation of the $X$-stabilizer at position $x^\prime$ in the coarse-grained lattice. Considering the next-coarse-grained unit cell, this position is equivalently expressed by the unit vector $e_5$. 
The fixed point of the matrix in Eq.~\eqref{CC1_excflow} is $e_1$ which corresponds to the $\mathbb{Z}_2$ fracton in $X$-sector of cubic code 1. 
We remark that the $Z$-sector behaves similarly due to the relation between sectors in the cubic codes and hence the QSS group is given by $\mathbb{Z}_2 \oplus \mathbb{Z}_2$. 

In Fig.~\ref{quotient_sectors_CC1}, we depict how the excited $X$ stabilizer at position $z$ splits into different excitations in the new models after ER. One of the split excitations is supported in the copy of the original model CC$_1$ and hence the excitation at position $z$ on the original lattice flows to the nontrivial fixed point under quotient ER and therefore is in the nontrivial QSS. In addition, both $X$ stabilizers of $\text{CCB}_1$ are excited and the local stabilizer on a decoupled qubit is excited. The excitation in the coarse-grained lattice of CC$_1$ is found at position $1$ in the next-coarse-grained unit cell. Upon further quotient ER, the excitation remains at position $1$ as it is a fixed point.

\subsubsection{Cubic code 11}
\label{CC11QSS}
Cubic code 11 supports fractons, lineons and planons, where the lineons and planons are composites of fractons. 
Under ER, cubic code 11 splits into itself, a self-bifurcating lineon model and a stack of 2D toric codes. 
In this case, the $ExSR_X$ map is given by
\[
\left(\begin{array}{cccccccc}
y^\prime & y^\prime & 1 & 1 & y^\prime & 1+y^\prime+z^\prime & 1 & 1+{y^\prime}^{2}+{y^\prime}^{2}z\end{array}\right) \, .
\]
Using the triviality of charge configurations  ${1+y^\prime+{y^\prime}^2+z^\prime+{y^\prime}{z^\prime}+{y^\prime}^2{z^\prime}}$ and ${1+{x^\prime}+{y^\prime}+{y^\prime}{z^\prime}}$ in cubic code 11, we rewrite $ExSR_X$  as 
\[
\left(\begin{array}{cccccccc}
y^\prime & y^\prime & 1 & 1 & y^\prime & 1+y^\prime+z^\prime & 1 & 1+x^\prime+z^\prime\end{array}\right).
\]
Converting the position labels to unit vectors, we get find the matrix 
\[
\left(\begin{array}{cccccccc}
e_{3} & e_{3} & e_{1} & e_{1} & e_{3} & e_{1}+e_{3}+e_{2} & e_{1} & e_{1}+e_{5}+e_{2}\end{array}\right) \, ,
\]
whose only nonzero fixed point is given by $e_1+e_3$. Since the $Z$-sector behaves similarly we find that the QSS group is $\mathbb{Z}_2 \oplus \mathbb{Z}_2$. 

\section{Discussions and Conclusions}

In this work we have systematically studied unconventional bifurcating ERG flows of gapped stabilizer Hamiltonian models. We introduced the notions of self-bifurcating and quotient bifurcating fixed points to organize the possible behavior of bifurcating fixed point models. This inspired us to define the notion of bifurcated equivalence class, generalizing the conventional notion of gapped phase and foliated fracton equivalence. This also led to a natural definition of quotient superselection sectors. 
These ideas were then brought to bear on a large range of stabilizer model examples, including 17 of Haah's cubic codes and Yoshida's fractal spin liquids. All models were found to be bifurcating ERG fixed points. Furthermore, many cubic codes and all first-order fractal spin liquids were found to be self-bifurcating. These results are summarized in table~\ref{results_ERG}.

We found that the long-range entanglement features of stabilizer models provide insights into their structure. 
The mobilities of particles in each model was found to constrain the nature of the self-bifurcating models produced by ER. 
For models that support planon particles, we found that a stack of 2D toric codes could be extracted during ER. 
For example, CC$_{11}$ supports planons with mobility in the lattice planes perpendicular to $\hat{x}$ and consequently a stack of 2D toric codes is extracted during ER. 
This leads us to conjecture~\cite{dua_proof_str} that the existence of a planon in a 3D translation-invariant topological stabilizer model implies that a copy of the 2D toric code can be extracted via a local unitary transformation. Extending this in a translation-invariant way leads to the observed extraction of a stack of 2D toric codes. 
We also conjecture that if the original model has a particle with three dimensional mobility, a 3D toric code can be extracted via a local unitary circuit.

Models in which all the elementary excitations can move along a certain direction $\hat{i}$ are observed to be invariant under ER that coarse-grains along $\hat{i}$. 
This is consistent with the fact that the number of encoded qubits $k$ does not change as the system size grows along this particular direction i.e. ${k(2L_i)=k(L_i)}$. 
Conversely, it was shown in Ref.~\onlinecite{yoshida2011classification} that all particles must have three-dimensional mobility in any 3D topological stabilizer model that has a constant ground space degeneracy as the system size increases along all three axes, and hence such models are in TQFT-type phases. 
Inspired by this result, we conjecture that this principle can be extended to cover models with a ground space degeneracy that is constant as the system size grows along two axes, in which case we posit that all particles in the theory must be at least mobile in a 2D plane spanned by the two axes, and consequently the model must be a stack of planon and TQFT models. 
Similarly we conjecture for models with a ground space degeneracy that is constant as the system size grows along one axis, that all particles must be at least mobile along that axis, and consequently such a model is a stack of lineon, planon and TQFT models. 
Combining this with conjectures posed previously in Ref.~\onlinecite{Dua_Classification_2019}: that all TQFT stabilizer models are equivalent to copies of the 3D toric code (possibly with fermionic point particle) and planon stabilizer models are equivalent to a stack of 2D toric codes,  
and the fact that lineon models can be compactified~\cite{Dua2019_compactify} along the lineon direction to produce 2D subsystem symmetry breaking models,  
points the way towards general classification results for 3D translation-invariant topological stabilizer models, inspired by the ERG flows we have found in examples.  
We leave the proofs of these conjectures to a future work.

For bifurcating models, when there is an additive correction to the exponential scaling of the number of encoded qubits $k$ as the system size increases by a factor $c$, i.e. ${k(c L)=\alpha k(L)+\beta}$, the model cannot self-bifurcate under coarse graining by $c$. 
For models that do self-bifurcate, the additive correction vanishes i.e. ${k(c L)=\alpha k(L)}$. 
We have calculated the number of encoded qubits for many examples over a range of system sizes, see table~\ref{encoded_qubits_table} in the appendix, and confirmed that they are consistent with the ERG flows found in table~\ref{results_ERG}. 
We remark that a correction to the linear scaling of $k(L)$ was the first indication that X-cube is a nontrivial foliated model~\cite{vijay2016fracton}. It would be interesting to search for invariant quantities that characterize nontrivial bifurcated models, such as the correction $\beta$ or similarly inspired corrections to entanglement entropy, generalizing ideas that arose in the study of foliated fracton models~\cite{shirley2018universal}. 

Our examples have focused on coarse-graining by a factor of two, which appeared natural for qubit systems. In particular each quotient fixed point example is bifurcated equivalent to itself on all lattices with spacings that are related by a multiple of two. It would be interesting to extend this to other primes, in an attempt to remove the lattice scale entirely from the bifurcated equivalence class of these models. 

We also note the appearance of the 2-adic norm of $L$ (for lattice spacing $a=1$) in the ratio of the number of quotient fixed point branches to all branches, the vast majority of which are generated by self-bifurcating B models, when one starts with an initial system on the $L\times L \times L$ 3D torus and applies ERG until all factors of $2$ have been removed from $L$. Similarly for $L_2 = 2^{n} L_1$ the ratio $\mathrm{GSD}(L_1)/\mathrm{GSD}(L_2)$ should be given by the 2-adic norm of $L_2/L_1$. 
The further appearance of $p$-adic norms in fracton models remains an intriguing prospect.

The bifurcating ER concepts and methods employed here apply equally well to subsystem symmetry breaking models. It would be interesting if they could shed light onto the classification of 2D translation-invariant spontaneous symmetry breaking stabilizer models, which has been accomplished in 1D~\cite{haah2013} but remains open in 2D due to the presence of complicated fractal like symmetries~\cite{Williamson_cubic_code,yoshida2013exotic}. This classification problem appears to be contained within the classification of 3D translation-invariant topological stabilizer models due to the existence of lineon models that can be compactified along the lineon direction to produce any known 2D subsystem symmetry breaking model.  
The bifurcating ER methods can also be extend directly to subsystem symmetry-protected topological phases (SSPT), by imposing a subsystem symmetry constraint on the individual gates in each ER step. This should shed light on the question: what is the appropriate definition of SSPT equivalence class?~\cite{subsystemphaserel,Devakul2018,Shirley2019} 

It would also be interesting to extend the ERG approach applied in this work to fractonic U(1) tensor gauge theories~\cite{PhysRevB.95.115139,PhysRevB.96.035119}. It has been found that upon Higgsing such theories may transition to either fractonic or conventional topological orders~\cite{ma2018fracton,PhysRevB.97.235112,bulmash2018generalized,Williamson2018Fractonic}, so it raises the question of how the ``parent'' fractonic U(1) gauge theories behave under ERG.  

Finally, exact ERG flows have been found for TQFT Hamiltonians beyond stabilizer models~\cite{Konig2009}. It would be interesting to extend these ERG flows to bifurcating ERG flows for nonabelian fracton models~\cite{prem2018cage,song2018twisted}. It is currently unclear if foliated equivalence extends straightforwardly to nonabelian fracton models as their ground state degeneracies behave somewhat differently to the abelian case~\cite{hao_twisted}. 

\vspace{.1cm}
\noindent
\textit{Note added}: During the writing of this paper we learnt of the work by Shirley, Slagle and Chen on closely related topics~\cite{ShirleyERG}. 

\acknowledgments
AD and DW thank Jeongwan Haah for useful discussions. DW also thanks Yichen Hu and Abhinav Prem. PS thanks Jayameenakshi Venkatraman for introducing him to quantum error correction. This work is supported by start-up funds at Yale University (DW and MC), NSF under award number DMR-1846109 and the Alfred P. Sloan foundation (MC). 
 
\bibliographystyle{apsrev_1}
\bibliography{DomBib}

\begin{thebibliography}{82}%
\makeatletter
\providecommand \@ifxundefined [1]{%
 \@ifx{#1\undefined}
}%
\providecommand \@ifnum [1]{%
 \ifnum #1\expandafter \@firstoftwo
 \else \expandafter \@secondoftwo
 \fi
}%
\providecommand \@ifx [1]{%
 \ifx #1\expandafter \@firstoftwo
 \else \expandafter \@secondoftwo
 \fi
}%
\providecommand \natexlab [1]{#1}%
\providecommand \enquote  [1]{``#1''}%
\providecommand \bibnamefont  [1]{#1}%
\providecommand \bibfnamefont [1]{#1}%
\providecommand \citenamefont [1]{#1}%
\providecommand \href@noop [0]{\@secondoftwo}%
\providecommand \href [0]{\begingroup \@sanitize@url \@href}%
\providecommand \@href[1]{\@@startlink{#1}\@@href}%
\providecommand \@@href[1]{\endgroup#1\@@endlink}%
\providecommand \@sanitize@url [0]{\catcode `\\12\catcode `\$12\catcode
  `\&12\catcode `\#12\catcode `\^12\catcode `\_12\catcode `\%12\relax}%
\providecommand \@@startlink[1]{}%
\providecommand \@@endlink[0]{}%
\providecommand \url  [0]{\begingroup\@sanitize@url \@url }%
\providecommand \@url [1]{\endgroup\@href {#1}{\urlprefix }}%
\providecommand \urlprefix  [0]{URL }%
\providecommand \Eprint [0]{\href }%
\providecommand \doibase [0]{http://dx.doi.org/}%
\providecommand \selectlanguage [0]{\@gobble}%
\providecommand \bibinfo  [0]{\@secondoftwo}%
\providecommand \bibfield  [0]{\@secondoftwo}%
\providecommand \translation [1]{[#1]}%
\providecommand \BibitemOpen [0]{}%
\providecommand \bibitemStop [0]{}%
\providecommand \bibitemNoStop [0]{.\EOS\space}%
\providecommand \EOS [0]{\spacefactor3000\relax}%
\providecommand \BibitemShut  [1]{\csname bibitem#1\endcsname}%
\let\auto@bib@innerbib\@empty
\bibitem [{\citenamefont {Wilson}(1975)}]{Wilson1975}%
  \BibitemOpen
  \bibfield  {author} {\bibinfo {author} {\bibfnamefont {K.~G.}\ \bibnamefont
  {Wilson}},\ }{The renormalization group: Critical phenomena and the Kondo
  problem},\ \href {\doibase 10.1103/RevModPhys.47.773} {\bibfield  {journal}
  {\bibinfo  {journal} {Rev. Mod. Phys.}\ }\textbf {\bibinfo {volume} {47}},\
  \bibinfo {pages} {773}} (\bibinfo {year} {1975})\BibitemShut {NoStop}%
\bibitem [{\citenamefont {Haah}(2014)}]{haah2014bifurcation}%
  \BibitemOpen
  \bibfield  {author} {\bibinfo {author} {\bibfnamefont {J.}~\bibnamefont
  {Haah}},\ }{Bifurcation in entanglement renormalization group flow of a
  gapped spin model},\ \href {\doibase 10.1103/PhysRevB.89.075119} {\bibfield
  {journal} {\bibinfo  {journal} {Phys. Rev. B}\ }\textbf {\bibinfo {volume}
  {89}},\ \bibinfo {pages} {75119}},\ \Eprint {http://arxiv.org/abs/1310.4507}
  {arXiv:1310.4507}  (\bibinfo {year} {2014})\BibitemShut {NoStop}%
\bibitem [{\citenamefont {Vidal}(2007)}]{Vidal2007}%
  \BibitemOpen
  \bibfield  {author} {\bibinfo {author} {\bibfnamefont {G.}~\bibnamefont
  {Vidal}},\ }{Entanglement renormalization},\ \href {\doibase
  10.1103/PhysRevLett.99.220405} {\bibfield  {journal} {\bibinfo  {journal}
  {Phys. Rev. Lett.}\ }\textbf {\bibinfo {volume} {99}},\ \bibinfo {pages}
  {220405}},\ \Eprint {http://arxiv.org/abs/cond-mat/0512165}
  {arXiv:cond-mat/0512165}  (\bibinfo {year} {2007})\BibitemShut {NoStop}%
\bibitem [{\citenamefont {K{\"{o}}nig}\ \emph {et~al.}(2009)\citenamefont
  {K{\"{o}}nig}, \citenamefont {Reichardt},\ and\ \citenamefont
  {Vidal}}]{Konig2009}%
  \BibitemOpen
  \bibfield  {author} {\bibinfo {author} {\bibfnamefont {R.}~\bibnamefont
  {K{\"{o}}nig}}, \bibinfo {author} {\bibfnamefont {B.~W.}\ \bibnamefont
  {Reichardt}}, \ and\ \bibinfo {author} {\bibfnamefont {G.}~\bibnamefont
  {Vidal}},\ }{Exact entanglement renormalization for string-net models},\
  \href {\doibase 10.1103/PhysRevB.79.195123} {\bibfield  {journal} {\bibinfo
  {journal} {Phys. Rev. B}\ }\textbf {\bibinfo {volume} {79}},\
  10.1103/PhysRevB.79.195123},\ \Eprint {http://arxiv.org/abs/0806.4583}
  {arXiv:0806.4583}  (\bibinfo {year} {2009})\BibitemShut {NoStop}%
\bibitem [{\citenamefont {Aguado}\ and\ \citenamefont
  {Vidal}(2008)}]{Aguado2008}%
  \BibitemOpen
  \bibfield  {author} {\bibinfo {author} {\bibfnamefont {M.}~\bibnamefont
  {Aguado}}\ and\ \bibinfo {author} {\bibfnamefont {G.}~\bibnamefont {Vidal}},\
  }{Entanglement renormalization and topological order},\ \href
  {http://dx.doi.org/10.1103/PhysRevLett.100.070404} {\bibfield  {journal}
  {\bibinfo  {journal} {Phys. Rev. Lett.}\ }\textbf {\bibinfo {volume} {100}},\
  \bibinfo {pages} {070404}},\ \Eprint {http://arxiv.org/abs/0712.0348}
  {arXiv:0712.0348}  (\bibinfo {year} {2008})\BibitemShut {NoStop}%
\bibitem [{\citenamefont {Haah}(2018)}]{Haah2018a}%
  \BibitemOpen
  \bibfield  {author} {\bibinfo {author} {\bibfnamefont {J.}~\bibnamefont
  {Haah}},\ }{Classification of translation invariant topological Pauli
  stabilizer codes for prime dimensional qudits on two-dimensional lattices},\
  \href {http://arxiv.org/abs/1812.11193} {\ }\Eprint
  {http://arxiv.org/abs/1812.11193} {arXiv:1812.11193}  (\bibinfo {year}
  {2018})\BibitemShut {NoStop}%
\bibitem [{\citenamefont {Bombin}\ \emph {et~al.}(2012)\citenamefont {Bombin},
  \citenamefont {Duclos-Cianci},\ and\ \citenamefont
  {Poulin}}]{bombin2012universal}%
  \BibitemOpen
  \bibfield  {author} {\bibinfo {author} {\bibfnamefont {H.}~\bibnamefont
  {Bombin}}, \bibinfo {author} {\bibfnamefont {G.}~\bibnamefont
  {Duclos-Cianci}}, \ and\ \bibinfo {author} {\bibfnamefont {D.}~\bibnamefont
  {Poulin}},\ }{Universal topological phase of two-dimensional stabilizer
  codes},\ \href {\doibase 10.1088/1367-2630/14/7/073048} {\bibfield  {journal}
  {\bibinfo  {journal} {New J. Phys.}\ }\textbf {\bibinfo {volume} {14}},\
  \bibinfo {pages} {73048}},\ \Eprint {http://arxiv.org/abs/1103.4606}
  {arXiv:1103.4606}  (\bibinfo {year} {2012})\BibitemShut {NoStop}%
\bibitem [{\citenamefont {Bomb{\'{i}}n}(2014)}]{bombin2014structure}%
  \BibitemOpen
  \bibfield  {author} {\bibinfo {author} {\bibfnamefont {H.}~\bibnamefont
  {Bomb{\'{i}}n}},\ }{Structure of 2D Topological Stabilizer Codes},\ \href
  {\doibase 10.1007/s00220-014-1893-4} {\bibfield  {journal} {\bibinfo
  {journal} {Commun. Math. Phys.}\ }\textbf {\bibinfo {volume} {327}},\
  \bibinfo {pages} {387}},\ \Eprint {http://arxiv.org/abs/1107.2707}
  {arXiv:1107.2707}  (\bibinfo {year} {2014})\BibitemShut {NoStop}%
\bibitem [{\citenamefont {Chamon}(2005)}]{chamon2005quantum}%
  \BibitemOpen
  \bibfield  {author} {\bibinfo {author} {\bibfnamefont {C.}~\bibnamefont
  {Chamon}},\ }{Quantum glassiness in strongly correlated clean systems: An
  example of topological overprotection},\ \href {\doibase
  10.1103/PhysRevLett.94.040402} {\bibfield  {journal} {\bibinfo  {journal}
  {Phys. Rev. Lett.}\ }\textbf {\bibinfo {volume} {94}},\ \bibinfo {pages}
  {40402}},\ \Eprint {http://arxiv.org/abs/cond-mat/0404182}
  {arXiv:cond-mat/0404182}  (\bibinfo {year} {2005})\BibitemShut {NoStop}%
\bibitem [{\citenamefont {Castelnovo}\ \emph {et~al.}(2010)\citenamefont
  {Castelnovo}, \citenamefont {Chamon},\ and\ \citenamefont
  {Sherrington}}]{PhysRevB.81.184303}%
  \BibitemOpen
  \bibfield  {author} {\bibinfo {author} {\bibfnamefont {C.}~\bibnamefont
  {Castelnovo}}, \bibinfo {author} {\bibfnamefont {C.}~\bibnamefont {Chamon}},
  \ and\ \bibinfo {author} {\bibfnamefont {D.}~\bibnamefont {Sherrington}},\
  }{Quantum mechanical and information theoretic view on classical glass
  transitions},\ \href {\doibase 10.1103/PhysRevB.81.184303} {\bibfield
  {journal} {\bibinfo  {journal} {Phys. Rev. B}\ }\textbf {\bibinfo {volume}
  {81}},\ \bibinfo {pages} {184303}},\ \Eprint {http://arxiv.org/abs/1003.3832}
  {arXiv:1003.3832}  (\bibinfo {year} {2010})\BibitemShut {NoStop}%
\bibitem [{\citenamefont {Bravyi}\ \emph {et~al.}(2010)\citenamefont {Bravyi},
  \citenamefont {Leemhuis},\ and\ \citenamefont
  {Terhal}}]{bravyi2011topological}%
  \BibitemOpen
  \bibfield  {author} {\bibinfo {author} {\bibfnamefont {S.}~\bibnamefont
  {Bravyi}}, \bibinfo {author} {\bibfnamefont {B.}~\bibnamefont {Leemhuis}}, \
  and\ \bibinfo {author} {\bibfnamefont {B.~M.}\ \bibnamefont {Terhal}},\
  }{Topological order in an exactly solvable 3D spin model},\ \href {\doibase
  10.1016/j.aop.2010.11.002} {\bibfield  {journal} {\bibinfo  {journal} {Ann.
  Phys.}\ }\textbf {\bibinfo {volume} {326}},\ \bibinfo {pages} {839}},\
  \Eprint {http://arxiv.org/abs/1006.4871} {arXiv:1006.4871}  (\bibinfo {year}
  {2010})\BibitemShut {NoStop}%
\bibitem [{\citenamefont {Castelnovo}\ and\ \citenamefont
  {Chamon}(2012)}]{Chamon_quantum_glassiness}%
  \BibitemOpen
  \bibfield  {author} {\bibinfo {author} {\bibfnamefont {C.}~\bibnamefont
  {Castelnovo}}\ and\ \bibinfo {author} {\bibfnamefont {C.}~\bibnamefont
  {Chamon}},\ }{Topological quantum glassiness},\ \href {\doibase
  10.1080/14786435.2011.609152} {\bibfield  {journal} {\bibinfo  {journal}
  {Philosophical Magazine}\ }\textbf {\bibinfo {volume} {92}},\ \bibinfo
  {pages} {304}},\ \Eprint {http://arxiv.org/abs/1108.2051} {arXiv:1108.2051}
  (\bibinfo {year} {2012})\BibitemShut {NoStop}%
\bibitem [{\citenamefont {Ma}\ \emph {et~al.}(2017)\citenamefont {Ma},
  \citenamefont {Lake}, \citenamefont {Chen},\ and\ \citenamefont
  {Hermele}}]{PhysRevB.95.245126}%
  \BibitemOpen
  \bibfield  {author} {\bibinfo {author} {\bibfnamefont {H.}~\bibnamefont
  {Ma}}, \bibinfo {author} {\bibfnamefont {E.}~\bibnamefont {Lake}}, \bibinfo
  {author} {\bibfnamefont {X.}~\bibnamefont {Chen}}, \ and\ \bibinfo {author}
  {\bibfnamefont {M.}~\bibnamefont {Hermele}},\ }{Fracton topological order via
  coupled layers},\ \href {\doibase 10.1103/PhysRevB.95.245126} {\bibfield
  {journal} {\bibinfo  {journal} {Phys. Rev. B}\ }\textbf {\bibinfo {volume}
  {95}},\ \bibinfo {pages} {245126}},\ \Eprint
  {http://arxiv.org/abs/1701.00747} {arXiv:1701.00747}  (\bibinfo {year}
  {2017})\BibitemShut {NoStop}%
\bibitem [{\citenamefont {Vijay}\ \emph {et~al.}(2016)\citenamefont {Vijay},
  \citenamefont {Haah},\ and\ \citenamefont {Fu}}]{vijay2016fracton}%
  \BibitemOpen
  \bibfield  {author} {\bibinfo {author} {\bibfnamefont {S.}~\bibnamefont
  {Vijay}}, \bibinfo {author} {\bibfnamefont {J.}~\bibnamefont {Haah}}, \ and\
  \bibinfo {author} {\bibfnamefont {L.}~\bibnamefont {Fu}},\ }{Fracton
  Topological Order, Generalized Lattice Gauge Theory and Duality},\ \href
  {\doibase 10.1103/PhysRevB.94.235157} {\bibfield  {journal} {\bibinfo
  {journal} {Phys. Rev. B}\ }\textbf {\bibinfo {volume} {94}},\ \bibinfo
  {pages} {235157}},\ \Eprint {http://arxiv.org/abs/1603.04442}
  {arXiv:1603.04442}  (\bibinfo {year} {2016})\BibitemShut {NoStop}%
\bibitem [{\citenamefont {Williamson}(2016)}]{Williamson_cubic_code}%
  \BibitemOpen
  \bibfield  {author} {\bibinfo {author} {\bibfnamefont {D.~J.}\ \bibnamefont
  {Williamson}},\ }{Fractal symmetries: Ungauging the cubic code},\ \href
  {\doibase 10.1103/PhysRevB.94.155128} {\bibfield  {journal} {\bibinfo
  {journal} {Phys. Rev. B}\ }\textbf {\bibinfo {volume} {94}},\ \bibinfo
  {pages} {155128}},\ \Eprint {http://arxiv.org/abs/1603.05182}
  {arXiv:1603.05182}  (\bibinfo {year} {2016})\BibitemShut {NoStop}%
\bibitem [{\citenamefont {Vijay}(2017)}]{vijay2017isotropic}%
  \BibitemOpen
  \bibfield  {author} {\bibinfo {author} {\bibfnamefont {S.}~\bibnamefont
  {Vijay}},\ }{Isotropic Layer Construction and Phase Diagram for Fracton
  Topological Phases},\ \href {http://arxiv.org/abs/1701.00762} {\ }\Eprint
  {http://arxiv.org/abs/1701.00762} {arXiv:1701.00762}  (\bibinfo {year}
  {2017})\BibitemShut {NoStop}%
\bibitem [{\citenamefont {Vijay}\ and\ \citenamefont
  {Fu}(2017)}]{vijay2017generalization}%
  \BibitemOpen
  \bibfield  {author} {\bibinfo {author} {\bibfnamefont {S.}~\bibnamefont
  {Vijay}}\ and\ \bibinfo {author} {\bibfnamefont {L.}~\bibnamefont {Fu}},\ }{A
  Generalization of Non-Abelian Anyons in Three Dimensions},\ \href
  {http://arxiv.org/abs/1706.07070} {\ }\Eprint
  {http://arxiv.org/abs/1706.07070} {arXiv:1706.07070}  (\bibinfo {year}
  {2017})\BibitemShut {NoStop}%
\bibitem [{\citenamefont {Slagle}\ and\ \citenamefont
  {Kim}(2017)}]{PhysRevB.96.165106}%
  \BibitemOpen
  \bibfield  {author} {\bibinfo {author} {\bibfnamefont {K.}~\bibnamefont
  {Slagle}}\ and\ \bibinfo {author} {\bibfnamefont {Y.~B.}\ \bibnamefont
  {Kim}},\ }{Fracton topological order from nearest-neighbor two-spin
  interactions and dualities},\ \href {\doibase 10.1103/PhysRevB.96.165106}
  {\bibfield  {journal} {\bibinfo  {journal} {Phys. Rev. B}\ }\textbf {\bibinfo
  {volume} {96}},\ \bibinfo {pages} {165106}},\ \Eprint
  {http://arxiv.org/abs/1704.03870} {arXiv:1704.03870}  (\bibinfo {year}
  {2017})\BibitemShut {NoStop}%
\bibitem [{\citenamefont {Hal{\'{a}}sz}\ \emph {et~al.}(2017)\citenamefont
  {Hal{\'{a}}sz}, \citenamefont {Hsieh},\ and\ \citenamefont
  {Balents}}]{HHB_models}%
  \BibitemOpen
  \bibfield  {author} {\bibinfo {author} {\bibfnamefont {G.~B.}\ \bibnamefont
  {Hal{\'{a}}sz}}, \bibinfo {author} {\bibfnamefont {T.~H.}\ \bibnamefont
  {Hsieh}}, \ and\ \bibinfo {author} {\bibfnamefont {L.}~\bibnamefont
  {Balents}},\ }{Fracton Topological Phases from Strongly Coupled Spin
  Chains},\ \href {\doibase 10.1103/PhysRevLett.119.257202} {\bibfield
  {journal} {\bibinfo  {journal} {Phys. Rev. Lett.}\ }\textbf {\bibinfo
  {volume} {119}},\ \bibinfo {pages} {257202}},\ \Eprint
  {http://arxiv.org/abs/1707.02308} {arXiv:1707.02308}  (\bibinfo {year}
  {2017})\BibitemShut {NoStop}%
\bibitem [{\citenamefont {Devakul}(2018{\natexlab{a}})}]{PhysRevB.97.155111}%
  \BibitemOpen
  \bibfield  {author} {\bibinfo {author} {\bibfnamefont {T.}~\bibnamefont
  {Devakul}},\ }{$Z_3$ topological order in the face-centered-cubic quantum
  plaquette model},\ \href {\doibase 10.1103/PhysRevB.97.155111} {\bibfield
  {journal} {\bibinfo  {journal} {Phys. Rev. B}\ }\textbf {\bibinfo {volume}
  {97}},\ \bibinfo {pages} {155111}},\ \Eprint
  {http://arxiv.org/abs/1712.05377} {arXiv:1712.05377}  (\bibinfo {year}
  {2018}{\natexlab{a}})\BibitemShut {NoStop}%
\bibitem [{\citenamefont {Devakul}\ \emph
  {et~al.}(2018{\natexlab{a}})\citenamefont {Devakul}, \citenamefont
  {Parameswaran},\ and\ \citenamefont {Sondhi}}]{PhysRevB.97.041110}%
  \BibitemOpen
  \bibfield  {author} {\bibinfo {author} {\bibfnamefont {T.}~\bibnamefont
  {Devakul}}, \bibinfo {author} {\bibfnamefont {S.~A.}\ \bibnamefont
  {Parameswaran}}, \ and\ \bibinfo {author} {\bibfnamefont {S.~L.}\
  \bibnamefont {Sondhi}},\ }{Correlation function diagnostics for type-I
  fracton phases},\ \href {\doibase 10.1103/PhysRevB.97.041110} {\bibfield
  {journal} {\bibinfo  {journal} {Phys. Rev. B}\ }\textbf {\bibinfo {volume}
  {97}},\ \bibinfo {pages} {41110}},\ \Eprint {http://arxiv.org/abs/1709.10071}
  {arXiv:1709.10071}  (\bibinfo {year} {2018}{\natexlab{a}})\BibitemShut
  {NoStop}%
\bibitem [{\citenamefont {Prem}\ \emph {et~al.}(2019)\citenamefont {Prem},
  \citenamefont {Huang}, \citenamefont {Song},\ and\ \citenamefont
  {Hermele}}]{prem2018cage}%
  \BibitemOpen
  \bibfield  {author} {\bibinfo {author} {\bibfnamefont {A.}~\bibnamefont
  {Prem}}, \bibinfo {author} {\bibfnamefont {S.-J.}\ \bibnamefont {Huang}},
  \bibinfo {author} {\bibfnamefont {H.}~\bibnamefont {Song}}, \ and\ \bibinfo
  {author} {\bibfnamefont {M.}~\bibnamefont {Hermele}},\ }Cage-net fracton
  models,\ \href {\doibase 10.1103/PhysRevX.9.021010} {\bibfield  {journal}
  {\bibinfo  {journal} {Phys. Rev. X}\ }\textbf {\bibinfo {volume} {9}},\
  \bibinfo {pages} {021010}},\ \Eprint {http://arxiv.org/abs/1806.04687}
  {arXiv:1806.04687}  (\bibinfo {year} {2019})\BibitemShut {NoStop}%
\bibitem [{\citenamefont {Bulmash}\ and\ \citenamefont
  {Iadecola}(2019)}]{Bulmash2018}%
  \BibitemOpen
  \bibfield  {author} {\bibinfo {author} {\bibfnamefont {D.}~\bibnamefont
  {Bulmash}}\ and\ \bibinfo {author} {\bibfnamefont {T.}~\bibnamefont
  {Iadecola}},\ }Braiding and gapped boundaries in fracton topological phases,\
  \href {\doibase 10.1103/PhysRevB.99.125132} {\bibfield  {journal} {\bibinfo
  {journal} {Phys. Rev. B}\ }\textbf {\bibinfo {volume} {99}},\ \bibinfo
  {pages} {125132}},\ \Eprint {http://arxiv.org/abs/1810.00012}
  {arXiv:1810.00012}  (\bibinfo {year} {2019})\BibitemShut {NoStop}%
\bibitem [{\citenamefont {Song}\ \emph
  {et~al.}(2019{\natexlab{a}})\citenamefont {Song}, \citenamefont {Prem},
  \citenamefont {Huang},\ and\ \citenamefont {Martin-Delgado}}]{hao_twisted}%
  \BibitemOpen
  \bibfield  {author} {\bibinfo {author} {\bibfnamefont {H.}~\bibnamefont
  {Song}}, \bibinfo {author} {\bibfnamefont {A.}~\bibnamefont {Prem}}, \bibinfo
  {author} {\bibfnamefont {S.-J.}\ \bibnamefont {Huang}}, \ and\ \bibinfo
  {author} {\bibfnamefont {M.~A.}\ \bibnamefont {Martin-Delgado}},\ }Twisted
  fracton models in three dimensions,\ \href {\doibase
  10.1103/PhysRevB.99.155118} {\bibfield  {journal} {\bibinfo  {journal} {Phys.
  Rev. B}\ }\textbf {\bibinfo {volume} {99}},\ \bibinfo {pages} {155118}},\
  \Eprint {http://arxiv.org/abs/1805.06899} {arXiv:1805.06899}  (\bibinfo
  {year} {2019}{\natexlab{a}})\BibitemShut {NoStop}%
\bibitem [{\citenamefont {Brown}\ and\ \citenamefont
  {Williamson}(2019)}]{Brown2019}%
  \BibitemOpen
  \bibfield  {author} {\bibinfo {author} {\bibfnamefont {B.~J.}\ \bibnamefont
  {Brown}}\ and\ \bibinfo {author} {\bibfnamefont {D.~J.}\ \bibnamefont
  {Williamson}},\ }{Parallelized quantum error correction with fracton
  topological codes},\ \href {http://arxiv.org/abs/1901.08061} {\ }\Eprint
  {http://arxiv.org/abs/1901.08061} {arXiv:1901.08061}  (\bibinfo {year}
  {2019})\BibitemShut {NoStop}%
\bibitem [{\citenamefont {Weinstein}\ \emph {et~al.}(2018)\citenamefont
  {Weinstein}, \citenamefont {Cobanera}, \citenamefont {Ortiz},\ and\
  \citenamefont {Nussinov}}]{finite_temp_Xcube}%
  \BibitemOpen
  \bibfield  {author} {\bibinfo {author} {\bibfnamefont {Z.}~\bibnamefont
  {Weinstein}}, \bibinfo {author} {\bibfnamefont {E.}~\bibnamefont {Cobanera}},
  \bibinfo {author} {\bibfnamefont {G.}~\bibnamefont {Ortiz}}, \ and\ \bibinfo
  {author} {\bibfnamefont {Z.}~\bibnamefont {Nussinov}},\ }Absence of finite
  temperature phase transitions in the x-cube model and its zp generalization,\
  \href@noop {} {\ }\Eprint {http://arxiv.org/abs/1812.04561}
  {arXiv:1812.04561}  (\bibinfo {year} {2018})\BibitemShut {NoStop}%
\bibitem [{\citenamefont {Hsieh}\ and\ \citenamefont
  {Hal\'asz}(2017)}]{hsieh_halasz_partons}%
  \BibitemOpen
  \bibfield  {author} {\bibinfo {author} {\bibfnamefont {T.~H.}\ \bibnamefont
  {Hsieh}}\ and\ \bibinfo {author} {\bibfnamefont {G.~B.}\ \bibnamefont
  {Hal\'asz}},\ }Fractons from partons,\ \href {\doibase
  10.1103/PhysRevB.96.165105} {\bibfield  {journal} {\bibinfo  {journal} {Phys.
  Rev. B}\ }\textbf {\bibinfo {volume} {96}},\ \bibinfo {pages} {165105}},\
  \Eprint {http://arxiv.org/abs/1703.02973} {arXiv:1703.02973}  (\bibinfo
  {year} {2017})\BibitemShut {NoStop}%
\bibitem [{\citenamefont {Prem}\ and\ \citenamefont
  {Williamson}(2019)}]{Prem2019}%
  \BibitemOpen
  \bibfield  {author} {\bibinfo {author} {\bibfnamefont {A.}~\bibnamefont
  {Prem}}\ and\ \bibinfo {author} {\bibfnamefont {D.~J.}\ \bibnamefont
  {Williamson}},\ }{Gauging permutation symmetries as a route to non-Abelian
  fractons},\ \href {http://arxiv.org/abs/1905.06309} {\ }\Eprint
  {http://arxiv.org/abs/1905.06309} {arXiv:1905.06309}  (\bibinfo {year}
  {2019})\BibitemShut {NoStop}%
\bibitem [{\citenamefont {Bulmash}\ and\ \citenamefont
  {Barkeshli}(2019)}]{Bulmash2019}%
  \BibitemOpen
  \bibfield  {author} {\bibinfo {author} {\bibfnamefont {D.}~\bibnamefont
  {Bulmash}}\ and\ \bibinfo {author} {\bibfnamefont {M.}~\bibnamefont
  {Barkeshli}},\ }{Gauging fractons: immobile non-Abelian quasiparticles,
  fractals, and position-dependent degeneracies},\ \href
  {http://arxiv.org/abs/1905.05771} {\ }\Eprint
  {http://arxiv.org/abs/1905.05771} {arXiv:1905.05771}  (\bibinfo {year}
  {2019})\BibitemShut {NoStop}%
\bibitem [{\citenamefont {Dua}\ \emph {et~al.}(2019{\natexlab{a}})\citenamefont
  {Dua}, \citenamefont {Kim}, \citenamefont {Cheng},\ and\ \citenamefont
  {Williamson}}]{Dua_Classification_2019}%
  \BibitemOpen
  \bibfield  {author} {\bibinfo {author} {\bibfnamefont {A.}~\bibnamefont
  {Dua}}, \bibinfo {author} {\bibfnamefont {I.~H.}\ \bibnamefont {Kim}},
  \bibinfo {author} {\bibfnamefont {M.}~\bibnamefont {Cheng}}, \ and\ \bibinfo
  {author} {\bibfnamefont {D.~J.}\ \bibnamefont {Williamson}},\ }Sorting
  topological stabilizer models in three dimensions,\ \href@noop {} {\ }\Eprint
  {http://arxiv.org/abs/1908.08049} {arXiv:1908.08049}  (\bibinfo {year}
  {2019}{\natexlab{a}})\BibitemShut {NoStop}%
\bibitem [{\citenamefont {Dua}\ \emph {et~al.}(2019{\natexlab{b}})\citenamefont
  {Dua}, \citenamefont {Williamson}, \citenamefont {Haah},\ and\ \citenamefont
  {Cheng}}]{Dua2019_compactify}%
  \BibitemOpen
  \bibfield  {author} {\bibinfo {author} {\bibfnamefont {A.}~\bibnamefont
  {Dua}}, \bibinfo {author} {\bibfnamefont {D.~J.}\ \bibnamefont {Williamson}},
  \bibinfo {author} {\bibfnamefont {J.}~\bibnamefont {Haah}}, \ and\ \bibinfo
  {author} {\bibfnamefont {M.}~\bibnamefont {Cheng}},\ }{Compactifying fracton
  stabilizer models},\ \href {http://arxiv.org/abs/1903.12246} {\bibfield
  {journal} {\bibinfo  {journal} {Phys. Rev. B}\ }\textbf {\bibinfo {volume}
  {99}},\ \bibinfo {pages} {245135}},\ \Eprint
  {http://arxiv.org/abs/1903.12246} {arXiv:1903.12246}  (\bibinfo {year}
  {2019}{\natexlab{b}})\BibitemShut {NoStop}%
\bibitem [{\citenamefont {Evenbly}\ and\ \citenamefont
  {Vidal}(2014{\natexlab{a}})}]{Evenbly2014real}%
  \BibitemOpen
  \bibfield  {author} {\bibinfo {author} {\bibfnamefont {G.}~\bibnamefont
  {Evenbly}}\ and\ \bibinfo {author} {\bibfnamefont {G.}~\bibnamefont
  {Vidal}},\ }{Real-Space Decoupling Transformation for Quantum Many-Body
  Systems},\ \href {\doibase 10.1103/PhysRevLett.112.220502} {\bibfield
  {journal} {\bibinfo  {journal} {Phys. Rev. Lett.}\ }\textbf {\bibinfo
  {volume} {112}},\ \bibinfo {pages} {220502}} (\bibinfo {year}
  {2014}{\natexlab{a}})\BibitemShut {NoStop}%
\bibitem [{\citenamefont {Evenbly}\ and\ \citenamefont
  {Vidal}(2014{\natexlab{b}})}]{Evenbly2014class}%
  \BibitemOpen
  \bibfield  {author} {\bibinfo {author} {\bibfnamefont {G.}~\bibnamefont
  {Evenbly}}\ and\ \bibinfo {author} {\bibfnamefont {G.}~\bibnamefont
  {Vidal}},\ }{Class of Highly Entangled Many-Body States that can be
  Efficiently Simulated},\ \href {\doibase 10.1103/PhysRevLett.112.240502}
  {\bibfield  {journal} {\bibinfo  {journal} {Phys. Rev. Lett.}\ }\textbf
  {\bibinfo {volume} {112}},\ \bibinfo {pages} {240502}} (\bibinfo {year}
  {2014}{\natexlab{b}})\BibitemShut {NoStop}%
\bibitem [{\citenamefont {Evenbly}\ and\ \citenamefont
  {Vidal}(2014{\natexlab{c}})}]{Evenbly2014scaling}%
  \BibitemOpen
  \bibfield  {author} {\bibinfo {author} {\bibfnamefont {G.}~\bibnamefont
  {Evenbly}}\ and\ \bibinfo {author} {\bibfnamefont {G.}~\bibnamefont
  {Vidal}},\ }{Scaling of entanglement entropy in the (branching) multiscale
  entanglement renormalization ansatz},\ \href {\doibase
  10.1103/PhysRevB.89.235113} {\bibfield  {journal} {\bibinfo  {journal} {Phys.
  Rev. B}\ }\textbf {\bibinfo {volume} {89}},\ \bibinfo {pages} {235113}}
  (\bibinfo {year} {2014}{\natexlab{c}})\BibitemShut {NoStop}%
\bibitem [{\citenamefont {Chen}\ \emph {et~al.}(2010)\citenamefont {Chen},
  \citenamefont {Gu},\ and\ \citenamefont {Wen}}]{chen2010local}%
  \BibitemOpen
  \bibfield  {author} {\bibinfo {author} {\bibfnamefont {X.}~\bibnamefont
  {Chen}}, \bibinfo {author} {\bibfnamefont {Z.~C.}\ \bibnamefont {Gu}}, \ and\
  \bibinfo {author} {\bibfnamefont {X.~G.}\ \bibnamefont {Wen}},\ }{Local
  unitary transformation, long-range quantum entanglement, wave function
  renormalization, and topological order},\ \href {\doibase
  10.1103/PhysRevB.82.155138} {\bibfield  {journal} {\bibinfo  {journal} {Phys.
  Rev. B}\ }\textbf {\bibinfo {volume} {82}},\ \bibinfo {pages} {155138}},\
  \Eprint {http://arxiv.org/abs/1004.3835} {arXiv:1004.3835}  (\bibinfo {year}
  {2010})\BibitemShut {NoStop}%
\bibitem [{\citenamefont {Swingle}\ and\ \citenamefont
  {McGreevy}(2016{\natexlab{a}})}]{swingle_greevy1}%
  \BibitemOpen
  \bibfield  {author} {\bibinfo {author} {\bibfnamefont {B.}~\bibnamefont
  {Swingle}}\ and\ \bibinfo {author} {\bibfnamefont {J.}~\bibnamefont
  {McGreevy}},\ }Renormalization group constructions of topological quantum
  liquids and beyond,\ \href {\doibase 10.1103/PhysRevB.93.045127} {\bibfield
  {journal} {\bibinfo  {journal} {Phys. Rev. B}\ }\textbf {\bibinfo {volume}
  {93}},\ \bibinfo {pages} {045127}} (\bibinfo {year}
  {2016}{\natexlab{a}})\BibitemShut {NoStop}%
\bibitem [{\citenamefont {Swingle}\ and\ \citenamefont
  {McGreevy}(2016{\natexlab{b}})}]{swingle_greevy2}%
  \BibitemOpen
  \bibfield  {author} {\bibinfo {author} {\bibfnamefont {B.}~\bibnamefont
  {Swingle}}\ and\ \bibinfo {author} {\bibfnamefont {J.}~\bibnamefont
  {McGreevy}},\ }Mixed $s$-sourcery: Building many-body states using bubbles of
  nothing,\ \href {\doibase 10.1103/PhysRevB.94.155125} {\bibfield  {journal}
  {\bibinfo  {journal} {Phys. Rev. B}\ }\textbf {\bibinfo {volume} {94}},\
  \bibinfo {pages} {155125}} (\bibinfo {year} {2016}{\natexlab{b}})\BibitemShut
  {NoStop}%
\bibitem [{\citenamefont {Shirley}\ \emph
  {et~al.}(2019{\natexlab{a}})\citenamefont {Shirley}, \citenamefont {Slagle},\
  and\ \citenamefont {Chen}}]{shirley2018FoliatedFracton}%
  \BibitemOpen
  \bibfield  {author} {\bibinfo {author} {\bibfnamefont {W.}~\bibnamefont
  {Shirley}}, \bibinfo {author} {\bibfnamefont {K.}~\bibnamefont {Slagle}}, \
  and\ \bibinfo {author} {\bibfnamefont {X.}~\bibnamefont {Chen}},\ }{Foliated
  fracton order from gauging subsystem symmetries},\ \href {\doibase
  10.21468/SciPostPhys.6.4.041} {\bibfield  {journal} {\bibinfo  {journal}
  {SciPost Phys.}\ }\textbf {\bibinfo {volume} {6}},\ \bibinfo {pages} {41}},\
  \Eprint {http://arxiv.org/abs/1806.08679} {arXiv:1806.08679}  (\bibinfo
  {year} {2019}{\natexlab{a}})\BibitemShut {NoStop}%
\bibitem [{\citenamefont {Shirley}\ \emph
  {et~al.}(2018{\natexlab{a}})\citenamefont {Shirley}, \citenamefont {Slagle},
  \citenamefont {Wang},\ and\ \citenamefont {Chen}}]{shirley2017fracton}%
  \BibitemOpen
  \bibfield  {author} {\bibinfo {author} {\bibfnamefont {W.}~\bibnamefont
  {Shirley}}, \bibinfo {author} {\bibfnamefont {K.}~\bibnamefont {Slagle}},
  \bibinfo {author} {\bibfnamefont {Z.}~\bibnamefont {Wang}}, \ and\ \bibinfo
  {author} {\bibfnamefont {X.}~\bibnamefont {Chen}},\ }Fracton models on
  general three-dimensional manifolds,\ \href
  {https://journals.aps.org/prx/abstract/10.1103/PhysRevX.8.031051} {\bibfield
  {journal} {\bibinfo  {journal} {Phys. Rev. X}\ }\textbf {\bibinfo {volume}
  {8}},\ \bibinfo {pages} {031051}},\ \Eprint {http://arxiv.org/abs/1712.05892}
  {arXiv:1712.05892}  (\bibinfo {year} {2018}{\natexlab{a}})\BibitemShut
  {NoStop}%
\bibitem [{\citenamefont {Shirley}\ \emph
  {et~al.}(2018{\natexlab{b}})\citenamefont {Shirley}, \citenamefont {Slagle},\
  and\ \citenamefont {Chen}}]{shirley2018Foliated}%
  \BibitemOpen
  \bibfield  {author} {\bibinfo {author} {\bibfnamefont {W.}~\bibnamefont
  {Shirley}}, \bibinfo {author} {\bibfnamefont {K.}~\bibnamefont {Slagle}}, \
  and\ \bibinfo {author} {\bibfnamefont {X.}~\bibnamefont {Chen}},\ }{Foliated
  fracton order in the checkerboard model},\ \href
  {http://arxiv.org/abs/1806.08633} {\ }\Eprint
  {http://arxiv.org/abs/1806.08633} {arXiv:1806.08633}  (\bibinfo {year}
  {2018}{\natexlab{b}})\BibitemShut {NoStop}%
\bibitem [{\citenamefont {Shirley}\ \emph
  {et~al.}(2019{\natexlab{b}})\citenamefont {Shirley}, \citenamefont {Slagle},\
  and\ \citenamefont {Chen}}]{shirley2018universal}%
  \BibitemOpen
  \bibfield  {author} {\bibinfo {author} {\bibfnamefont {W.}~\bibnamefont
  {Shirley}}, \bibinfo {author} {\bibfnamefont {K.}~\bibnamefont {Slagle}}, \
  and\ \bibinfo {author} {\bibfnamefont {X.}~\bibnamefont {Chen}},\ }{Universal
  entanglement signatures of foliated fracton phases},\ \href {\doibase
  10.21468/scipostphys.6.1.015} {\bibfield  {journal} {\bibinfo  {journal}
  {SciPost Phys.}\ }\textbf {\bibinfo {volume} {6}},\ \bibinfo {pages} {1}},\
  \Eprint {http://arxiv.org/abs/1803.10426} {arXiv:1803.10426}  (\bibinfo
  {year} {2019}{\natexlab{b}})\BibitemShut {NoStop}%
\bibitem [{\citenamefont {Shirley}\ \emph
  {et~al.}(2018{\natexlab{c}})\citenamefont {Shirley}, \citenamefont {Slagle},\
  and\ \citenamefont {Chen}}]{shirley2018Fractional}%
  \BibitemOpen
  \bibfield  {author} {\bibinfo {author} {\bibfnamefont {W.}~\bibnamefont
  {Shirley}}, \bibinfo {author} {\bibfnamefont {K.}~\bibnamefont {Slagle}}, \
  and\ \bibinfo {author} {\bibfnamefont {X.}~\bibnamefont {Chen}},\
  }{Fractional excitations in foliated fracton phases},\ \href
  {http://arxiv.org/abs/1806.08625} {\ }\Eprint
  {http://arxiv.org/abs/1806.08625} {arXiv:1806.08625}  (\bibinfo {year}
  {2018}{\natexlab{c}})\BibitemShut {NoStop}%
\bibitem [{\citenamefont {Haah}(2011)}]{haah2011local}%
  \BibitemOpen
  \bibfield  {author} {\bibinfo {author} {\bibfnamefont {J.}~\bibnamefont
  {Haah}},\ }{Local stabilizer codes in three dimensions without string logical
  operators},\ \href {\doibase 10.1103/PhysRevA.83.042330} {\bibfield
  {journal} {\bibinfo  {journal} {Phys. Rev. A}\ }\textbf {\bibinfo {volume}
  {83}},\ \bibinfo {pages} {42330}},\ \Eprint {http://arxiv.org/abs/1101.1962}
  {arXiv:1101.1962}  (\bibinfo {year} {2011})\BibitemShut {NoStop}%
\bibitem [{\citenamefont {Yoshida}(2013)}]{yoshida2013exotic}%
  \BibitemOpen
  \bibfield  {author} {\bibinfo {author} {\bibfnamefont {B.}~\bibnamefont
  {Yoshida}},\ }{Exotic topological order in fractal spin liquids},\ \href
  {\doibase 10.1103/PhysRevB.88.125122} {\bibfield  {journal} {\bibinfo
  {journal} {Phys. Rev. B}\ }\textbf {\bibinfo {volume} {88}},\ \bibinfo
  {pages} {125122}},\ \Eprint {http://arxiv.org/abs/1302.6248}
  {arXiv:1302.6248}  (\bibinfo {year} {2013})\BibitemShut {NoStop}%
\bibitem [{\citenamefont {Zeng}\ \emph {et~al.}(2019)\citenamefont {Zeng},
  \citenamefont {Chen}, \citenamefont {Zhou},\ and\ \citenamefont
  {Wen}}]{qimqm}%
  \BibitemOpen
  \bibfield  {author} {\bibinfo {author} {\bibfnamefont {B.}~\bibnamefont
  {Zeng}}, \bibinfo {author} {\bibfnamefont {X.}~\bibnamefont {Chen}}, \bibinfo
  {author} {\bibfnamefont {D.-L.}\ \bibnamefont {Zhou}}, \ and\ \bibinfo
  {author} {\bibfnamefont {X.-G.}\ \bibnamefont {Wen}},\ }\href@noop {} {\emph
  {\bibinfo {title} {Quantum Information Meets Quantum Matter}}},\ \bibinfo
  {edition} {economy}\ ed.,\ Quantum Science and Technology\ (\bibinfo
  {publisher} {Springer-Verlag New York},\ \bibinfo {year} {2019})\ pp.\
  \bibinfo {pages} {XXII, 364}\BibitemShut {NoStop}%
\bibitem [{\citenamefont {Hastings}\ and\ \citenamefont
  {Wen}(2005)}]{hastings2005quasiadiabatic}%
  \BibitemOpen
  \bibfield  {author} {\bibinfo {author} {\bibfnamefont {M.~B.}\ \bibnamefont
  {Hastings}}\ and\ \bibinfo {author} {\bibfnamefont {X.~G.}\ \bibnamefont
  {Wen}},\ }{Quasiadiabatic continuation of quantum states: The stability of
  topological ground-state degeneracy and emergent gauge invariance},\ \href
  {\doibase 10.1103/PhysRevB.72.045141} {\bibfield  {journal} {\bibinfo
  {journal} {Phys. Rev. B}\ }\textbf {\bibinfo {volume} {72}},\ \bibinfo
  {pages} {45141}},\ \Eprint {http://arxiv.org/abs/cond-mat/0503554}
  {arXiv:cond-mat/0503554}  (\bibinfo {year} {2005})\BibitemShut {NoStop}%
\bibitem [{\citenamefont {Haah}\ \emph {et~al.}(2018)\citenamefont {Haah},
  \citenamefont {Fidkowski},\ and\ \citenamefont {Hastings}}]{Haah2018}%
  \BibitemOpen
  \bibfield  {author} {\bibinfo {author} {\bibfnamefont {J.}~\bibnamefont
  {Haah}}, \bibinfo {author} {\bibfnamefont {L.}~\bibnamefont {Fidkowski}}, \
  and\ \bibinfo {author} {\bibfnamefont {M.~B.}\ \bibnamefont {Hastings}},\
  }{Nontrivial Quantum Cellular Automata in Higher Dimensions},\ \href
  {http://arxiv.org/abs/1812.01625} {\ }\Eprint
  {http://arxiv.org/abs/1812.01625} {arXiv:1812.01625}  (\bibinfo {year}
  {2018})\BibitemShut {NoStop}%
\bibitem [{\citenamefont {Kitaev}(2003)}]{qdouble}%
  \BibitemOpen
  \bibfield  {author} {\bibinfo {author} {\bibfnamefont {A.~Y.}\ \bibnamefont
  {Kitaev}},\ }{Fault-tolerant quantum computation by anyons},\ \href {\doibase
  10.1016/S0003-4916(02)00018-0} {\bibfield  {journal} {\bibinfo  {journal}
  {Ann. Phys.}\ }\textbf {\bibinfo {volume} {303}},\ \bibinfo {pages} {2}},\
  \Eprint {http://arxiv.org/abs/quant-ph/9707021} {arXiv:quant-ph/9707021}
  (\bibinfo {year} {2003})\BibitemShut {NoStop}%
\bibitem [{\citenamefont {Levin}\ and\ \citenamefont {Wen}(2005)}]{Levin2005}%
  \BibitemOpen
  \bibfield  {author} {\bibinfo {author} {\bibfnamefont {M.~A.}\ \bibnamefont
  {Levin}}\ and\ \bibinfo {author} {\bibfnamefont {X.~G.}\ \bibnamefont
  {Wen}},\ }{String-net condensation: A physical mechanism for topological
  phases},\ \href {\doibase 10.1103/PhysRevB.71.045110} {\bibfield  {journal}
  {\bibinfo  {journal} {Phys. Rev. B}\ }\textbf {\bibinfo {volume} {71}},\
  \bibinfo {pages} {045110}},\ \Eprint {http://arxiv.org/abs/cond-mat/0404617}
  {arXiv:cond-mat/0404617}  (\bibinfo {year} {2005})\BibitemShut {NoStop}%
\bibitem [{\citenamefont {Koenig}\ \emph {et~al.}(2010)\citenamefont {Koenig},
  \citenamefont {Kuperberg},\ and\ \citenamefont
  {Reichardt}}]{koenig2010quantum}%
  \BibitemOpen
  \bibfield  {author} {\bibinfo {author} {\bibfnamefont {R.}~\bibnamefont
  {Koenig}}, \bibinfo {author} {\bibfnamefont {G.}~\bibnamefont {Kuperberg}}, \
  and\ \bibinfo {author} {\bibfnamefont {B.~W.}\ \bibnamefont {Reichardt}},\
  }{Quantum computation with Turaev-Viro codes},\ \href {\doibase
  10.1016/j.aop.2010.08.001} {\bibfield  {journal} {\bibinfo  {journal} {Ann.
  Phys.}\ }\textbf {\bibinfo {volume} {325}},\ \bibinfo {pages} {2707}},\
  \Eprint {http://arxiv.org/abs/1002.2816} {arXiv:1002.2816}  (\bibinfo {year}
  {2010})\BibitemShut {NoStop}%
\bibitem [{\citenamefont {Walker}\ and\ \citenamefont
  {Wang}(2012)}]{walker2012}%
  \BibitemOpen
  \bibfield  {author} {\bibinfo {author} {\bibfnamefont {K.}~\bibnamefont
  {Walker}}\ and\ \bibinfo {author} {\bibfnamefont {Z.}~\bibnamefont {Wang}},\
  }{(3+1)-TQFTs and topological insulators},\ \href {\doibase
  10.1007/s11467-011-0194-z} {\bibfield  {journal} {\bibinfo  {journal}
  {Frontiers of Physics}\ }\textbf {\bibinfo {volume} {7}},\ \bibinfo {pages}
  {150}},\ \Eprint {http://arxiv.org/abs/1104.2632} {arXiv:1104.2632}
  (\bibinfo {year} {2012})\BibitemShut {NoStop}%
\bibitem [{\citenamefont {Williamson}\ and\ \citenamefont
  {Wang}(2017)}]{williamson2016hamiltonian}%
  \BibitemOpen
  \bibfield  {author} {\bibinfo {author} {\bibfnamefont {D.~J.}\ \bibnamefont
  {Williamson}}\ and\ \bibinfo {author} {\bibfnamefont {Z.}~\bibnamefont
  {Wang}},\ }{Hamiltonian models for topological phases of matter in three
  spatial dimensions},\ \href {\doibase 10.1016/j.aop.2016.12.018} {\bibfield
  {journal} {\bibinfo  {journal} {Ann. Phys.}\ }\textbf {\bibinfo {volume}
  {377}},\ \bibinfo {pages} {311}},\ \Eprint {http://arxiv.org/abs/1606.07144}
  {arXiv:1606.07144}  (\bibinfo {year} {2017})\BibitemShut {NoStop}%
\bibitem [{\citenamefont {Calderbank}\ and\ \citenamefont
  {Shor}(1996)}]{PhysRevA.54.1098}%
  \BibitemOpen
  \bibfield  {author} {\bibinfo {author} {\bibfnamefont {A.~R.}\ \bibnamefont
  {Calderbank}}\ and\ \bibinfo {author} {\bibfnamefont {P.~W.}\ \bibnamefont
  {Shor}},\ }{Good quantum error-correcting codes exist},\ \href {\doibase
  10.1103/PhysRevA.54.1098} {\bibfield  {journal} {\bibinfo  {journal} {Phys.
  Rev. A}\ }\textbf {\bibinfo {volume} {54}},\ \bibinfo {pages} {1098}},\
  \Eprint {http://arxiv.org/abs/quant-ph/9512032} {arXiv:quant-ph/9512032}
  (\bibinfo {year} {1996})\BibitemShut {NoStop}%
\bibitem [{\citenamefont {Steane}(1996)}]{Steane2551}%
  \BibitemOpen
  \bibfield  {author} {\bibinfo {author} {\bibfnamefont {A.}~\bibnamefont
  {Steane}},\ }{Multiple-particle interference and quantum error correction},\
  \href {\doibase 10.1098/rspa.1996.0136} {\bibfield  {journal} {\bibinfo
  {journal} {Proc. Roy. Soc. Lond. A}\ }\textbf {\bibinfo {volume} {452}},\
  \bibinfo {pages} {2551}},\ \Eprint {http://arxiv.org/abs/quant-ph/9601029}
  {arXiv:quant-ph/9601029}  (\bibinfo {year} {1996})\BibitemShut {NoStop}%
\bibitem [{\citenamefont {Imai}(1977)}]{Imai1977TDC}%
  \BibitemOpen
  \bibfield  {author} {\bibinfo {author} {\bibfnamefont {H.}~\bibnamefont
  {Imai}},\ }A theory of two-dimensional cyclic codes,\ \href
  {https://doi.org/10.1016/S0019-9958(77)90232-7} {\bibfield  {journal}
  {\bibinfo  {journal} {Information and Control}\ }\textbf {\bibinfo {volume}
  {34}},\ \bibinfo {pages} {1}} (\bibinfo {year} {1977})\BibitemShut {NoStop}%
\bibitem [{\citenamefont {MacWilliams}\ and\ \citenamefont
  {Sloane}(1977)}]{cecc}%
  \BibitemOpen
  \bibfield  {author} {\bibinfo {author} {\bibfnamefont {F.~J.}\ \bibnamefont
  {MacWilliams}}\ and\ \bibinfo {author} {\bibfnamefont {N.~J.~A.}\
  \bibnamefont {Sloane}},\ }\href@noop {} {\emph {\bibinfo {title} {The theory
  of error-correcting codes}}}\ (\bibinfo  {publisher} {North-Holland
  Publishing Co., Amsterdam-New York-Oxford},\ \bibinfo {year} {1977})\ pp.\
  \bibinfo {pages} {i--ix and 1--762},\ \bibinfo {note} {north-Holland
  Mathematical Library, Vol. 16}\BibitemShut {NoStop}%
\bibitem [{\citenamefont {Güneri}\ and\ \citenamefont
  {Özbudak}(2008)}]{multivariable_cecc}%
  \BibitemOpen
  \bibfield  {author} {\bibinfo {author} {\bibfnamefont {C.}~\bibnamefont
  {Güneri}}\ and\ \bibinfo {author} {\bibfnamefont {F.}~\bibnamefont
  {Özbudak}},\ }Multidimensional cyclic codes and artin–schreier type
  hypersurfaces over finite fields,\ \href {\doibase
  https://doi.org/10.1016/j.ffa.2006.12.003} {\bibfield  {journal} {\bibinfo
  {journal} {Finite Fields and Their Applications}\ }\textbf {\bibinfo {volume}
  {14}},\ \bibinfo {pages} {44}} (\bibinfo {year} {2008})\BibitemShut {NoStop}%
\bibitem [{\citenamefont {Haah}(2013{\natexlab{a}})}]{haah2013commuting}%
  \BibitemOpen
  \bibfield  {author} {\bibinfo {author} {\bibfnamefont {J.}~\bibnamefont
  {Haah}},\ }{Commuting Pauli Hamiltonians as Maps between Free Modules},\
  \href {\doibase 10.1007/s00220-013-1810-2} {\bibfield  {journal} {\bibinfo
  {journal} {Commun. Math. Phys.}\ }\textbf {\bibinfo {volume} {324}},\
  \bibinfo {pages} {351}},\ \Eprint {http://arxiv.org/abs/1204.1063}
  {arXiv:1204.1063}  (\bibinfo {year} {2013}{\natexlab{a}})\BibitemShut
  {NoStop}%
\bibitem [{\citenamefont {Haah}(2013{\natexlab{b}})}]{haah2013}%
  \BibitemOpen
  \bibfield  {author} {\bibinfo {author} {\bibfnamefont {J.}~\bibnamefont
  {Haah}},\ }{Lattice quantum codes and exotic topological phases of matter},\
  \href {http://arxiv.org/abs/1305.6973} {\ }\Eprint
  {http://arxiv.org/abs/1305.6973} {arXiv:1305.6973}  (\bibinfo {year}
  {2013}{\natexlab{b}})\BibitemShut {NoStop}%
\bibitem [{\citenamefont {Devakul}\ \emph
  {et~al.}(2018{\natexlab{b}})\citenamefont {Devakul}, \citenamefont {You},
  \citenamefont {Burnell},\ and\ \citenamefont {Sondhi}}]{devakul2018fractal}%
  \BibitemOpen
  \bibfield  {author} {\bibinfo {author} {\bibfnamefont {T.}~\bibnamefont
  {Devakul}}, \bibinfo {author} {\bibfnamefont {Y.}~\bibnamefont {You}},
  \bibinfo {author} {\bibfnamefont {F.~J.}\ \bibnamefont {Burnell}}, \ and\
  \bibinfo {author} {\bibfnamefont {S.~L.}\ \bibnamefont {Sondhi}},\ }{Fractal
  Symmetric Phases of Matter},\ \href {http://arxiv.org/abs/1805.04097} {\
  }\Eprint {http://arxiv.org/abs/1805.04097} {arXiv:1805.04097}  (\bibinfo
  {year} {2018}{\natexlab{b}})\BibitemShut {NoStop}%
\bibitem [{\citenamefont {Devakul}\ and\ \citenamefont
  {Williamson}(2018)}]{devakul2018universal}%
  \BibitemOpen
  \bibfield  {author} {\bibinfo {author} {\bibfnamefont {T.}~\bibnamefont
  {Devakul}}\ and\ \bibinfo {author} {\bibfnamefont {D.~J.}\ \bibnamefont
  {Williamson}},\ }{Universal quantum computation using fractal
  symmetry-protected cluster phases},\ \href {\doibase
  10.1103/PhysRevA.98.022332} {\bibfield  {journal} {\bibinfo  {journal} {Phys.
  Rev. A}\ }\textbf {\bibinfo {volume} {98}},\ \bibinfo {pages} {022332}},\
  \Eprint {http://arxiv.org/abs/1806.04663} {arXiv:1806.04663}  (\bibinfo
  {year} {2018})\BibitemShut {NoStop}%
\bibitem [{\citenamefont {Stephen}\ \emph {et~al.}(2018)\citenamefont
  {Stephen}, \citenamefont {Nautrup}, \citenamefont {Bermejo-Vega},
  \citenamefont {Eisert},\ and\ \citenamefont
  {Raussendorf}}]{Stephen2018computationally}%
  \BibitemOpen
  \bibfield  {author} {\bibinfo {author} {\bibfnamefont {D.~T.}\ \bibnamefont
  {Stephen}}, \bibinfo {author} {\bibfnamefont {H.~P.}\ \bibnamefont
  {Nautrup}}, \bibinfo {author} {\bibfnamefont {J.}~\bibnamefont
  {Bermejo-Vega}}, \bibinfo {author} {\bibfnamefont {J.}~\bibnamefont
  {Eisert}}, \ and\ \bibinfo {author} {\bibfnamefont {R.}~\bibnamefont
  {Raussendorf}},\ }{Subsystem symmetries, quantum cellular automata, and
  computational phases of quantum matter},\ \href
  {http://arxiv.org/abs/1806.08780} {\ }\Eprint
  {http://arxiv.org/abs/1806.08780} {arXiv:1806.08780}  (\bibinfo {year}
  {2018})\BibitemShut {NoStop}%
\bibitem [{\citenamefont {Devakul}(2018{\natexlab{b}})}]{Devakul2018}%
  \BibitemOpen
  \bibfield  {author} {\bibinfo {author} {\bibfnamefont {T.}~\bibnamefont
  {Devakul}},\ }{Classifying local fractal subsystem symmetry protected
  topological phases},\ \href {https://arxiv.org/pdf/1812.02721.pdf} {\
  }\Eprint {http://arxiv.org/abs/1812.02721} {arXiv:1812.02721}  (\bibinfo
  {year} {2018}{\natexlab{b}})\BibitemShut {NoStop}%
\bibitem [{\citenamefont {Daniel}\ \emph {et~al.}(2019)\citenamefont {Daniel},
  \citenamefont {Alexander},\ and\ \citenamefont {Miyake}}]{Daniel2019}%
  \BibitemOpen
  \bibfield  {author} {\bibinfo {author} {\bibfnamefont {A.~K.}\ \bibnamefont
  {Daniel}}, \bibinfo {author} {\bibfnamefont {R.~N.}\ \bibnamefont
  {Alexander}}, \ and\ \bibinfo {author} {\bibfnamefont {A.}~\bibnamefont
  {Miyake}},\ }{Computational universality of symmetry-protected topologically
  ordered cluster phases on 2D Archimedean lattices},\ \href
  {http://arxiv.org/abs/1907.13279} {\ }\Eprint
  {http://arxiv.org/abs/1907.13279} {arXiv:1907.13279}  (\bibinfo {year}
  {2019})\BibitemShut {NoStop}%
\bibitem [{\citenamefont {Shirley}\ \emph
  {et~al.}(2019{\natexlab{c}})\citenamefont {Shirley}, \citenamefont {Slagle},\
  and\ \citenamefont {Chen}}]{ShirleyERG}%
  \BibitemOpen
  \bibfield  {author} {\bibinfo {author} {\bibfnamefont {W.}~\bibnamefont
  {Shirley}}, \bibinfo {author} {\bibfnamefont {K.}~\bibnamefont {Slagle}}, \
  and\ \bibinfo {author} {\bibfnamefont {X.}~\bibnamefont {Chen}},\
  }{Entanglement Renormalization of Fractonic Gauge Theories (in
  preparation)},\ \href@noop {} {\ } (\bibinfo {year}
  {2019}{\natexlab{c}})\BibitemShut {NoStop}%
\bibitem [{\citenamefont {Haah}(2016)}]{haah2016algebraic}%
  \BibitemOpen
  \bibfield  {author} {\bibinfo {author} {\bibfnamefont {J.}~\bibnamefont
  {Haah}},\ }{Algebraic Methods for Quantum Codes on Lattices},\ \href
  {\doibase 10.15446/recolma.v50n2.62214} {\bibfield  {journal} {\bibinfo
  {journal} {Revista Colombiana de Matem{\'{a}}ticas}\ }\textbf {\bibinfo
  {volume} {50}},\ \bibinfo {pages} {299}},\ \Eprint
  {http://arxiv.org/abs/1607.01387} {arXiv:1607.01387}  (\bibinfo {year}
  {2016})\BibitemShut {NoStop}%
\bibitem [{\citenamefont {Yoshida}(2011)}]{yoshida2011classification}%
  \BibitemOpen
  \bibfield  {author} {\bibinfo {author} {\bibfnamefont {B.}~\bibnamefont
  {Yoshida}},\ }{Classification of quantum phases and topology of logical
  operators in an exactly solved model of quantum codes},\ \href {\doibase
  10.1016/j.aop.2010.10.009} {\bibfield  {journal} {\bibinfo  {journal} {Ann.
  Phys.}\ }\textbf {\bibinfo {volume} {326}},\ \bibinfo {pages} {15}},\ \Eprint
  {http://arxiv.org/abs/1007.4601} {arXiv:1007.4601}  (\bibinfo {year}
  {2011})\BibitemShut {NoStop}%
\bibitem [{\citenamefont {Levin}\ and\ \citenamefont
  {Wen}(2003)}]{Levin_wen_fermion}%
  \BibitemOpen
  \bibfield  {author} {\bibinfo {author} {\bibfnamefont {M.}~\bibnamefont
  {Levin}}\ and\ \bibinfo {author} {\bibfnamefont {X.-G.}\ \bibnamefont
  {Wen}},\ }Fermions, strings, and gauge fields in lattice spin models,\ \href
  {\doibase 10.1103/PhysRevB.67.245316} {\bibfield  {journal} {\bibinfo
  {journal} {Phys. Rev. B}\ }\textbf {\bibinfo {volume} {67}},\ \bibinfo
  {pages} {245316}} (\bibinfo {year} {2003})\BibitemShut {NoStop}%
\bibitem [{\citenamefont {Slagle}\ \emph {et~al.}(2019)\citenamefont {Slagle},
  \citenamefont {Aasen},\ and\ \citenamefont
  {Williamson}}]{Slagle2018foliated}%
  \BibitemOpen
  \bibfield  {author} {\bibinfo {author} {\bibfnamefont {K.}~\bibnamefont
  {Slagle}}, \bibinfo {author} {\bibfnamefont {D.}~\bibnamefont {Aasen}}, \
  and\ \bibinfo {author} {\bibfnamefont {D.}~\bibnamefont {Williamson}},\
  }{Foliated field theory and string-membrane-net condensation picture of
  fracton order},\ \href
  {http://arxiv.org/abs/1812.01613{\%}0Ahttp://dx.doi.org/10.21468/SciPostPhys.6.4.043}
  {\bibfield  {journal} {\bibinfo  {journal} {SciPost Phys.}\ }\textbf
  {\bibinfo {volume} {6}}},\ \Eprint {http://arxiv.org/abs/1812.01613}
  {arXiv:1812.01613}  (\bibinfo {year} {2019})\BibitemShut {NoStop}%
\bibitem [{\citenamefont {Wang}\ \emph {et~al.}(2019)\citenamefont {Wang},
  \citenamefont {Shirley},\ and\ \citenamefont {Chen}}]{Wang2019}%
  \BibitemOpen
  \bibfield  {author} {\bibinfo {author} {\bibfnamefont {T.}~\bibnamefont
  {Wang}}, \bibinfo {author} {\bibfnamefont {W.}~\bibnamefont {Shirley}}, \
  and\ \bibinfo {author} {\bibfnamefont {X.}~\bibnamefont {Chen}},\ }{Foliated
  fracton order in the Majorana checkerboard model},\ \href
  {http://arxiv.org/abs/1904.01111} {\ }\Eprint
  {http://arxiv.org/abs/1904.01111} {arXiv:1904.01111}  (\bibinfo {year}
  {2019})\BibitemShut {NoStop}%
\bibitem [{\citenamefont {Dua}(2019)}]{dua_proof_str}%
  \BibitemOpen
  \bibfield  {author} {\bibinfo {author} {\bibfnamefont {A.}~\bibnamefont
  {Dua}},\ }\href@noop {} {\bibfield  {journal} {\bibinfo  {journal}
  {Unpublished}\ }} (\bibinfo {year} {2019})\BibitemShut {NoStop}%
\bibitem [{\citenamefont {Devakul}\ \emph
  {et~al.}(2018{\natexlab{c}})\citenamefont {Devakul}, \citenamefont
  {Williamson},\ and\ \citenamefont {You}}]{subsystemphaserel}%
  \BibitemOpen
  \bibfield  {author} {\bibinfo {author} {\bibfnamefont {T.}~\bibnamefont
  {Devakul}}, \bibinfo {author} {\bibfnamefont {D.~J.}\ \bibnamefont
  {Williamson}}, \ and\ \bibinfo {author} {\bibfnamefont {Y.}~\bibnamefont
  {You}},\ }{Classification of subsystem symmetry-protected topological
  phases},\ \href {\doibase 10.1103/PhysRevB.98.235121} {\bibfield  {journal}
  {\bibinfo  {journal} {Phys. Rev. B}\ }\textbf {\bibinfo {volume} {98}},\
  \bibinfo {pages} {235121}},\ \Eprint {http://arxiv.org/abs/1808.05300}
  {arXiv:1808.05300}  (\bibinfo {year} {2018}{\natexlab{c}})\BibitemShut
  {NoStop}%
\bibitem [{\citenamefont {Shirley}\ \emph
  {et~al.}(2019{\natexlab{d}})\citenamefont {Shirley}, \citenamefont {Slagle},\
  and\ \citenamefont {Chen}}]{Shirley2019}%
  \BibitemOpen
  \bibfield  {author} {\bibinfo {author} {\bibfnamefont {W.}~\bibnamefont
  {Shirley}}, \bibinfo {author} {\bibfnamefont {K.}~\bibnamefont {Slagle}}, \
  and\ \bibinfo {author} {\bibfnamefont {X.}~\bibnamefont {Chen}},\ }{Twisted
  foliated fracton phases},\ \href {http://arxiv.org/abs/1907.09048} {\
  }\Eprint {http://arxiv.org/abs/1907.09048} {arXiv:1907.09048}  (\bibinfo
  {year} {2019}{\natexlab{d}})\BibitemShut {NoStop}%
\bibitem [{\citenamefont {Pretko}(2017{\natexlab{a}})}]{PhysRevB.95.115139}%
  \BibitemOpen
  \bibfield  {author} {\bibinfo {author} {\bibfnamefont {M.}~\bibnamefont
  {Pretko}},\ }{Subdimensional particle structure of higher rank U(1) spin
  liquids},\ \href {\doibase 10.1103/PhysRevB.95.115139} {\bibfield  {journal}
  {\bibinfo  {journal} {Phys. Rev. B}\ }\textbf {\bibinfo {volume} {95}},\
  \bibinfo {pages} {115139}},\ \Eprint {http://arxiv.org/abs/1604.05329}
  {arXiv:1604.05329}  (\bibinfo {year} {2017}{\natexlab{a}})\BibitemShut
  {NoStop}%
\bibitem [{\citenamefont {Pretko}(2017{\natexlab{b}})}]{PhysRevB.96.035119}%
  \BibitemOpen
  \bibfield  {author} {\bibinfo {author} {\bibfnamefont {M.}~\bibnamefont
  {Pretko}},\ }{Generalized electromagnetism of subdimensional particles: A
  spin liquid story},\ \href {\doibase 10.1103/PhysRevB.96.035119} {\bibfield
  {journal} {\bibinfo  {journal} {Phys. Rev. B}\ }\textbf {\bibinfo {volume}
  {96}},\ \bibinfo {pages} {35119}},\ \Eprint {http://arxiv.org/abs/1606.08857}
  {arXiv:1606.08857}  (\bibinfo {year} {2017}{\natexlab{b}})\BibitemShut
  {NoStop}%
\bibitem [{\citenamefont {Ma}\ \emph {et~al.}(2018)\citenamefont {Ma},
  \citenamefont {Hermele},\ and\ \citenamefont {Chen}}]{ma2018fracton}%
  \BibitemOpen
  \bibfield  {author} {\bibinfo {author} {\bibfnamefont {H.}~\bibnamefont
  {Ma}}, \bibinfo {author} {\bibfnamefont {M.}~\bibnamefont {Hermele}}, \ and\
  \bibinfo {author} {\bibfnamefont {X.}~\bibnamefont {Chen}},\ }{Fracton
  topological order from the Higgs and partial-confinement mechanisms of
  rank-two gauge theory},\ \href
  {http://arxiv.org/abs/1802.10108{\%}0Ahttp://dx.doi.org/10.1103/PhysRevB.98.035111}
  {\bibfield  {journal} {\bibinfo  {journal} {Phys. Rev. B}\ }\textbf {\bibinfo
  {volume} {98}},\ \bibinfo {pages} {035111}},\ \Eprint
  {http://arxiv.org/abs/1802.10108} {arXiv:1802.10108}  (\bibinfo {year}
  {2018})\BibitemShut {NoStop}%
\bibitem [{\citenamefont {Bulmash}\ and\ \citenamefont
  {Barkeshli}(2018{\natexlab{a}})}]{PhysRevB.97.235112}%
  \BibitemOpen
  \bibfield  {author} {\bibinfo {author} {\bibfnamefont {D.}~\bibnamefont
  {Bulmash}}\ and\ \bibinfo {author} {\bibfnamefont {M.}~\bibnamefont
  {Barkeshli}},\ }{Higgs mechanism in higher-rank symmetric U(1) gauge
  theories},\ \href {\doibase 10.1103/PhysRevB.97.235112} {\bibfield  {journal}
  {\bibinfo  {journal} {Phys. Rev. B}\ }\textbf {\bibinfo {volume} {97}},\
  \bibinfo {pages} {235112}},\ \Eprint {http://arxiv.org/abs/1802.10099}
  {arXiv:1802.10099}  (\bibinfo {year} {2018}{\natexlab{a}})\BibitemShut
  {NoStop}%
\bibitem [{\citenamefont {Bulmash}\ and\ \citenamefont
  {Barkeshli}(2018{\natexlab{b}})}]{bulmash2018generalized}%
  \BibitemOpen
  \bibfield  {author} {\bibinfo {author} {\bibfnamefont {D.}~\bibnamefont
  {Bulmash}}\ and\ \bibinfo {author} {\bibfnamefont {M.}~\bibnamefont
  {Barkeshli}},\ }{Generalized U(1) Gauge Field Theories and Fractal
  Dynamics},\ \href {http://arxiv.org/abs/1806.01855} {\ }\Eprint
  {http://arxiv.org/abs/1806.01855} {arXiv:1806.01855}  (\bibinfo {year}
  {2018}{\natexlab{b}})\BibitemShut {NoStop}%
\bibitem [{\citenamefont {Williamson}\ \emph {et~al.}(2018)\citenamefont
  {Williamson}, \citenamefont {Bi},\ and\ \citenamefont
  {Cheng}}]{Williamson2018Fractonic}%
  \BibitemOpen
  \bibfield  {author} {\bibinfo {author} {\bibfnamefont {D.~J.}\ \bibnamefont
  {Williamson}}, \bibinfo {author} {\bibfnamefont {Z.}~\bibnamefont {Bi}}, \
  and\ \bibinfo {author} {\bibfnamefont {M.}~\bibnamefont {Cheng}},\
  }{Fractonic Matter in Symmetry-Enriched U(1) Gauge Theory},\ \href
  {http://arxiv.org/abs/1809.10275} {\ }\Eprint
  {http://arxiv.org/abs/1809.10275} {arXiv:1809.10275}  (\bibinfo {year}
  {2018})\BibitemShut {NoStop}%
\bibitem [{\citenamefont {Song}\ \emph
  {et~al.}(2019{\natexlab{b}})\citenamefont {Song}, \citenamefont {Prem},
  \citenamefont {Huang},\ and\ \citenamefont
  {Martin-Delgado}}]{song2018twisted}%
  \BibitemOpen
  \bibfield  {author} {\bibinfo {author} {\bibfnamefont {H.}~\bibnamefont
  {Song}}, \bibinfo {author} {\bibfnamefont {A.}~\bibnamefont {Prem}}, \bibinfo
  {author} {\bibfnamefont {S.-J.}\ \bibnamefont {Huang}}, \ and\ \bibinfo
  {author} {\bibfnamefont {M.~A.}\ \bibnamefont {Martin-Delgado}},\ }Twisted
  fracton models in three dimensions,\ \href {\doibase
  10.1103/PhysRevB.99.155118} {\bibfield  {journal} {\bibinfo  {journal} {Phys.
  Rev. B}\ }\textbf {\bibinfo {volume} {99}},\ \bibinfo {pages} {155118}}
  (\bibinfo {year} {2019}{\natexlab{b}})\BibitemShut {NoStop}%
\bibitem [{\citenamefont {Atiyah}\ and\ \citenamefont
  {Macdonald}(2016)}]{AM16}%
  \BibitemOpen
  \bibfield  {author} {\bibinfo {author} {\bibfnamefont {M.~F.}\ \bibnamefont
  {Atiyah}}\ and\ \bibinfo {author} {\bibfnamefont {I.~G.}\ \bibnamefont
  {Macdonald}},\ }\href@noop {} {\emph {\bibinfo {title} {Introduction to
  commutative algebra}}},\ \bibinfo {edition} {economy}\ ed.,\ Addison-Wesley
  Series in Mathematics\ (\bibinfo  {publisher} {Westview Press, Boulder, CO},\
  \bibinfo {year} {2016})\ pp.\ \bibinfo {pages} {ix+128}\BibitemShut {NoStop}%
\bibitem [{\citenamefont {Adams}\ and\ \citenamefont
  {Loustaunau}(1994)}]{AL94}%
  \BibitemOpen
  \bibfield  {author} {\bibinfo {author} {\bibfnamefont {W.~W.}\ \bibnamefont
  {Adams}}\ and\ \bibinfo {author} {\bibfnamefont {P.}~\bibnamefont
  {Loustaunau}},\ }\href {\doibase 10.1090/gsm/003} {\emph {\bibinfo {title}
  {An introduction to {G}r\"{o}bner bases}}},\ \bibinfo {series} {Graduate
  Studies in Mathematics}, Vol.~\bibinfo {volume} {3}\ (\bibinfo  {publisher}
  {American Mathematical Society, Providence, RI},\ \bibinfo {year}
  {1994})\BibitemShut {NoStop}%
\end{thebibliography}%

\onecolumngrid
\newpage
\appendix
\section{List of codes and their polynomial representations}
\label{stabideal_codes}
In Table~\ref{poly_codes}, we list the polynomials $f$ and $g$ for various cubic codes, denoted CC$_i$ where $i$ runs from 1 to 17, and their B-models if any, denoted CCB$_i$. The stabilizer map for a cubic code in terms of $f$ and $g$ is given by 
\begin{align}
    \sigma=\left(\begin{array}{cc}
     f & 0  \\
     g & 0  \\
     0   & \overline{g}\\
     0   & \overline{f}
\end{array}\right) \, ,
\end{align}
and the stabilizer ideal is defined as the ideal $\av{f,g}$, which contains the polynomials in the image of the excitation map, see Eq.~\eqref{emap_fg}. 
The third column of the table below contains polynomial entries, each of which specifies the positions of two non-trivial clusters of charges that are created at the ends of string operators. For example, $y+xz^2$ implies that the small segment of string operator in CC$_5$ creates elementary excitations at $y$ and $xz^2$. This means that the direction of each string operator can be read off the binomial factor in its description. 

\begin{table}[H]
    \centering
\begin{tabular}{c|cc|c}
Model & $f$ & $g$ & Positions of charges in a trivial pair\\
\hline
CC$_1$& $1+x+y+z$ & $1+xy+yz+xz$ & \tabularnewline
CC$_2$& $x+y+z+yz$ & $1+y+xy+z+xz+xyz$ & \tabularnewline
CC$_3$ & $1+x+y+z$ & $1+xz+yz+xyz$ & \tabularnewline
CC$_4$& $1+x+z+yz$ & $1+y+xy+xz$ & \tabularnewline
 
CC$_5$ & $1+x+z+yz$ & $y+z+xz+yz$ & $y+xz^2$ \tabularnewline
 
CC$_6$ & $1+x+y+z$ & $1+y+xz+yz$ & $x+z^2$ \tabularnewline
 
CC$_7$ & $1+x+y+z$ & $1+z+yz+xyz$ &  \tabularnewline
 
CC$_8$ & $1+x+z+yz$ & $1+y+xy+z+xz+yz$ &  \tabularnewline
 
CC$_9$ & $1+z+xz+yz$ & $1+x+y+xyz$ & $xyz^2+1$ \tabularnewline
 
CC$_{10}$ & $1+x+z+yz$ & $1+y+xy+xz+yz+xyz$ &  \tabularnewline
 
CC$_{11}$ & $1+x+y+yz$ & $x+y+z+xy$ &  $(1+y+y^2)(1+z)$\tabularnewline

$\text{CCB}_{11}$ & $1 + x + y + y z $ & $1 + y + y^2$ & $y^3+1$ \tabularnewline

CC$_{12}$ & $1+x+xy+z$ & $1+x+xz+yz$ & $(1+xy+x^2)(1+z)$ \tabularnewline
 
$\text{CCB}_{12}$ & $ x + x^2 + z$ & $x y + z + x z + y z$ & $z^2+yx^3$ \tabularnewline

CC$_{13}$ & $x+z+xz+yz$ & $1+xz+yz+xy$  & $(1+xy+x+x^2)(1+z)$  \tabularnewline

$\text{CCB}_{13}$ & $y + z$ & $1 + z + x z + z^2$  & $y+z$ \tabularnewline

CC$_{14}$ & $1+x+z+xyz$ & $1+x+xy+xz+yz+xyz$ & $(1+x^2+yx^2)(1+z)$  \tabularnewline

$\text{CCB}_{14}$ & $x + y^2 + x y^2 + x z^2$ & $1 + y + y^2 + z + y z$ & $(1+x+y^2)(1+z)$  \tabularnewline

CC$_{15}$ & $1+xy+z+xz$ & $1+xy+y+xyz$ & $(1+xz+x^2z)(1+y)$ \tabularnewline

$\text{CCB}_{15}$ & $ y + x z + z^2 + x z^2$ & $z x + z x^2 + 1$ & $zyx^3+1$\tabularnewline

CC$_{16}$\footnote{CC$_{16}$ is equivalent to CC$_{15}$ after modular transformations. Hence, we don't include $\text{CCB}_{16}$ explicitly here.} & $1+z+xz+xyz$ & $1+xy+y+z$ & $(x+y+xy)(1+xz)$ \tabularnewline
 
CC$_{17}$ & $1+xy+yz+xz$ & $1+x+y+xy+z+xyz$ & $(1+z+z^2)(1+x)(1+y)$ \tabularnewline

$\text{CCB}_{17}$ & $1 + z + z^2$ & $1 + x y + x z + y z$ & $z^3+1$ \tabularnewline
\tabularnewline

\end{tabular}
    \caption{Polynomial representation for stabilizers and string operator segments in cubic code models. 
    The polynomials in the third column describe string operator segments and belong to the stabilizer ideal $I=\av{f,g}$~\cite{vijay2016fracton}. The binomial part specifies the direction of the string segment while the preceding polynomial specifies the cluster of elementary charges at its each end. The supplementary SAGE (python) file shows that the full polynomials belong to the stabilizer ideal and the preceding polynomials representing the non-trivial charge clusters do not.} 
    \label{poly_codes}
\end{table}

\section{Demonstrating our heuristic procedure for entanglement renormalization on the X-cube model}
\label{XC_ER}
In this appendix, we outline our heuristic procedure for entanglement renormalization on CSS codes and demonstrate it, using the X-cube model as an example. 
The heuristic is quite simple, essentially we coarse grain until a monomial appears in a column. We next use that entry to set the rest of the column to 0 via CNOT gates. The monomial entry can then be used to set its row to 0 via column operations. 
After this the qubit has been disentangled into the trivial state and its row and column can be removed. 
This is repeated until there are no monomials left. At which point the model has either bifurcated or we perform more steps of coarse graining. 
There is an important subtlety to the above recipe, as we do not necessarily need a monomial to set the rest of a column and row to 0. In some cases, such as the X-cube example below, a non-monomial entry, pair of entries, or several entries, in a column can be used to set the rest of the column to zero. This more general step can be used in place of the monomial step mentioned above. 

The stabilizer map for the X-cube model is given by 
\begin{equation}
\begin{pmatrix}
    (1+y)(1+z) & & \\
    (1+x)(1+z) & &\\
    (1+x)(1+y) & &\\
    & 1+\oline{x} & 1+\oline{x}\\
    & 0 & 1+\oline{y}\\
    & 1+\oline{z} & 0
    \end{pmatrix}
    \, .
\end{equation} 
After coarse-graining in $x$, the $X$ and $Z$ sectors of the stabilizer map are respectively given by
\begin{align}
    \begin{pmatrix}
        (1+y)(1+z) & 0\\
        0 & (1+y)(1+z)\\
        1+z & x(1+{z})\\
        1+z & 1+z\\
        1+y & x(1+y)\\
        1+y & 1+y
    \end{pmatrix}\, ,
    &&
    \begin{pmatrix}
        1 & 1 & 1 & 1\\
        \oline{x} & 1 & \oline{x} & 1\\
        0 & 0 & 1+\oline{y} & 0\\
        0 & 0 & 0 & 1+\oline{y}\\
        1+\oline{z} & 0 & 0 & 0\\
        0 & 1+\oline{z} & 0 & 0
    \end{pmatrix} \, ,
\end{align}
where $x$ is now the translation variable in the coarse-grained unit cell.
The goal is to decouple different models or stabilizer terms such that they are supported on non-overlapping sets of qubits. We first do this for the $X$-sector. In doing so, the $Z$-sector will also get modified. In the two columns for the $X$-sector, the $4^{th}$ and $6^{th}$ row elements in both columns are the same. Thus, applying the column operation Col$(2,1,1)$ simplifies the sectors of the stabilizer map
\begin{align}
    \begin{pmatrix}
        (1+y)(1+z) & (1+y)(1+z)\\
        0 & (1+y)(1+z)\\
        1+z & (1+x)(1+{z})\\
        1+z & 0\\
        1+y & (1+x)(1+y)\\
        1+y & 0
    \end{pmatrix}\, , 
    &&
    \begin{pmatrix}
        1 & 1 & 1 & 1\\
        \oline{x} & 1 & \oline{x} & 1\\
        0 & 0 & 1+\oline{y} & 0\\
        0 & 0 & 0 & 1+\oline{y}\\
        1+\oline{z} & 0 & 0 & 0\\
        0 & 1+\oline{z} & 0 & 0
    \end{pmatrix} \, .
\end{align}
Now, since there are same polynomials in certain row elements of the first column, one can get rid of them by applying CNOT$(3,4,1)$ and CNOT$(5,6,1)$ such that the second column is not affected due to the zero entries in it. 
\begin{align}
    \begin{pmatrix}
        (1+y)(1+z) & (1+y)(1+z)\\
        0 & (1+y)(1+z)\\
        0 & (1+x)(1+{z})\\
        1+z & 0\\
        0 & (1+x)(1+y)\\
        1+y & 0
    \end{pmatrix} \, ,
    &&
    \begin{pmatrix}
        1 & 1 & 1 & 1\\
        \oline{x} & 1 & \oline{x} & 1\\
        0 & 0 & 1+\oline{y} & 0\\
        0 & 0 & 1+\oline{y} & 1+\oline{y}\\
        1+\oline{z} & 0 & 0 & 0\\
        1+\oline{z} & 1+\oline{z} & 0 & 0
    \end{pmatrix} \, .
\end{align}
Finally, applying CNOT$(1,6,1+z)$ and CNOT$(1,2,1)$ decouples the two stabilizer terms in the $X$-sector as follows, 
\begin{align}
    \begin{pmatrix}
        0 & 0\\
        0 & (1+y)(1+z)\\
        0 & (1+x)(1+{z})\\
        1+z & 0\\
        0 & (1+x)(1+y)\\
        1+y & 0
    \end{pmatrix} \, ,
    &&
        \begin{pmatrix}
        1 & 1 & 1 & 1\\
        1+\oline{x} & 0 & 1+\oline{x} & 0\\
        0 & 0 & 1+\oline{y} & 0\\
        0 & 0 & 1+\oline{y} & 1+\oline{y}\\
        1+\oline{z} & 0 & 0 & 0\\
        0 & 0 & 1+\oline{z} & 1+\oline{z} 
    \end{pmatrix} \, .
\end{align}
It turns out that once the $X$-sector is decoupled, doing certain column operations in the $Z$-sector is enough to decouple the whole stabilizer map into two different models. This is also observed for other models. Applying Col$(5,6,1)$, Col$(3,4,1)$ and Col$(6,4,1)$ gives 
\begin{align}
    \begin{pmatrix}
        0 & 0\\
        0 & (1+y)(1+z)\\
        0 & (1+x)(1+z)\\
        1+z & 0\\
        0 & (1+x)(1+y)\\
        1+y & 0
    \end{pmatrix} \, ,
    &&
        \begin{pmatrix}
        0 & 1 & 0 & 0\\
        1+\oline{x} & 0 & 1+\oline{x} & 0\\
        0 & 0 & 1+\oline{y} & 0\\
        0 & 0 & 0 & 1+\oline{y}\\
        1+\oline{z} & 0 & 0 & 0\\
        0 & 0 & 0 & 1+\oline{z} 
    \end{pmatrix} \, .
\end{align}
Removing the disentangled qubit (the 1st qubit), we have found that the model splits into two, one is given by
\begin{equation}
\begin{pmatrix}
    1+z & \\
    1+y & \\
    & 1+\oline{y} \\
    & 1+\oline{z}
\end{pmatrix},
\end{equation}
which is just a stack of 2D toric codes parallel to the $yz$ plane, and the other is a coarse-grained  X-cube model.

\section{Cubic codes as fractal spin liquids}
\label{CCFSLs}
Certain cubic codes could be mapped to the following fractal spin liquid (FSL)~\cite{yoshida2013exotic} form of the stabilizer map,
\begin{align}
    \left(\begin{array}{cc}
         1+f(x) y & 0\\
         1+g_1(x)z+g_2(x)z^2 & 0\\
         0 & 1+g_1(\overline{x})\overline{z}+g_2(\overline{x})\overline{z}^2\\
         0 & 1+f(\overline{x})\overline{y}
    \end{array}\right)
    \, ,
    \label{FSLform}
\end{align}
which represents a first-order FSL for $g_2(x)=0$ and a second-order FSL otherwise. Here $\overline{x}_i\equiv x_i^{-1}$ where $x_i$ denotes $x$, $y$ or $z$. In table~\ref{poly_codes}, we write down the polynomials that appear in the FSL forms of various cubic codes. 
The explicit mapping from the cubic codes to FSLs is contained in the supplementary \small{MATHEMATICA} file SMERG.nb.

\begin{table}[H]
\label{tableFSLs}
    \centering
\begin{tabular}{c|ccc|c}
Model & $f(x)$ & $g_1(x)$ & $g_2(x)$\\
\hline
CC$_1$& $1+x+x^2$ & $1+x$ & $1+x+x^2$ \tabularnewline
CC$_2$& $1+x+x^2$ & $1+x^2$ & $1+x+x^2$ \tabularnewline
CC$_3$ & $(1+x+x^2)x^{-1}$ & $1+x$ & $(1+x+x^2)x^{-1}$ \tabularnewline
CC$_5$ & $x^2$ & $1+x$ & $x^2$ \tabularnewline
CC$_6$ & $x^2$ & $1+x+x^2$ & 0 \tabularnewline
CC$_9$\footnote{CC$_9$ is equivalent to CC$_5$} & $x^2$ & $1+x$ & $x^2$ \tabularnewline
\end{tabular}
    \caption{Polynomials in the fractal spin liquid form \eqref{FSLform} of cubic codes. }
    \label{poly_codes}
\end{table}

\section{Numerical results on number of encoded qubits}
In table~\ref{encoded_qubits_table}, we list the number of encoded qubits w.r.t. system size $L$ in the cubic codes and the Sierpinski fractal spin liquid (SFSL) on an $L\times L\times L$ lattice. The results are calculated in the supplementary \small{MATHEMATICA} file encodedqubits.nb. Results for some of these models were recovered analytically in the next section using the polynomial framework. 
\begin{table}[H]
    \centering
\begin{tabular}{c|cccccccccccccccccccc}
$L$ & CC$_1$& CC$_2$& CC$_3$ & CC$_4$& CC$_5$ & CC$_6$ & CC$_7$ & CC$_8$ & CC$_9$ & CC$_{10}$ & CC$_{11}$ & $\text{CCB}_{11}$ & CC$_{12}$ & CC$_{13}$ & CC$_{14}$ & CC$_{15}$ & CC$_{16}$ & CC$_{17}$ & SFSL\tabularnewline
 
\hline 
2 & 6  & 4 & 4 & 4 & 4 & 4 & 4 & 4 & 4 & 4 & 4 &  & 4 & 8 & 4 & 4 & 4 & 8 & 0 \tabularnewline

3 & 2 & 2 & 2 & 6 & 2 & 2 & 2 & 2 & 2 & 2 & 10 & 8 & 6 & 6 & 6 & 6 & 6 & 6 & 4 \tabularnewline

4 & 14 & 8 & 8 & 8 & 8 & 8 & 8 & 8 & 8 & 8 & 8 & 0 & 8 & 16 & 8 & 8 & 8 & 16 & 0 \tabularnewline

5 & 2 & 2 & 2 & 2 & 2 & 2 & 2 & 2 & 2 & 2 & 10 & 0 &10 & 10 & 10 & 10 & 10 & 10 & 0 \tabularnewline

6 & 6 & 4 & 4 & 12 & 4 & 4 & 4 & 4 & 4 & 4 & 24 & 16 & 16 & 16 & 12 & 16 & 16 & 24 & 8 \tabularnewline

7 & 2 & 14 & 2 & 2 & 14 & 14 & 2 & 2 & 14 & 2 & 14 & 0 & 26 & 26 & 26 & 26 & 26 & 14 & 12 \tabularnewline

8 & 30 & 16 & 16 & 16 & 16 & 16 & 16 & 16 & 16 & 16 & 16 & 0 & 16 & 32 & 16 & 16 & 16 & 32 & 0 \tabularnewline

9 & 2 & 2 & 2 & 6 & 2 & 2 & 2 & 2 & 2 & 2 & 22 & 8 & 18 & 18 & 18 & 18 & 18 & 18 & 4 \tabularnewline

10 & 6 & 4 & 4 & 4 & 4 & 4 & 4 & 4 & 4 & 4 & 20 & 0 & 20 & 24 & 20 & 20 & 20 & 40 & 0 \tabularnewline

11 & 2 & 2 & 2 & 2 & 2 & 2 & 2 & 2 & 2 & 2 & 22 & 0 & 22 & 22 & 22 & 22 & 22 & 22 & 0 \tabularnewline

12 & 14 & 8 & 8 & 24 & 8 & 8 & 8 & 8 & 8 & 8 & 52 & 32 & 36 & 32 & 24 & 36 & 36 & 56 & 16\tabularnewline

13 & 2 & 2 & 2 & 2 & 2 & 2 & 2 & 2 & 2 & 2 & 26 & 0 & 26 & 26 & 26 & 26 & 26 & 26 & 0 \tabularnewline

14 & 6 & 28 & 4 & 4 & 28 & 28 & 4 & 4 & 28 & 4 & 28 & 0 & 52 & 56 & 52 & 52 & 52 & 56 & 24 \tabularnewline

15 & 50 & 2 & 18 & 22 & 26 & 26 & 42 & 42 & 26 & 18 & 82 & 56 & 54 & 54 & 54 & 54 & 54 & 78 & 28 \tabularnewline

16 & 62 & 32 & 32 & 32 & 32 & 32 & 32 & 32 & 32 & 32 & 32 & 0 & 32 & 64 & 32 & 32 & 32 & 64 & 0 \tabularnewline

17 & 2 & 2 & 2 & 2 & 2 & 2 & 2 & 2 & 2 & 2 & 34 & 0 & 34 & 34 & 34 & 34 & 34 & 34 & 0 \tabularnewline

18 & 6 & 4 & 4 & 12 & 4 & 4 & 4 & 4 & 4 & 4 & 48 & 16 & 40 & 40 & 36 & 40 & 40 & 72 & 8 \tabularnewline

19 & 2 & 2 & 2 & 2 & 2 & 2 & 2 & 2 & 2 & 2 & 38 & 0 & 38 & 38 & 38 & 38 & 38 & 38 & 0  \tabularnewline

20 & 14 & 8 & 8 & 8 & 8 & 8 & 8 & 8 & 8 & 8 & 40 & 0 & 40 & 48 & 40 & 40 & 40 & 80 & 0  \tabularnewline
\end{tabular}
    \caption{Number of encoded qubits for different stabilizer models on a system of size $L\times L\times L$ where $L$ is the number of stabilizers in each lattice direction. CC$_i$'s refer to the cubic codes, $\text{CCB}_{11}$ to the cubic code 11B and $\text{SFSL}$ to the Sierpinski Fractal spin liquid. For models that self-bifurcate under coarse-graining by a factor of 2, the number of encoded qubits also doubles as the system size is increased by a factor of 2. For $\text{CCB}_{11}$, one needs a minimum cubic system size of $L=3$ since the stabilizer generators are supported on $2\times 2 \times 3$ unit cells. The results in the table are calculated in the supplementary \small{MATHEMATICA} file encodedqubits.nb}.
    \label{encoded_qubits_table}
\end{table}

\section{Number of encoded qubits}
\label{gsd_details}
In Ref.~\onlinecite{haah2013commuting}, Haah used techniques from commutative algebra to derive a formula for the number of encoded qubits for the cubic code CC$_1$ which we recount in \cref{subsec:CC1_degen}. We use the general strategy used by Haah to derive similar formulas for the number of encoded qubits for other cubic codes. We consider CC$_6$ and CC$_{11}$ as examples. We first explain the general definitions and explain the steps used by Haah in his derivation. 

Suppose $\mathfrak{a} = (f, g) \subset \mathbb F_2[x, y, z]$ is an ideal corresponding to a code. Fix some $L \in \mathbb Z_{> 0}$. Imposing the periodicity conditions $x^L - 1 = y^L - 1 = z^L - 1 = 0$, the number of encoded qubits is given by
\begin{align*}
k_{\mathfrak{a}} = 2\dim_{\mathbb F_2}(\mathbb F_2[x, y, z]/I_{\mathfrak{a}})
\end{align*}
where $\dim_{\mathbb F_2}(\mathbb F_2[x, y, z]/I_{\mathfrak{a}})$ is the dimension of $\mathbb F_2[x, y, z]/I_{\mathfrak{a}}$ as a vector space and $I_{\mathfrak{a}}$ is the ideal defined as
\begin{align*}
I_{\mathfrak{a}} = \mathfrak{a} + (x^L - 1, y^L - 1, z^L - 1) = (f, g, x^L - 1, y^L - 1, z^L - 1).
\end{align*}
Due to algebrogeometric reasons, it is preferable to work over an algebraically closed field. Hence we take the algebraic closure $\mathbb F = \overline{\mathbb F_2}$. By extension of scalars of the vector spaces, we have
\begin{align*}
k_{\mathfrak{a}} = 2\dim_{\mathbb F}(\mathbb F[x, y, z]/I_{\mathfrak{a}})
\end{align*}
where $I_{\mathfrak{a}}$ is now the ideal generated in $\mathbb F[x, y, z]$. In order to calculate this, we use the structure theorem for Artinian rings from \cite[Theorem 8.7]{AM16} which we restate in \cref{thm:DecomposionOfArtinianRings}.

\begin{theorem}
\label{thm:DecomposionOfArtinianRings}
Let $A$ be an Artinian ring. Then $A = \prod_{j = 1}^n A_j$ for some $n \in \mathbb Z_{>0}$ where $A_j$ is a local Artinian ring for all integers $1 \leq j \leq n$. Moreover, the decomposition is unique up to isomorphism.
\end{theorem}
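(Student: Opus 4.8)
The statement is verbatim \cite[Theorem~8.7]{AM16}, so the cleanest option is to cite that reference; for completeness I sketch the standard argument, which I would include in condensed form. The plan is to run the classical analysis of Artinian rings. First I would record three preliminary facts about an Artinian ring $A$: (i) $A$ has only finitely many maximal ideals $\mathfrak{m}_1,\dots,\mathfrak{m}_n$; (ii) every prime of $A$ is maximal, so the nilradical $\mathfrak{N}$ equals the Jacobson radical $\bigcap_j \mathfrak{m}_j$; and (iii) $\mathfrak{N}$ is nilpotent. Fact (i) follows by applying the descending chain condition to finite intersections of maximal ideals and then using primality of each $\mathfrak{m}$. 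Fact (ii) follows because for a prime $\mathfrak{p}$ the ring $A/\mathfrak{p}$ is an Artinian domain, hence a field.

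The one substantive point — which I expect to be the main obstacle, or at least the only step that is not pure bookkeeping — is fact (iii). Here one lets $\mathfrak{a}=\mathfrak{N}^k$ be the stable value of the chain $\mathfrak{N}\supseteq\mathfrak{N}^2\supseteq\cdots$, assumes $\mathfrak{a}\neq 0$, and chooses (by the descending chain condition) an ideal $\mathfrak{c}$ minimal among those $\mathfrak{b}$ with $\mathfrak{b}\mathfrak{a}\neq 0$. Picking $x\in\mathfrak{c}$ with $x\mathfrak{a}\neq 0$ forces $\mathfrak{c}=(x)$ and then $x\mathfrak{a}=(x)$ by minimality, so $x=xy$ with $y\in\mathfrak{a}\subseteq\mathfrak{N}$ nilpotent, giving $x=xy=\cdots=xy^m=0$, a contradiction. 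Hence $\mathfrak{N}^k=0$.

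Combining these, $(\mathfrak{m}_1\cdots\mathfrak{m}_n)^k\subseteq\mathfrak{N}^k=0$, so $\mathfrak{m}_1^k\cdots\mathfrak{m}_n^k=0$. Distinct maximal ideals are comaximal, and comaximality is inherited by powers (expand $(\mathfrak{m}_i+\mathfrak{m}_j)^{2k-1}$), so $\bigcap_j \mathfrak{m}_j^k=\prod_j \mathfrak{m}_j^k=0$. The Chinese Remainder Theorem then yields $A\cong\prod_{j=1}^n A/\mathfrak{m}_j^k$, and each factor $A_j:=A/\mathfrak{m}_j^k$ is Artinian (a quotient of $A$) and local, since $\mathfrak{m}_j$ is the only prime of $A$ containing $\mathfrak{m}_j^k$. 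This gives the decomposition into local Artinian rings with $n$ the number of maximal ideals.

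For uniqueness up to isomorphism I would identify the factors intrinsically. Given any decomposition $A=\prod_{j=1}^n A_j$ into local Artinian rings, the maximal ideals of $A$ are exactly the ideals obtained by replacing one factor $A_j$ by its maximal ideal, so $n$ is the number of maximal ideals of $A$; and localizing $A$ at the maximal ideal corresponding to the $j$-th slot kills the other factors and returns $A_j$ (which, being local, equals its own localization). Hence the factors are, up to isomorphism and reordering, precisely the localizations $A_{\mathfrak{m}_j}$, which are determined by $A$. The only delicate bookkeeping is the comaximality of the $\mathfrak{m}_i^k$ and the idempotent/localization description of the factors, neither of which goes beyond standard commutative algebra.
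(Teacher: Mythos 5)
Your proposal is correct, and it matches the paper's treatment: the paper does not prove this statement but simply restates it from the cited reference (Atiyah--Macdonald, Theorem 8.7), and your sketch is precisely the standard argument given there --- finitely many maximal ideals, nilpotency of the nilradical via a minimal ideal $\mathfrak{c}$ with $\mathfrak{c}\mathfrak{a}\neq 0$, the Chinese Remainder Theorem applied to the comaximal powers $\mathfrak{m}_j^k$, and uniqueness via identifying the factors with the localizations $A_{\mathfrak{m}_j}$. Citing the reference, as the paper does, is entirely appropriate here.
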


\begin{remark}
In the above theorem, if $A$ is also a vector space over $\mathbb F$, we have $A = \bigoplus_{j = 1}^n A_j$ and in particular $\dim_{\mathbb F}(A) = \sum_{j = 1}^n \dim_{\mathbb F}(A_j)$.
\end{remark}

We apply \cref{thm:DecomposionOfArtinianRings} to obtain the formula $k_{\mathfrak{a}} = \sum_{\mathfrak{m}} k_{\mathfrak{a}, \mathfrak{m}}$ where
\begin{align*}
k_{\mathfrak{a}, \mathfrak{m}} = 2\dim_{\mathbb F}((\mathbb F[x, y, z]/I_{\mathfrak{a}})_{\mathfrak{m}})
\end{align*}
and the sum is taken over all maximal ideals $\mathfrak{m} \subset \mathbb F[x, y, z]/I_{\mathfrak{a}}$. By the weak form of Hilbert's Nullstellensatz, any maximal ideal $\mathfrak{m} \subset \mathbb F[x, y, z]/I_{\mathfrak{a}}$ is of the form $\mathfrak{m} = (x - x_0, y - y_0, z - z_0)/I_{\mathfrak{a}}$ where $(x_0, y_0, z_0) \in V(I_{\mathfrak{a}}) \subset \mathbb A_{\mathbb F}^3$, i.e., $(x_0, y_0, z_0) \in \mathbb A_{\mathbb F}^3$ satisfies
\begin{align*}
f(x_0, y_0, z_0) = g(x_0, y_0, z_0) = {x_0}^L - 1 = {y_0}^L - 1 = {z_0}^L - 1 = 0.
\end{align*}
In particular $x_0, y_0, z_0 \neq 0$. Since $x_0^L-1=x_0^{2^l L^\prime}-1=(x_0^{L^\prime}-1)^{2^l}=0$, we also have $x_0^{L^\prime}=1$ and similarly, $y_0^{L^\prime}=z_0^{L^\prime}=1$. We now calculate $k_{\mathfrak{a}, \mathfrak{m}}$ for all maximal ideals $\mathfrak{m} \subset \mathbb F[x, y, z]/I_{\mathfrak{a}}$.

Fix a maximal ideal $\mathfrak{m} = (x - x_0, y - y_0, z - z_0)/I_{\mathfrak{a}} \subset \mathbb F[x, y, z]/I_{\mathfrak{a}}$. Now write $L = 2^l L'$ where $2 \nmid L'$ and $l \in \mathbb Z_{\geq 0}$. Consider the factorization
\begin{align*}
x^L - 1 &= x^L - {x_0}^L = (x^{L'})^{2^l} - ({x_0}^{L'})^{2^l} = (x^{L'} - {x_0}^{L'})^{2^l} \\
&= \left((x - x_0)(x^{L' - 1} + x^{L' - 2}x_0 + \dotsb + x{x_0}^{L' - 2} + {x_0}^{L' - 1})\right)^{2^l} \\
&= (x - x_0)^{2^l}(x^{L' - 1} + x^{L' - 2}x_0 + \dotsb + x{x_0}^{L' - 2} + {x_0}^{L' - 1})^{2^l}.
\end{align*}
The right most factor has no further factors of $x - x_0$ since putting $x = x_0$ results in $L'{x_0}^{L' - 1} \neq 0$ since $2 \nmid L'$ and $x_0 \neq 0$. We obtain similar factorization for the variables $y$ and $z$. We recognize that $(x - x_0)^{2^l} = x^{2^l} + {x_0}^{2^l}$ and similarly for the variables $y$ and $z$ since $\Char(\mathbb F) = 2$, which motivates us to define the ideal
\begin{align*}
J_{\mathfrak{a}} = \mathfrak{a} + (x^{2^l} + a_0, y^{2^l} + b_0, z^{2^l} + c_0) = (f, g, x^{2^l} + a_0, y^{2^l} + b_0, z^{2^l} + c_0)
\end{align*}
where $a_0 = {x_0}^{2^l}$, $b_0 = {y_0}^{2^l}$ and $c_0 = {z_0}^{2^l}$. We have the canonical map $\mathbb F[x, y, z]/I_{\mathfrak{a}} \to \mathbb F[x, y, z]/J_{\mathfrak{a}}$ since $I_{\mathfrak{a}} \subset J_{\mathfrak{a}}$. We would like to apply \cite[Corollary 3.2]{AM16} which we restate in \cref{lem:CriteriaForIsomorphismToLocalization}.

\begin{lemma}
\label{lem:CriteriaForIsomorphismToLocalization}
Let $A$ and $B$ be rings and $S \subset A$ be a multiplicatively closed set. If $g: A \to B$ is a ring homomorphism such that
\begin{enumerate}
\item for all $s \in S$, the element $g(s) \in B$ is a unit
\item for all $a \in \ker(g)$, we have $as = 0$ for some $s \in S$
\item for all $b \in B$, there are $a \in A$ and $s \in S$ such that $b = g(a)g(s)^{-1}$
\end{enumerate}
then there is a unique isomorphism $h: S^{-1}A \to B$ such that $g = h \circ f$ where $f: A \to S^{-1}A$ is the canonical homomorphism.
\end{lemma}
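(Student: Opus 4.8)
\emph{Proof plan.} The strategy is to first produce the homomorphism $h$ from the universal property of localization and then show the extra hypotheses force it to be bijective. Concretely, I would begin by recalling that condition (1) — every $g(s)$ with $s\in S$ is a unit in $B$ — is exactly what is needed for the canonical map $f\colon A\to S^{-1}A$ to factor through $g$: one sets $h(a/s) = g(a)g(s)^{-1}$. The first thing to check is that this is well defined, since $a/s = a'/s'$ in $S^{-1}A$ only means $t(s'a - sa') = 0$ for \emph{some} $t\in S$; applying $g$ and cancelling the units $g(t), g(s), g(s')$ yields $g(a)g(s)^{-1} = g(a')g(s')^{-1}$, so $h$ is well defined. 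Verifying that $h$ respects addition and multiplication and that $h\circ f = g$ (use $s=1$) is then a direct computation, and uniqueness is automatic because any $h'$ with $h'\circ f=g$ must send $a/s = (a/1)(s/1)^{-1}$ to $g(a)g(s)^{-1}$.

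Next I would establish that $h$ is an isomorphism. Surjectivity is immediate from (3): given $b\in B$, write $b = g(a)g(s)^{-1} = h(a/s)$. For injectivity, suppose $h(a/s)=0$; since $g(s)$ is invertible this forces $g(a)=0$, i.e. $a\in\ker(g)$, and then (2) gives $s'\in S$ with $as'=0$, whence $a/s = 0$ in $S^{-1}A$ (the witness for the localization relation being $s'$ itself). Thus $\ker(h)=0$ and $h$ is bijective, hence the desired unique isomorphism with $g = h\circ f$.

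I expect the only subtle step to be the well-definedness of $h$: one must not forget that the equivalence relation defining $S^{-1}A$ carries an auxiliary multiplier $t\in S$, so the argument genuinely uses that $g(t)$, not merely $g(s)$ and $g(s')$, is a unit — precisely hypothesis (1). Everything downstream (the ring-homomorphism axioms, the factorization identity, and the kernel computation from hypothesis (2)) is mechanical, so in the write-up I would invoke the universal property of localization as the source of $h$ and its uniqueness, and then spend only a line or two each on surjectivity from (3) and injectivity from (2).
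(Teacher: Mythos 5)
Your argument is correct and is the standard proof of this fact; the paper itself does not prove the lemma but simply restates \cite[Corollary~3.2]{AM16}, whose textbook proof proceeds exactly as you describe (universal property via hypothesis (1), surjectivity from (3), injectivity from (2)). Your remark about the auxiliary multiplier $t\in S$ in the well-definedness check is the right subtlety to flag.
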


Checking conditions of \cref{lem:CriteriaForIsomorphismToLocalization}, we have the isomorphism
\begin{align*}
(\mathbb F[x, y, z]/I_{\mathfrak{a}})_{\mathfrak{m}} \cong \mathbb F[x, y, z]/J_{\mathfrak{a}}
\end{align*}
as rings and hence also as $\mathbb F$-modules which are of course simply vector spaces over $\mathbb F$. Thus we need to calculate $\dim_{\mathbb F}(\mathbb F[x, y, z]/J_{\mathfrak{a}})$. Note that if $l = 0$, then $(\mathbb F[x, y, z]/I_{\mathfrak{a}})_{\mathfrak{m}} \cong \mathbb F[x, y, z]/J_{\mathfrak{a}} \cong \mathbb F$ is the residue field at $\mathfrak{m}$ and hence we simply get $\dim_{\mathbb F}(\mathbb F[x, y, z]/J_{\mathfrak{a}}) = 1$. We assume $l \in \mathbb Z_{>0}$ henceforth.

Most of the time we can eliminate one of the three variables, say $z$, and use the substitutions $x \mapsto x + 1$ and $y \mapsto y + 1$ to show $\mathbb F[x, y, z]/J_{\mathfrak{a}} \cong \mathbb F[x, y]/J_{\mathfrak{a}}^z$, as rings and hence as vector spaces over $\mathbb F$, where $J_{\mathfrak{a}}^z = (h, x^{2^l} + a, y^{2^l} + b)$, $a = a_0 + 1$, $b = b_0 + 1$ and $h$ is the polynomial derived from $f$ and $g$ by eliminating the variable $z$. This further simplifies the problem to the calculation of $\dim_{\mathbb F}(\mathbb F[x, y]/J_{\mathfrak{a}}^z)$. In the following sections, we calculate this for various cubic codes by first finding appropriate Gr\"{o}bner bases with the help of the computer algebra system SageMath. SageMath cannot calculate Gr\"{o}bner bases for a general $l \in \mathbb Z_{> 0}$ but we test various values of $l$ to accurately predict the Gr\"{o}bner bases in terms of $l$ and then we prove that it is indeed so. Then we use the Gr\"{o}bner bases to calculate $k_{\mathfrak{a}, \mathfrak{m}}$. To calculate this, we use \cite[Proposition 2.1.6]{AL94} which we restate in \cref{prp:VectorSpaceBasisUsingGrobnerBasis}.

\begin{theorem}
\label{prp:VectorSpaceBasisUsingGrobnerBasis}
Let $n \in \mathbb Z_{>0}$ and $R = \mathbb F[x_1, x_2, \dotsc, x_n]$ be a polynomial ring endowed with a monomial ordering. Let $I \subset R$ be an ideal and $G \subset R$ be a Gr\"{o}bner basis for $I$. Then the set
\begin{align*}
\{m + I \in R/I: m \in R \text{ is a monomial which is reduced with respect to } G\}
\end{align*}
is a basis for the vector space $R/I$ over $\mathbb F$.
\end{theorem}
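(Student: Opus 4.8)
The plan is to establish the two defining properties of a vector-space basis for the proposed set $B = \{\, m + I : m \text{ a monomial reduced with respect to } G \,\}$, namely that $B$ spans $R/I$ over $\mathbb{F}$ and that $B$ is $\mathbb{F}$-linearly independent. Both properties follow from standard structural facts about Gr\"{o}bner bases, so the work is mostly in citing and assembling those facts correctly rather than in any new computation. I would treat the spanning statement first and then independence.

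For spanning, I would invoke the multivariate division algorithm. Because a monomial ordering is a well-ordering on the set of monomials of $R$, division of an arbitrary $f \in R$ by the finite set $G$ terminates and yields a remainder $r$ with $f - r \in I$ such that no monomial occurring in $r$ is divisible by the leading monomial of any element of $G$; equivalently, $r$ is an $\mathbb{F}$-linear combination of monomials reduced with respect to $G$. Reducing modulo $I$ then shows $f + I$ lies in the $\mathbb{F}$-span of $B$, so $B$ spans $R/I$. Uniqueness of the remainder is not needed here, only its existence together with the stated divisibility property.

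For linear independence, the key input is the membership characterization of Gr\"{o}bner bases: a nonzero $f$ lies in $I$ if and only if its leading monomial is divisible by the leading monomial of some $g \in G$; equivalently, the leading-term ideal of $I$ is generated by the leading terms of $G$. Suppose $\sum_i c_i m_i \in I$ is a finite $\mathbb{F}$-linear combination of distinct reduced monomials $m_i$ with not all $c_i$ zero. Its leading monomial equals some $m_i$, hence is reduced with respect to $G$, hence is not divisible by any leading monomial of an element of $G$, contradicting the characterization. So the combination is the zero polynomial of $R$; since distinct monomials of $R$ are $\mathbb{F}$-linearly independent, all $c_i = 0$. Applying the same argument to a difference $m - m'$ of two distinct reduced monomials shows in addition that the images in $R/I$ of distinct elements of $B$ are distinct, so $B$ is genuinely a linearly independent set (and may of course be infinite when $R/I$ is infinite-dimensional).

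The main obstacle, such as it is, is not conceptual but a matter of care in invoking the right auxiliary lemmas: termination of multivariate division (which rests on Dickson's lemma together with the well-ordering property of the monomial ordering) and the leading-term characterization of Gr\"{o}bner-basis membership. Both are proved in, e.g., \cite{AL94}, and I would cite them rather than reprove them; the only subtlety to watch is ensuring that the remainder produced by division is genuinely supported on reduced monomials, which is precisely what guarantees it lies in the $\mathbb{F}$-span of $B$ and thereby closes the spanning argument.
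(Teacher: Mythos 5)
Your proof is correct; the paper does not prove this statement but simply restates it from \cite[Proposition 2.1.6]{AL94}, and your argument (spanning via the remainder of multivariate division by $G$, independence via the leading-monomial characterization of membership in $I$) is precisely the standard proof found in that reference. The extra observation that $m - m' \notin I$ for distinct reduced monomials, so that the images in $R/I$ are genuinely distinct, is a nice touch that is often glossed over.
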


Finally, we use $k_{\mathfrak{a}} = \sum_{\mathfrak{m}} k_{\mathfrak{a}, \mathfrak{m}}$, where the sum is over all maximal ideals $\mathfrak{m} \subset \mathbb F[x, y, z]/I_{\mathfrak{a}}$, which is feasible to compute explicitly whenever we can explicitly count the number of points in certain appropriate subvarieties of $V(I_{\mathfrak{a}})$.
\begin{remark}
All Gr\"{o}bner bases in the rest of the document are using the lexicographic ordering of monomials with $x < y < z$. Polynomial divisions with respect to a Gr\"{o}bner basis are also using the same ordering of monomials.
\end{remark}
\begin{remark}
We often need the roots of the polynomial $x^2 + x + 1 \in \mathbb F[x]$ which are the primitive cube roots of unity. We denote by $\zeta_n \in \mathbb F$ any choice of a primitive $n$\textsuperscript{th} root of unity, for all $n \in \mathbb Z_{>0}$.
\end{remark}

\subsection{Cubic code 1}
\label{subsec:CC1_degen}
The stabilizer generators of cubic code 1 (CC$_1$) are given by 
\begin{align}
\begin{array}{c}
\drawgenerator{XI}{II}{IX}{XI}{IX}{XX}{XI}{IX}
\quad
\drawgenerator{ZI}{ZZ}{IZ}{ZI}{IZ}{II}{ZI}{IZ}
\end{array}
\end{align}
Hence, the stabilizer ideal that defines CC$_1$ is $\mathfrak{a} = (f, g) = (x + y + z + 1, xy + yz + zx + 1)$. We eliminate the variable $z$ and use the substitutions $x \mapsto x + 1$ and $y \mapsto y + 1$ to obtain
\begin{align*}
J_{\mathfrak{a}}^z = (h, x^{2^l} + a, y^{2^l} + b) = (x^2 + xy + y^2, x^{2^l} + a, y^{2^l} + b).
\end{align*}
Recalling definitions, we know that $a = {x_0}^{2^l} + 1, b = {y_0}^{2^l} + 1$ where ${x_0}^{L'} = {y_0}^{L'} = 1$ such that $a^2 + ab + b^2 = 0$. The last equation is satisfied if and only if $a = b = 0$ or $\left(\frac{b}{a}\right)^2 + \frac{b}{a} + 1 = 0$. In the latter case $\frac{b}{a}$ is a primitive cube root of unity and we assume the choice $\zeta_3 = \frac{b}{a}$.

\begin{lemma}
\label{lem:CC1GrobnerBasis}
For the ideal $J_{\mathfrak{a}}^z = (x^2 + xy + y^2, x^{2^l} + a, y^{2^l} + b)$, we have the following Gr\"{o}bner bases.
\begin{enumerate}
\item\label{itm:CC1_a_b_Zero}	Suppose $a = b = 0$. Then
\begin{align*}
G = \{x^2 + xy + y^2, yx^{2^l - 1}, x^{2^l}\}
\end{align*}
is a Gr\"{o}bner basis.
\item\label{itm:CC1_a_b_NotZero}	Suppose $a^2 + ab + b^2 = 0$ with $a, b \neq 0$. Then
\begin{align*}
G = \{(\zeta_3 + l)x + y, x^{2^l} + a\}
\end{align*}
is a Gr\"{o}bner basis.
\end{enumerate}
\end{lemma}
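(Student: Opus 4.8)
The plan is to verify directly that the two stated generating sets $G$ are Gröbner bases for the ideal $J_{\mathfrak{a}}^z = (x^2+xy+y^2,\ x^{2^l}+a,\ y^{2^l}+b)$ under the lexicographic order with $x<y<z$ (so the relevant monomials are ordered by $y$-degree first). By Buchberger's criterion it suffices to show that every $S$-polynomial of a pair of elements of $G$ reduces to $0$ modulo $G$, and also that each generator of $J_{\mathfrak{a}}^z$ lies in the ideal generated by $G$ (the latter is the easy containment: in both cases $x^{2^l}+a \in (G)$ and one recovers $x^2+xy+y^2$ and $y^{2^l}+b$ from the listed polynomials using the defining relations $a^2+ab+b^2=0$, resp. $a=b=0$).

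**Case \ref{itm:CC1_a_b_Zero}: $a=b=0$.** Here $J_{\mathfrak{a}}^z=(x^2+xy+y^2,\ x^{2^l},\ y^{2^l})$. First I would check that $yx^{2^l-1}$ and $x^{2^l}$ genuinely lie in the ideal: $x^{2^l}$ is a generator, and $yx^{2^l-1} = x^{2^l-2}(x^2+xy+y^2) + x^{2^l} + y^2 x^{2^l-2}$, and iterating $y^2 x^{2^l-2} \equiv$ (lower terms)\ldots more cleanly, $y^2 = (x^2+xy+y^2) - x^2 - xy$ shows $y^2 x^{2^l-2} \in (x^2+xy+y^2, x^{2^l})$ once $2^l-2 \ge \ldots$; I would just push the relation $y^2 \equiv xy + x^2 \pmod{x^2+xy+y^2}$ through, reducing $y^j x^{2^l-j}$ stepwise and noting every term eventually acquires an $x^{2^l}$ factor. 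Then, for Buchberger: the leading terms are $\mathrm{LT}(x^2+xy+y^2)=y^2$, $\mathrm{LT}(yx^{2^l-1})=yx^{2^l-1}$, $\mathrm{LT}(x^{2^l})=x^{2^l}$. The $S$-polynomial of the first two has lcm $y^2 x^{2^l-1}$ and reduces using $y^2\mapsto xy+x^2$ to something divisible by $x^{2^l}$ (hence to $0$); the pair (first, third) has coprime-ish leading terms $y^2$ and $x^{2^l}$ which are coprime, so that $S$-polynomial reduces to $0$ automatically by the coprimality criterion; similarly $\mathrm{LT}(yx^{2^l-1})$ and $x^{2^l}$ give lcm $yx^{2^l}$ and the $S$-polynomial is a multiple of $x^{2^l}$, reducing to $0$. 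So $G$ is Gröbner.

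**Case \ref{itm:CC1_a_b_NotZero}: $a,b\neq 0$, $a^2+ab+b^2=0$.** The key algebraic observation is that modulo $x^{2^l}\equiv -a = a$ (characteristic $2$), one can factor $x^2+xy+y^2$ over the residue field. Writing $\zeta_3=b/a$, I claim that in $\mathbb F[x,y]/(x^2+xy+y^2,x^{2^l}+a,y^{2^l}+b)$ one has $y = (\zeta_3+l)x$; the $+l$ correction is the subtle point and comes from comparing $y^{2^l}$ with $((\zeta_3+\cdots)x)^{2^l}$ using the Frobenius expansion in characteristic $2$, where $(\zeta_3 x)^{2^l}=\zeta_3^{2^l}x^{2^l}$ and $\zeta_3^{2^l}$ cycles with period depending on $l \bmod 2$, producing an affine shift by $l$. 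Concretely I would: (i) show $(\zeta_3+l)x+y$ lies in the ideal by a direct manipulation — subtract a suitable multiple of $x^2+xy+y^2$ to cancel the quadratic part and then use $x^{2^l}+a=0$, $y^{2^l}+b=0$ to kill the rest; (ii) conversely recover $x^2+xy+y^2$ from $(\zeta_3+l)x+y$ together with $x^{2^l}+a$ using $\zeta_3^2+\zeta_3+1=0$; (iii) observe the two leading terms $\mathrm{LT}((\zeta_3+l)x+y)=y$ and $\mathrm{LT}(x^{2^l}+a)=x^{2^l}$ are coprime monomials, so Buchberger's criterion is satisfied trivially and $G$ is a Gröbner basis.

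**Main obstacle.** The routine part is Buchberger/coprimality; the genuine content is pinning down the coefficient $\zeta_3+l$ in part \ref{itm:CC1_a_b_NotZero} — i.e.\ tracking exactly how the Frobenius twist $t\mapsto t^{2^l}$ interacts with the relation $x^{2^l}=a$, $y^{2^l}=b$ and the quadratic $x^2+xy+y^2$. I expect this to require expanding $(cx)^{2^l}=c^{2^l}x^{2^l}$, using $a^{2^l}$-type identities together with $a^2+ab+b^2=0$, and carefully bookkeeping the additive shift that accumulates; this is best checked by first verifying small values of $l$ in SageMath (as the authors indicate) and then proving the pattern by induction on $l$, using $(\,\cdot\,)^2$ as the induction step since squaring is additive in characteristic $2$.
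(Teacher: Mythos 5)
Your overall architecture matches the paper's: Buchberger's criterion (with the coprime--leading-term shortcut) and the containment $J_{\mathfrak{a}}^z\subseteq (G)$ are indeed the routine parts, and the real content is the reverse containment, i.e.\ showing $yx^{2^l-1}\in J_{\mathfrak{a}}^z$ in case 1 and $(\zeta_3+l)x+y\in J_{\mathfrak{a}}^z$ in case 2. But your execution of case 1 is circular. From $yx^{2^l-1}=x^{2^l-2}(x^2+xy+y^2)+x^{2^l}+y^2x^{2^l-2}$, substituting $y^2\equiv x^2+xy$ gives $y^2x^{2^l-2}\equiv x^{2^l}+yx^{2^l-1}\equiv yx^{2^l-1}$, so the reduction returns exactly the monomial you started with; and the claim that $y^2x^{2^l-2}\in(x^2+xy+y^2,\,x^{2^l})$ is in fact false: $\mathbb F[x,y]/(x^2+xy+y^2)$ is a free $\mathbb F[x]$-module with basis $\{1,y\}$, so $yx^{2^l-1}$ does not lie in the ideal generated there by $x^{2^l}$ --- the third generator $y^{2^l}$ is indispensable, and your sketch never uses it. (This is also why $yx^{2^l-1}$ must be adjoined to obtain a Gr\"{o}bner basis: it is already reduced with respect to the original three generators, so no naive division argument can certify its membership.) Similarly, in case 2 you correctly identify the coefficient $\zeta_3+l$ as the crux and propose a squaring induction, but you never actually produce the identity that yields it, and that identity is the entire content of the lemma.

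The missing ingredient, which the paper supplies and which settles both cases at once, is the congruence $nx^{2^n}+x^{2^n-1}y+y^{2^n}\equiv 0\pmod{J_{\mathfrak{a}}^z}$ for all integers $0\le n\le l$, proved by induction on $n$: squaring the case $n-1$ (Frobenius) and then substituting $y^2\equiv x^2+xy$ converts the coefficient $(n-1)^2+1$ into $n^2\equiv n\pmod 2$, which is precisely where the additive drift by $l$ accumulates. Taking $n=l$ and using $x^{2^l}\equiv a$, $y^{2^l}\equiv b$ gives $yx^{2^l-1}\equiv 0$ when $a=b=0$, and, after multiplying by $x/a$, gives $(b/a+l)x+y=(\zeta_3+l)x+y\equiv 0$ when $a,b\neq 0$. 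Your instinct --- verify small $l$, then induct by squaring --- is the right one, but without this explicit identity the proof is incomplete, and the case 1 reduction as written does not close.
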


\begin{proof}
It is a straight forward calculation using the S-polynomials of Buchberger's algorithm to verify that $G$ is a Gr\"{o}bner basis for the ideal $(G)$ for all the cases. It is also a straight forward calculation by polynomial divisions by elements in $G$ to verify that $J_{\mathfrak{a}}^z \subset (G)$ for all the cases. It remains to show that $(G) \subset J_{\mathfrak{a}}^z$ for all the cases. This is shown below assuming the same hypotheses and using the same set $G$ as in the lemma for the corresponding cases.

\medskip
\noindent
\textit{\cref{itm:CC1_a_b_Zero}.} The only nontrivial containment we need to show is $yx^{2^l - 1} \in J_{\mathfrak{a}}^z$, i.e., $yx^{2^l - 1} \equiv 0 \pmod{J_{\mathfrak{a}}^z}$. This follows if we show our claim $nx^{2^n} + x^{2^n - 1}y + y^{2^n} \equiv 0 \pmod{J_{\mathfrak{a}}^z}$ for all integers $0 \leq n \leq l$ by taking the $n = l$ case because $x^{2^l} \equiv y^{2^l} \equiv 0 \pmod{J_{\mathfrak{a}}^z}$. We show this by induction. The base case $n = 0$ is trivial. Suppose that the claim holds for some integer $0 \leq n - 1 \leq l - 1$, i.e., $(n - 1)x^{2^{n - 1}} + x^{2^{n - 1} - 1}y + y^{2^{n - 1}} \equiv 0 \pmod{J_{\mathfrak{a}}^z}$. Squaring this and using $x^2 + xy + y^2 \equiv 0 \pmod{J_{\mathfrak{a}}^z}$, we get
\begin{align*}
&(n - 1)^2x^{2^n} + x^{2^n - 2}y^2 + y^{2^n} \equiv 0 \pmod{J_{\mathfrak{a}}^z} \\
\implies{}&(n^2 - 1)x^{2^n} + x^{2^n - 2}(x^2 + xy) + y^{2^n} \equiv 0 \pmod{J_{\mathfrak{a}}^z} \\
\implies{}&nx^{2^n} + x^{2^n - 1}y + y^{2^n} \equiv 0 \pmod{J_{\mathfrak{a}}^z}
\end{align*}
where we use the fact that $n^2 \equiv n \pmod{2}$ by Fermat's little theorem or by directly checking both cases $n \equiv 0 \pmod{2}$ and $n \equiv 1 \pmod{2}$.

\medskip
\noindent
\textit{\cref{itm:CC1_a_b_NotZero}.} The only nontrivial containment we need to show is $(\zeta_3 + l)x + y \in J_{\mathfrak{a}}^z$, i.e., $(\zeta_3 + l)x + y \equiv 0 \pmod{J_{\mathfrak{a}}^z}$. The equation $nx^{2^n} + x^{2^n - 1}y + y^{2^n} \equiv 0 \pmod{J_{\mathfrak{a}}^z}$ holds for all integers $0 \leq n \leq l$ in this case as well. The proof is exactly the same as above. By hypothesis, $a, b \neq 0$. Using the case $n = l$ and multiplying by $\frac{x}{a}$, we get
\begin{align*}
&\frac{x}{a}(lx^{2^l} + x^{2^l - 1}y + y^{2^l}) \equiv 0 \pmod{J_{\mathfrak{a}}^z} \\
\implies{}&\frac{1}{a}(lx^{2^l + 1} + x^{2^l}y + bx) \equiv 0 \pmod{J_{\mathfrak{a}}^z} \\
\implies{}&\frac{1}{a}(lax + ay + bx) \equiv 0 \pmod{J_{\mathfrak{a}}^z} \\
\implies{}&\left(\frac{b}{a} + l\right)x + y \equiv 0 \pmod{J_{\mathfrak{a}}^z} \\
\implies{}&\left(\zeta_3 + l\right)x + y \equiv 0 \pmod{J_{\mathfrak{a}}^z}.
\end{align*}
\end{proof}
Now we simply read off the dimension using the Gr\"{o}bner basis and obtain the formula
\begin{align*}
\dim_{\mathbb F}(\mathbb F[x, y]/J_{\mathfrak{a}}^z) =
\begin{cases}
2^{l + 1} - 1, & a = b = 0 \\
2^l, & a^2 + ab + b^2 = 0 \text{ with } a, b \neq 0.
\end{cases}
\end{align*}
Recalling definitions, the above gives the formula
\begin{align*}
k_{\mathfrak{a}, \mathfrak{m}} =
\begin{cases}
2^{l + 2} - 2, & x_0 = y_0 = 1 \\
2^{l + 1}, & (x_0 + 1)^2 + (x_0 + 1)(y_0 + 1) + (y_0 + 1)^2 = 0 \text{ with } x_0, y_0 \neq 1.
\end{cases}
\end{align*}
Now we wish to calculate $k_{\mathfrak{a}} = \sum_{\mathfrak{m}} k_{\mathfrak{a}, \mathfrak{m}}$ where the sum is over all maximal ideals $\mathfrak{m} \subset \mathbb F[x, y, z]/I_{\mathfrak{a}}$. For the calculation, we first need to explicitly calculate the points in the variety $V(I_{\mathfrak{a}})$. To solve $h = 0$, we divide by $x^2$ to get $\left(\frac{y}{x}\right)^2 + \frac{y}{x} + 1 = 0$ and hence the solutions are $\{(x_0', \zeta_3 x_0') \in A_{\mathbb F}^2: x_0' \in \mathbb F\} \cup \{(x_0', {\zeta_3}^2 x_0') \in A_{\mathbb F}^2: x_0' \in \mathbb F\}$. Thus we calculate that the solutions of $f = g = 0$ is the set of points $V(\mathfrak{a}) = V_1(\mathfrak{a}) \cup V_2(\mathfrak{a})$ where
\begin{align*}
V_1(\mathfrak{a}) &= \{(x_0' + 1, \zeta_3 x_0' + 1, {\zeta_3}^2 x_0' + 1) \in A_{\mathbb F}^3: x_0' \in \mathbb F\} \\
V_2(\mathfrak{a}) &= \{(x_0' + 1, {\zeta_3}^2 x_0' + 1, \zeta_3 x_0' + 1) \in A_{\mathbb F}^3: x_0' \in \mathbb F\}.
\end{align*}
Note that the intersection is $V_{\cap}(\mathfrak{a}) = V_1(\mathfrak{a}) \cap V_2(\mathfrak{a}) = \{(1, 1, 1)\}$ which consists of the only point satisfying $x_0 = y_0 = 1$. Imposing the periodic conditions $x^L - 1 = y^L - 1 = z^L - 1 = 0$, we have the corresponding set of solutions $V(I_{\mathfrak{a}}) = V_1(I_{\mathfrak{a}}) \cup V_2(I_{\mathfrak{a}})$ where $V_j(I_{\mathfrak{a}}) = \{(x_0, y_0, z_0) \in V_j(\mathfrak{a}): {x_0}^{L'} = {y_0}^{L'} = {z_0}^{L'} = 1\}$ for all $j \in \{1, 2\}$. The intersection is $V_{\cap}(I_{\mathfrak{a}}) = \{(1, 1, 1)\}$. Thus, $\#V_{\cap}(I_{\mathfrak{a}}) = 1$ and
\begin{align*}
&\#(V_1(I_{\mathfrak{a}}) \setminus V_{\cap}(I_{\mathfrak{a}})) = \#(V_2(I_{\mathfrak{a}}) \setminus V_{\cap}(I_{\mathfrak{a}})) \\
={}&\deg\left(\gcd((x + 1)^{L'} + 1, (\zeta_3 x + 1)^{L'} + 1, ({\zeta_3}^2 x + 1)^{L'} + 1)\right) - 1.
\end{align*}
we put everything together using the formula
\begin{align}
k_{\mathfrak{a}} &= (2^{l + 2} - 2)\#V_{\cap}(I_{\mathfrak{a}}) + 2^{l + 1}\#(V_1(I_{\mathfrak{a}}) \setminus V_{\cap}(I_{\mathfrak{a}})) + 2^{l + 1}\#(V_2(I_{\mathfrak{a}}) \setminus V_{\cap}(I_{\mathfrak{a}})) \\
\label{eqn:CC1DegeneracyGCDFormula}
&= 2^{l + 2}\deg\left(\gcd((x + 1)^{L'} + 1, (\zeta_3 x + 1)^{L'} + 1, ({\zeta_3}^2 x + 1)^{L'} + 1)\right) - 2.
\end{align}
In general, it is difficult to obtain a more explicit formula. However, we obtain a more explicit formula for some special cases.

First suppose $L' = 1$. Then it is easy to see that $x$ is a greatest common divisor in \cref{eqn:CC1DegeneracyGCDFormula}. Hence \cref{eqn:CC1DegeneracyGCDFormula} gives $k_{\mathfrak{a}} = 2^{l + 2} - 2 = 4L - 2$.

Now suppose $L' = 2^n + 1$ for some $n \in \mathbb Z_{\geq 0}$. Then we have the factorization
\begin{align*}
(x + 1)^{2^n + 1} + 1 &= (x + 1)(x + 1)^{2^n} + 1 = (x + 1)(x^{2^n} + 1) + 1 \\
&= x^{2^n + 1} + x^{2^n} + x = x(x^{2^n} + x^{2^n - 1} + 1).
\end{align*}
We apply the same equation for $(\zeta_3 x + 1)^{2^n + 1} + 1$ and $({\zeta_3}^2 x + 1)^{2^n + 1} + 1$ to get the factorization
\begin{align*}
(\zeta_3 x + 1)^{2^n + 1} + 1 &= \zeta_3x((\zeta_3 x)^{2^n} + (\zeta_3 x)^{2^n - 1} + 1) \\
({\zeta_3}^2 x + 1)^{2^n + 1} + 1 &= {\zeta_3}^2 x(({\zeta_3}^2 x)^{2^n} + ({\zeta_3}^2 x)^{2^n - 1} + 1).
\end{align*}
Let $r$ be a root of the second factor $x^{2^n} + x^{2^n - 1} + 1$. Suppose that $2^n \equiv 1 \pmod{3}$. In this case we have
\begin{align*}
(\zeta_3 r)^{2^n} + (\zeta_3 r)^{2^n - 1} + 1 = \zeta_3 r^{2^n} + r^{2^n - 1} + 1 = \zeta_3 r^{2^n} + r^{2^n} = r^{2^n}(\zeta_3 + 1)^{2^n}.
\end{align*}
So $(\zeta_3 r)^{2^n} + (\zeta_3 r)^{2^n - 1} + 1 = 0$ implies $r = 0$ which is a contradiction. Suppose that $2^n \equiv 2 \pmod{3}$. In this case we have
\begin{align*}
(\zeta_3 r)^{2^n} + (\zeta_3 r)^{2^n - 1} + 1 = {\zeta_3}^2 r^{2^n} + \zeta_3 r^{2^n - 1} + 1 = {\zeta_3}^2 r^{2^n} + \zeta_3 r^{2^n} + \zeta_3 + 1.
\end{align*}
So $(\zeta_3 r)^{2^n} + (\zeta_3 r)^{2^n - 1} + 1 = 0$ implies
\begin{align*}
&\zeta_3 r^{2^n}(\zeta_3 + 1) + (\zeta_3 + 1) = 0 \\
\implies{}&r^{2^n} + {\zeta_3}^{2^n} = 0 \\
\implies{}&(r + \zeta_3)^{2^n} = 0.
\end{align*}
Thus $r = \zeta_3$. But then
\begin{align*}
({\zeta_3}^2 r)^{2^n} + ({\zeta_3}^2 r)^{2^n - 1} + 1 = ({\zeta_3}^2 \cdot \zeta_3)^{2^n} + ({\zeta_3}^2 \cdot \zeta_3)^{2^n - 1} + 1 = 1 + 1 + 1 = 1 \neq 0.
\end{align*}
Thus, in any case, there is no common root among the second factors. So $x$ is a greatest common divisor in the \cref{eqn:CC1DegeneracyGCDFormula}. Now \cref{eqn:CC1DegeneracyGCDFormula} gives $k_{\mathfrak{a}} = 2^{l + 2} - 2$.

Now suppose $L' = 4^n - 1$ for some $n \in \mathbb Z_{>0}$. We calculate that
\begin{align*}
&(x + 1)^{4^n} = x^{4^n} - 1 = (x - 1)(x^{4^n - 1} + x^{4^n - 2} + \dotsb + x + 1) \\
\implies{}&(x + 1)^{4^n - 1} + 1 = x^{4^n - 1} + x^{4^n - 2} + \dotsb + x^2 + x.
\end{align*}
Thus we have the factorization
\begin{align*}
(x + 1)^{4^n - 1} + 1 &= x(x^{4^n - 2} + x^{4^n - 3} + \dotsb + x + 1) \\
&= x(x^2 + x + 1)(x^{4^n - 4} + x^{4^n - 7} + \dotsb + x^3 + 1) \\
&= x(x + \zeta_3)(x + {\zeta_3}^2)(x^{4^n - 4} + x^{4^n - 7} + \dotsb + x^3 + 1).
\end{align*}
We apply the same equation for $(\zeta_3 x + 1)^{4^n - 1} + 1$ and $({\zeta_3}^2 x + 1)^{4^n - 1} + 1$ to get the factorizations
\begin{align*}
(\zeta_3 x + 1)^{4^n - 1} + 1 &= x(x + 1)(x + \zeta_3)(x^{4^n - 4} + x^{4^n - 7} + \dotsb + x^3 + 1) \\
({\zeta_3}^2 x + 1)^{4^n - 1} + 1 &= x(x + 1)(x + {\zeta_3}^2)(x^{4^n - 4} + x^{4^n - 7} + \dotsb + x^3 + 1).
\end{align*}
It is easy to see that $x(x^{4^n - 4} + x^{4^n - 7} + \dotsb + x^3 + 1)$ is a greatest common divisor in \cref{eqn:CC1DegeneracyGCDFormula} whose degree is $4^n - 3 = L' - 2$. Using this in \cref{eqn:CC1DegeneracyGCDFormula} gives $k_{\mathfrak{a}} = 4L - 2^{l + 3} - 2$.

Now suppose $L' = 2^{2n + 1} - 1$ for some $n \in \mathbb Z_{\geq 0}$. Suppose $r$ is a root of $(x + 1)^{2^{2n + 1} - 1} + 1$. We calculate that
\begin{align*}
&(x + 1)^{2^{2n + 1}} = x^{2^{2n + 1}} - 1 = (x - 1)(x^{2^{2n + 1} - 1} + x^{2^{2n + 1} - 2} + \dotsb + x + 1) \\
\implies{}&(x + 1)^{2^{2n + 1} - 1} + 1 = x^{2^{2n + 1} - 1} + x^{2^{2n + 1} - 2} + \dotsb + x^2 + x.
\end{align*}
Thus we have the factorization
\begin{align*}
(x + 1)^{2^{2n + 1} - 1} + 1 = x(x^{2^{2n + 1} - 2} + x^{2^{2n + 1} - 3} + \dotsb + x + 1).
\end{align*}
We apply the same equation for $(\zeta_3 x + 1)^{2^{2n + 1} - 1} + 1$ and $({\zeta_3}^2 x + 1)^{2^{2n + 1} - 1} + 1$ to get the factorizations
\begin{align*}
(\zeta_3 x + 1)^{2^{2n + 1} - 1} + 1 &= \zeta_3 x((\zeta_3 x)^{2^{2n + 1} - 2} + (\zeta_3 x)^{2^{2n + 1} - 3} + \dotsb + \zeta_3 x + 1) \\
({\zeta_3}^2 x + 1)^{2^{2n + 1} - 1} + 1 &= {\zeta_3}^2 x(({\zeta_3}^2 x)^{2^{2n + 1} - 2} + ({\zeta_3}^2 x)^{2^{2n + 1} - 3} + \dotsb + {\zeta_3}^2 x + 1).
\end{align*}
Suppose $r$ is a root of the second factor $x^{2^{2n + 1} - 2} + x^{2^{2n + 1} - 3} + \dotsb + x + 1$. By multiplying by $r - 1$, we find that $r^{2^{2n + 1} - 1} = 1$, i.e., $r$ is a $(2^{2n + 1} - 1)$\textsuperscript{th} root of unity. Similarly, so are $\zeta_3 r$ and ${\zeta_3}^2 r$. But then we have
\begin{align*}
(\zeta_3 r)^{2^{2n + 1} - 1} = 1 \implies (\zeta_3)^{2^{2n + 1} - 1} = 1
\end{align*}
which is a contradiction because $2^{2n + 1} - 1 \equiv 1 \pmod{3}$ for all $n \in \mathbb Z_{\geq 0}$. Thus $x$ is a greatest common divisor in \cref{eqn:CC1DegeneracyGCDFormula}. Now \cref{eqn:CC1DegeneracyGCDFormula} gives $k_{\mathfrak{a}} = 2^{l + 2} - 2$. We summarize the results in the following theorem.

\begin{theorem}
\label{thm:NumberOfEncodedQubitsCC1}
Consider CC$_1$ defined by the ideal $\mathfrak{a} = (x + y + z + 1, xy + yz + zx + 1)$. Imposing the periodic conditions $x^L - 1 = y^L - 1 = z^L - 1 = 0$ for some $L = 2^l L'$ where $2 \nmid L' \in \mathbb Z_{>0}$ and $l \in \mathbb Z_{\geq 0}$, the number of encoded qubits is given by
\begin{align*}
k_{\mathfrak{a}} = 2^{l + 2}\deg\left(\gcd((x + 1)^{L'} + 1, (\zeta_3 x + 1)^{L'} + 1, ({\zeta_3}^2 x + 1)^{L'} + 1)\right) - 2.
\end{align*}
Moreover, we have the explicit formulas
\begin{align*}
k_{\mathfrak{a}} =
\begin{cases}
4L - 2, & L = 2^l \\
2^{l + 2} - 2, & L = 2^l \cdot (2^n + 1) \\
4L - 2^{l + 3} - 2, & L = 2^l \cdot (4^n - 1) \\
2^{l + 2} - 2, & L = 2^l \cdot (2^{2n + 1} - 1).
\end{cases}
\end{align*}
\end{theorem}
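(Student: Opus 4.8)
The plan is to obtain the closed form for $k_{\mathfrak a}$ by assembling three facts established above and then to specialize the resulting greatest common divisor to the four families of system sizes. The three facts are: the decomposition $k_{\mathfrak a} = \sum_{\mathfrak m} k_{\mathfrak a,\mathfrak m}$ over the finitely many maximal ideals $\mathfrak m \subset \mathbb F[x,y,z]/I_{\mathfrak a}$, equivalently over the points of $V(I_{\mathfrak a})$; the local contribution $k_{\mathfrak a,\mathfrak m}$, which after passing to $(\mathbb F[x,y,z]/I_{\mathfrak a})_{\mathfrak m} \cong \mathbb F[x,y,z]/J_{\mathfrak a}$, eliminating $z$, and substituting $x \mapsto x+1$, $y \mapsto y+1$, is read off from a Gr\"obner basis of $J_{\mathfrak a}^z = (x^2+xy+y^2,\, x^{2^l}+a,\, y^{2^l}+b)$; and an explicit enumeration of $V(I_{\mathfrak a})$.

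For the local step I would note that $a^2+ab+b^2=0$ forces either $a=b=0$, which happens exactly at the point $x_0=y_0=1$, or $a,b\neq 0$ with $b/a$ a primitive cube root of unity. \cref{lem:CC1GrobnerBasis} supplies a Gr\"obner basis in each case, and \cref{prp:VectorSpaceBasisUsingGrobnerBasis} then yields $\dim_{\mathbb F}(\mathbb F[x,y]/J_{\mathfrak a}^z) = 2^{l+1}-1$ in the first case and $2^l$ in the second (the case $l=0$ being immediate and consistent). Unwinding the definitions, $k_{\mathfrak a,\mathfrak m} = 2^{l+2}-2$ at the single point $(1,1,1)$ and $k_{\mathfrak a,\mathfrak m} = 2^{l+1}$ at every other point of $V(I_{\mathfrak a})$.

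For the global step I would describe $V(I_{\mathfrak a})$ explicitly. Solving $x^2+xy+y^2=0$ shows $V(\mathfrak a)$ is the union of two lines, $(x_0'+1,\,\zeta_3 x_0'+1,\,{\zeta_3}^2 x_0'+1)$ and its $\zeta_3 \leftrightarrow {\zeta_3}^2$ analogue, meeting exactly at $(1,1,1)$ (the value $x_0'=0$). Since $t\mapsto t^{2^l}$ is injective in characteristic $2$, the periodicity $x^L-1=y^L-1=z^L-1=0$ is equivalent to the degree-$L'$ conditions, so on each line the admissible $x_0'$ are precisely the common roots -- all simple, as $L'$ is odd -- of $(x+1)^{L'}+1$, $(\zeta_3 x+1)^{L'}+1$ and $({\zeta_3}^2 x+1)^{L'}+1$. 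Each line thus contributes $\deg\gcd(\cdots)-1$ points besides $(1,1,1)$, and summing gives $k_{\mathfrak a} = (2^{l+2}-2) + 2\cdot 2^{l+1}(\deg\gcd(\cdots)-1)$, which simplifies to $2^{l+2}\deg\gcd(\cdots)-2$, the stated general formula.

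Finally I would compute $\deg\gcd(\cdots)$ for each family. For $L'=1$ the three polynomials are $x$, $\zeta_3 x$, ${\zeta_3}^2 x$, so the gcd is $x$ and $k_{\mathfrak a}=2^{l+2}-2=4L-2$. For $L'=2^n+1$ and $L'=2^{2n+1}-1$, I would factor $(\lambda x+1)^{L'}+1 = \lambda x\, q_\lambda(x)$ for $\lambda\in\{1,\zeta_3,{\zeta_3}^2\}$ and show, using the relevant congruence modulo $3$ together with ${\zeta_3}^2+\zeta_3+1=0$, that $q_1,q_{\zeta_3},q_{{\zeta_3}^2}$ have no common root; hence the gcd is $x$ and $k_{\mathfrak a}=2^{l+2}-2$. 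For $L'=4^n-1$ I would instead factor each $(\lambda x+1)^{4^n-1}+1$ as $\lambda x$ times a product of two of the linear forms $x+1$, $x+\zeta_3$, $x+{\zeta_3}^2$ times a common cofactor $x^{4^n-4}+x^{4^n-7}+\dotsb+x^3+1$, so the gcd is $x$ times that cofactor, of degree $4^n-3=L'-2$, giving $k_{\mathfrak a}=2^{l+2}(L'-2)-2=4L-2^{l+3}-2$. The main obstacle is precisely this last case: unlike the one-line contradiction available for $L'=2^n+1$ and $L'=2^{2n+1}-1$, here one must carefully verify both that the claimed cofactor divides all three polynomials and that no larger common factor exists, which needs a genuine analysis of the roots of the tail polynomials rather than a slick root-count.
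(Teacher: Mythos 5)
Your proposal is correct and follows essentially the same route as the paper's proof: the same local--global decomposition $k_{\mathfrak a}=\sum_{\mathfrak m}k_{\mathfrak a,\mathfrak m}$, the same Gr\"obner-basis computation of the local dimensions ($2^{l+1}-1$ at $(1,1,1)$ and $2^l$ elsewhere), the same two-line description of $V(I_{\mathfrak a})$ meeting at $(1,1,1)$, and the same factorizations in the four special cases, including the cofactor $x^{4^n-4}+x^{4^n-7}+\dotsb+x^3+1$ for $L'=4^n-1$. The last case is handled in the paper exactly as you outline, with the gcd being $x$ times the common cofactor because each linear factor $x+1$, $x+\zeta_3$, $x+{\zeta_3}^2$ is missing from one of the three polynomials.
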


\begin{remark}
\cref{thm:NumberOfEncodedQubitsCC1} implies that $k_{\mathfrak{a}}$ as a function of $L$, obeys the scaling relation
\begin{align*}
k_{\mathfrak{a}}(2^r L) = 2^r k_{\mathfrak{a}}(L) + 2(2^r - 1)
\end{align*}
for all $r \in \mathbb Z_{\geq 0}$.
\end{remark}

\subsection{Cubic code 6}
\label{subsec:CC6_degen}
The stabilizer generators of cubic code 6 (CC$_6$) are given by 
\begin{align}
\begin{array}{c}
\drawgenerator{XI}{II}{IX}{II}{XX}{XX}{XI}{IX}
\quad
\drawgenerator{ZZ}{ZZ}{IZ}{ZI}{IZ}{II}{ZI}{II}
\end{array}
\end{align}
Hence, the stabilizer ideal that defines CC$_6$ is $\mathfrak{a} = (f, g) = (x + y + z + 1, yz + zx + y + 1)$. We eliminate the variable $y$ to obtain
\begin{align*}
J_{\mathfrak{a}}^y = (h, x^{2^l} + a, z^{2^l} + c) = (z^2 + x, x^{2^l} + a, z^{2^l} + c).
\end{align*}
Recalling definitions, we know that $a = {x_0}^{2^l} + 1, c = {z_0}^{2^l} + 1$ where ${x_0}^{L'} = {z_0}^{L'} = 1$ such that $c^2 + a = 0$.

\begin{lemma}
\label{lem:CC$_6$GrobnerBasis}
For the ideal $J_{\mathfrak{a}}^y = (z^2 + x, x^{2^l} + a, z^{2^l} + c)$, we have that
\begin{align*}
G = \{z^2 + x, x^{2^{l - 1}} + c\}
\end{align*}
is a Gr\"{o}bner basis.
\end{lemma}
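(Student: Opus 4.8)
The plan is to follow the template of the proof of \cref{lem:CC1GrobnerBasis}: first show that $G$ generates the ideal $J_{\mathfrak{a}}^y$, then verify the Gr\"obner basis property via Buchberger's criterion. The feature that makes CC$_6$ simpler than CC$_1$ is that the relation $h = z^2 + x$ expresses $x$ in terms of $z^2$ modulo the ideal, so the situation is essentially one-variable and, unlike for CC$_1$, no case distinction on whether the constants vanish is needed.

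The single algebraic fact doing the work is the Frobenius identity in characteristic $2$,
\begin{align*}
(z^2 + x)^{2^{l-1}} = z^{2^l} + x^{2^{l-1}},
\end{align*}
together with the constraint recorded just before the lemma, $c^2 + a = 0$, i.e.\ $a = c^2$. First I would check $(G) \subseteq J_{\mathfrak{a}}^y$: the only nonobvious containment is $x^{2^{l-1}} + c \in J_{\mathfrak{a}}^y$, which follows by writing $x^{2^{l-1}} + c = (z^{2^l} + c) + (z^2 + x)^{2^{l-1}}$, both summands lying in $J_{\mathfrak{a}}^y$. For the reverse containment $J_{\mathfrak{a}}^y \subseteq (G)$, I would observe that $x^{2^l} + a = (x^{2^{l-1}} + c)^2$ since $a = c^2$, and that $z^{2^l} + c = (x^{2^{l-1}} + c) + (z^2 + x)^{2^{l-1}} \in (G)$, while $z^2 + x \in G$ by definition. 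This gives $(G) = J_{\mathfrak{a}}^y$.

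It then remains to verify the Gr\"obner basis condition for the lexicographic order with $x < z$. The leading terms are $z^2$ for $z^2 + x$ and $x^{2^{l-1}}$ for $x^{2^{l-1}} + c$; here the hypothesis $l \geq 1$ ensures the second exponent is a positive power of $x$, and the constant $c$ (possibly $0$) never becomes a leading term. These leading monomials are coprime --- one a pure power of $z$, the other a pure power of $x$ --- so by Buchberger's product (coprimality) criterion (cf.\ \cite{AL94}) the unique S-polynomial $S(z^2 + x,\, x^{2^{l-1}} + c)$ reduces to $0$ with respect to $G$. Since this is the only S-pair and $G$ generates $J_{\mathfrak{a}}^y$, Buchberger's criterion is met and $G$ is a Gr\"obner basis.

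I do not anticipate a real obstacle; the argument is short, and the only points requiring care are the bookkeeping with the $2^{l-1}$ and $2^l$ exponents and the consistency of the elimination of $y$ with the definitions of $a$ and $c$. The conceptual point worth emphasizing is that the constraint on the pair $(a, c)$ is exactly $a = c^2$, which makes $x^{2^l} + a$ a perfect square; this is precisely what collapses the two cases appearing in \cref{lem:CC1GrobnerBasis} into a single uniform statement here.
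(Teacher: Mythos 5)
Your proof is correct and follows essentially the same route as the paper's: the Gr\"obner property is dispatched by Buchberger's criterion (here via coprime leading terms $z^2$ and $x^{2^{l-1}}$, using the standing assumption $l>0$), and the equality $(G)=J_{\mathfrak{a}}^y$ by direct manipulations exploiting $a=c^2$. If anything, your derivation of the one nontrivial containment via the explicit identity $x^{2^{l-1}}+c=(z^{2^l}+c)+(z^2+x)^{2^{l-1}}$ is slightly cleaner than the paper's, which deduces $x^{2^{l-1}}+c\in J_{\mathfrak{a}}^y$ from the factorization $x^{2^l}+a=(x^{2^{l-1}}+c)^2$.
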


\begin{proof}
We assume the same hypotheses and use the same set $G$ as in the lemma. It is a straight forward calculation using the S-polynomials of Buchberger's algorithm to verify that $G$ is a Gr\"{o}bner basis for the ideal $(G)$. It is also a straight forward calculation by polynomial divisions by elements in $G$ to verify that $J_{\mathfrak{a}}^y \subset (G)$. It remains to show that $(G) \subset J_{\mathfrak{a}}^y$.

The only nontrivial containment we need to show is $x^{2^{l - 1}} + c \in J_{\mathfrak{a}}^y$, i.e., $x^{2^{l - 1}} + c \equiv 0 \pmod{J_{\mathfrak{a}}^y}$. We calculate that
\begin{align*}
x^{2^l} + a = (x^{2^{l - 1}})^2 + c^2 = (x^{2^{l - 1}} + c)^2.
\end{align*}
Hence $x^{2^l} + a \equiv 0 \pmod{J_{\mathfrak{a}}^y}$ implies $x^{2^{l - 1}} + c \equiv 0 \pmod{J_{\mathfrak{a}}^y}$ as desired.
\end{proof}

Now we simply read off the dimension using the Gr\"{o}bner basis and obtain the formula
\begin{align*}
k_{\mathfrak{a}, \mathfrak{m}} = 2\dim_{\mathbb F}(\mathbb F[x, y]/J_{\mathfrak{a}}^y) = 2^{l + 1}.
\end{align*}
We wish to calculate $k_{\mathfrak{a}} = \sum_{\mathfrak{m}} k_{\mathfrak{a}, \mathfrak{m}}$ where the sum is over all maximal ideals $\mathfrak{m} \subset \mathbb F[x, y, z]/I_{\mathfrak{a}}$. For the calculation, we first need to explicitly calculate the points in the variety $V(I_{\mathfrak{a}})$. By first finding solutions of $h = 0$, we easily calculate that the solutions of $f = g = 0$ is the set of points $V(\mathfrak{a}) = \{({z_0'}^2 + 1, {z_0'}^2 + z_0' + 1, z_0' + 1) \in \mathbb A_{\mathbb F}^3: z_0' \in \mathbb F\}$. Imposing the periodic conditions $x^L - 1 = y^L - 1 = z^L - 1 = 0$, we have the corresponding set of solutions $V(I_{\mathfrak{a}}) = \{(x_0, y_0, z_0) \in V(\mathfrak{a}): {x_0}^{L'} = {y_0}^{L'} = {z_0}^{L'} = 1\}$. Thus
\begin{align*}
\#V(I_{\mathfrak{a}}) = \deg\left(\gcd((z^2 + 1)^{L'} + 1, (z^2 + z + 1)^{L'} + 1, (z + 1)^{L'} + 1)\right).
\end{align*}
Since $(z^2 + 1)^{L'} + 1 = ((z + 1)^{L'} + 1)^2$, we can in fact simplify the equation to
\begin{align*}
\#V(I_{\mathfrak{a}}) = \deg\left(\gcd((z + 1)^{L'} + 1, (z^2 + z + 1)^{L'} + 1)\right).
\end{align*}
We put everything together using the formula
\begin{align*}
k_{\mathfrak{a}} = 2^{l + 1}\#V(I_{\mathfrak{a}}) = 2^{l + 1}\deg\left(\gcd((z + 1)^{L'} + 1, (z^2 + z + 1)^{L'} + 1)\right).
\end{align*}
In general, it is difficult to obtain a more explicit formula. We note however that calculating this for a single value of $L'$ can already provide a formula for an \textit{infinite family} of values of $L$, namely, for all $L \in \{2^lL' \in \mathbb Z_{>0}: l \in \mathbb Z_{\geq 0}\}$. It is not difficult to do this this explicitly by hand for small values of $L'$. We do this now for $L' = 1$ and $L' = 3$.

First suppose $L' = 1$. Then we have the factorizations
\begin{align*}
(z + 1) + 1 &= z\\
(z^2 + z + 1) + 1 &= z^2 + z = z(z + 1).
\end{align*}
So we calculate that
\begin{align*}
\#V(I_{\mathfrak{a}}) = \deg\left(\gcd((z + 1) + 1, (z^2 + z + 1) + 1)\right) = \deg(z) = 1.
\end{align*}
Thus $k_{\mathfrak{a}} = 2^{l + 1} = 2L$.

Now suppose $L' = 3$. Then we have the factorizations
\begin{align*}
(z + 1)^3 + 1 = z^3 + 3z^2 + 3z + 1 + 1 = z^3 + z^2 + z = z(z + \zeta_3)(z + {\zeta_3}^2)
\end{align*}
and
\begin{align*}
(z^2 + z + 1)^3 + 1 &= (z^2)^3 + 3(z^2)^2(z + 1) + 3z^2(z + 1)^2 + (z + 1)^3 + 1 \\
&= z^6 + z^5 + z^3 + z \\
&= z(z^5 + z^4 + z^2 + 1).
\end{align*}
So $\gcd((z + 1)^3 + 1, (z^2 + z + 1)^3 + 1) = z$ because we check that $\zeta_3$ and ${\zeta_3}^2$ are not roots of $z^5 + z^4 + z^2 + 1$ by putting $z = \zeta_3$ and $z = {\zeta_3}^2$. Again $\#V(I_{\mathfrak{a}}) = 1$. Thus $k_{\mathfrak{a}} = 2^{l + 1} = \frac{2}{3}L$.

We summarize the results in the following theorem.

\begin{theorem}
\label{thm:NumberOfEncodedQubitsCC6}
Consider CC$_6$ defined by the ideal $\mathfrak{a} = (x + y + z + 1, yz + zx + y + 1)$. Imposing the periodic conditions $x^L - 1 = y^L - 1 = z^L - 1 = 0$ for some $L = 2^l L'$ where $2 \nmid L' \in \mathbb Z_{>0}$ and $l \in \mathbb Z_{\geq 0}$, the number of encoded qubits is given by
\begin{align*}
k_{\mathfrak{a}} = 2^{l + 1}\deg\left(\gcd((z + 1)^{L'} + 1, (z^2 + z + 1)^{L'} + 1)\right).
\end{align*}
Moreover, we have the explicit formulas
\begin{align*}
k_{\mathfrak{a}} =
\begin{cases}
2L, & L = 2^l \\
\frac{2}{3}L, & L = 2^l \cdot 3.
\end{cases}
\end{align*}
\end{theorem}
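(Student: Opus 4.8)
The plan is to assemble the general machinery set up earlier in this appendix for the specific ideal $\mathfrak{a} = (x+y+z+1,\, yz+zx+y+1)$. First I would invoke the Artinian decomposition (\cref{thm:DecomposionOfArtinianRings}) to write $k_{\mathfrak{a}} = \sum_{\mathfrak{m}} k_{\mathfrak{a},\mathfrak{m}}$, the sum running over the maximal ideals $\mathfrak{m} \subset \mathbb{F}[x,y,z]/I_{\mathfrak{a}}$, each of which corresponds to a point of $V(I_{\mathfrak{a}})$ and, by the localization criterion (\cref{lem:CriteriaForIsomorphismToLocalization}), satisfies $(\mathbb{F}[x,y,z]/I_{\mathfrak{a}})_{\mathfrak{m}} \cong \mathbb{F}[x,y,z]/J_{\mathfrak{a}}$. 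Using $f = x+y+z+1$ to eliminate $y$ together with the coordinate shift built into the general reduction, this ring becomes $\mathbb{F}[x,z]/J_{\mathfrak{a}}^y$ with $J_{\mathfrak{a}}^y = (z^2+x,\, x^{2^l}+a,\, z^{2^l}+c)$ subject to $c^2+a = 0$. The Gröbner basis lemma just established gives the basis $G = \{z^2+x,\, x^{2^{l-1}}+c\}$, whose leading terms $z^2$ and $x^{2^{l-1}}$ are coprime, so by \cref{prp:VectorSpaceBasisUsingGrobnerBasis} the reduced monomials $x^i z^j$ with $0 \le i < 2^{l-1}$ and $0 \le j \le 1$ form a basis and $\dim_{\mathbb{F}}(\mathbb{F}[x,z]/J_{\mathfrak{a}}^y) = 2^l$ (with the degenerate value $1$ when $l = 0$). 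The crucial structural point — cleaner here than for CC$_1$ — is that this dimension is independent of $\mathfrak{m}$, so $k_{\mathfrak{a},\mathfrak{m}} = 2^{l+1}$ uniformly and hence $k_{\mathfrak{a}} = 2^{l+1}\,\#V(I_{\mathfrak{a}})$.

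Next I would count $\#V(I_{\mathfrak{a}})$. Solving $f = g = 0$ by first solving the single equation $z^2 + x = 0$ and back-substituting exhibits $V(\mathfrak{a})$ as a one-parameter family $(x_0,y_0,z_0) = ({z_0'}^2+1,\, {z_0'}^2+z_0'+1,\, z_0'+1)$ with $z_0' \in \mathbb{F}$. Imposing the periodicity conditions $x_0^{L'} = y_0^{L'} = z_0^{L'} = 1$ (using $\Char \mathbb{F} = 2$ to replace $L$ by $L'$) identifies $\#V(I_{\mathfrak{a}})$ with the degree of $\gcd\big((z^2+1)^{L'}+1,\ (z^2+z+1)^{L'}+1,\ (z+1)^{L'}+1\big)$ viewed as polynomials in the parameter $z = z_0'$. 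The Frobenius identity $(z^2+1)^{L'}+1 = \big((z+1)^{L'}+1\big)^2$ then shows the first entry is redundant, leaving $\gcd\big((z+1)^{L'}+1,\ (z^2+z+1)^{L'}+1\big)$, which yields the general formula for $k_{\mathfrak{a}}$.

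For the explicit cases I would just compute these gcd's. With $L' = 1$ one has $(z+1)+1 = z$ and $(z^2+z+1)+1 = z(z+1)$, so the gcd is $z$ and $k_{\mathfrak{a}} = 2^{l+1} = 2L$. With $L' = 3$, expanding over $\mathbb{F}_2$ gives $(z+1)^3+1 = z(z+\zeta_3)(z+{\zeta_3}^2)$ and $(z^2+z+1)^3+1 = z(z^5+z^4+z^2+1)$; substituting $z = \zeta_3$ and $z = {\zeta_3}^2$ into the quintic and using $\zeta_3^3 = 1$ and $1+\zeta_3+{\zeta_3}^2 = 0$ shows neither is a root, so the gcd is again $z$ and $k_{\mathfrak{a}} = 2^{l+1} = \tfrac{2}{3}L$ since $L = 3\cdot 2^l$. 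The only step with any bite is the Gröbner basis lemma (already granted) and this finite root-chasing; the real obstruction is not computational but structural — there is no closed form for $\deg\gcd\big((z+1)^{L'}+1,\ (z^2+z+1)^{L'}+1\big)$ for arbitrary $L'$, which is precisely why the explicit formulas are confined to the families $L = 2^l$ and $L = 3\cdot 2^l$.
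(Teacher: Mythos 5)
Your proposal is correct and follows essentially the same route as the paper's own derivation: the same elimination of $y$ yielding $J_{\mathfrak{a}}^y=(z^2+x,\,x^{2^l}+a,\,z^{2^l}+c)$, the same Gr\"{o}bner basis $\{z^2+x,\,x^{2^{l-1}}+c\}$ giving the uniform local dimension $2^l$, the same parametrization of $V(\mathfrak{a})$ by $z_0'$, the Frobenius identity $(z^2+1)^{L'}+1=((z+1)^{L'}+1)^2$ to drop the redundant gcd entry, and the same finite root-checking for $L'=1$ and $L'=3$. No gaps.
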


\begin{remark}
\cref{thm:NumberOfEncodedQubitsCC6} implies that $k_{\mathfrak{a}}$ as a function of $L$, obeys the scaling relation
\begin{align*}
k_{\mathfrak{a}}(2^r L) = 2^r k_{\mathfrak{a}}(L) 
\end{align*}
for all $r \in \mathbb Z_{\geq 0}$.
\end{remark}

\subsection{Cubic code 11}
\label{subsec:CC11_degen}
The stabilizer generators of cubic code 11 (CC$_{11}$) are given by 
\begin{align}
\begin{array}{c}
\drawgenerator{IX}{II}{XI}{XX}{IX}{XX}{II}{XI}
\quad
\drawgenerator{ZI}{ZZ}{II}{IZ}{ZI}{II}{IZ}{ZZ}
\end{array}
\end{align}
Hence, the stabilizer ideal that defines CC$_{11}$ is $\mathfrak{a} = (f, g) = (yz + x + y + 1, xy + x + y + z)$. We eliminate the variable $x$ and use the substitutions $y \mapsto y + 1$ and $z \mapsto z + 1$ to obtain
\begin{align*}
J_{\mathfrak{a}}^x = (h, y^{2^l} + b, z^{2^l} + c) = (z(y^2 + y + 1), y^{2^l} + b, z^{2^l} + c).
\end{align*}
Recalling definitions, we know that $b = {y_0}^{2^l} + 1, c = {z_0}^{2^l} + 1$ where ${y_0}^{L'} = {z_0}^{L'} = 1$ such that $c(b^2 + b + 1) = 0$. The last equation is satisfied if and only if $b^2 + b + 1 = 0$ or $c = 0$.

\begin{lemma}
\label{lem:CC11GrobnerBasis}
For the ideal $J_{\mathfrak{a}}^x = (z(y^2 + y + 1), y^{2^l} + b, z^{2^l} + c)$, we have the following Gr\"{o}bner bases.
\begin{enumerate}
\item	Suppose $b^2 + b + 1 = 0$ and $c = 0$. Then we have the following cases.
\begin{enumerate}
\item\label{itm:CC11Eq1ZeroEq2Zero3NotDivide}	Suppose $3 \nmid 2^l + 1$. Then
\begin{align*}
G = \{yz + bz, y^{2^l} + b, z^{2^l}\}
\end{align*}
is a Gr\"{o}bner basis.
\item\label{itm:CC11Eq1ZeroEq2Zero3Divide}	Suppose $3 \mid 2^l + 1$. Then
\begin{align*}
G = \{yz + (b + 1)z, y^{2^l} + b, z^{2^l}\}
\end{align*}
is a Gr\"{o}bner basis.
\end{enumerate}
\item\label{itm:CC11Eq1ZeroEq2NotZero}	Suppose $b^2 + b + 1 = 0$ and $c \neq 0$. Then
\begin{align*}
G = \{y + (b + l), z^{2^l} + c\}
\end{align*}
is a Gr\"{o}bner basis.
\item\label{itm:CC11Eq1NotZeroEq2Zero}	Suppose $b^2 + b + 1 \neq 0$ and $c = 0$. Then
\begin{align*}
G = \{z, y^{2^l} + b\}
\end{align*}
is a Gr\"{o}bner basis.
\end{enumerate}
\end{lemma}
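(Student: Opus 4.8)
The plan is to follow the same three-step strategy used in the proofs of the corresponding Gr\"{o}bner-basis lemmas for CC$_1$ and CC$_6$. For each of the four cases I would first verify, via the $S$-polynomials of Buchberger's algorithm, that the stated $G$ is a Gr\"{o}bner basis of the ideal $(G)$ it generates. This is essentially routine: in \cref{itm:CC11Eq1ZeroEq2NotZero} and \cref{itm:CC11Eq1NotZeroEq2Zero} the two leading terms are coprime, so there is nothing to check, and in case~1 the only $S$-polynomial that does not reduce to $0$ for trivial reasons is that of $yz+bz$ (resp.\ $yz+(b+1)z$) and $y^{2^l}+b$, which reduces to $(b^{2^l}+b)z$ (resp.\ $((b+1)^{2^l}+b)z$) and hence to $0$ once one uses $b^3=1$ together with the hypothesis on $2^l\bmod 3$. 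Second, I would check $J_{\mathfrak a}^x\subset(G)$ by polynomial division; the only inputs here are that $z(y^2+y+1)$ reduces to $(b^2+b+1)z=0$ in case~1, and that $y^2+y+1$ and $y^{2^l}+b$ both lie in $(y+(b+l))$ in case~2, which follows from $(b+l)^2+(b+l)+1=0$ and $(b+l)^{2^l}=b$. The substantive content of the lemma is the reverse containment $(G)\subset J_{\mathfrak a}^x$, i.e.\ that the ``new'' generator of $G$ really lies in $J_{\mathfrak a}^x$; I outline this case by case below.

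For case~1 ($b^2+b+1=0$, $c=0$) the key observation is that, since $z(y^2+y+1)\in J_{\mathfrak a}^x$, one has $z\cdot f(y)\equiv z\cdot\bigl(f(y)\bmod(y^2+y+1)\bigr)\pmod{J_{\mathfrak a}^x}$ for every $f\in\mathbb F[y]$. Applying this to $f=y^{2^l}$ and using $y^3\equiv 1\pmod{y^2+y+1}$ gives $z\,y^{2^l}\equiv zy$ when $2^l\equiv 1\pmod 3$ and $z\,y^{2^l}\equiv z(y+1)$ when $2^l\equiv 2\pmod 3$; on the other hand $y^{2^l}\equiv b$, so $z\,y^{2^l}\equiv bz$. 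Equating the two expressions produces exactly $yz+bz\equiv 0$ when $3\nmid 2^l+1$ and $yz+(b+1)z\equiv 0$ when $3\mid 2^l+1$, which are the missing generators in \cref{itm:CC11Eq1ZeroEq2Zero3NotDivide} and \cref{itm:CC11Eq1ZeroEq2Zero3Divide}.

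For case~2 ($b^2+b+1=0$, $c\neq 0$), the point is that $z^{2^l}\equiv c$ with $c\in\mathbb F^\times$ makes $z$ invertible modulo $J_{\mathfrak a}^x$; multiplying $z(y^2+y+1)\in J_{\mathfrak a}^x$ by $c^{-1}z^{2^l-1}$ and using $z^{2^l}\equiv c$ therefore gives $y^2+y+1\in J_{\mathfrak a}^x$, hence $y^3\equiv 1$, and reducing $y^{2^l}\equiv b$ collapses it to the linear relation $y\equiv b+l$ (use $2^l\equiv 1$ or $2$ mod $3$ according to the parity of $l$, together with $y^2\equiv y+1$). For case~3 ($b^2+b+1\neq 0$, which forces $c=0$), I would instead show that $y^2+y+1$ and $y^{2^l}+b$ are coprime in $\mathbb F[y]$: a common root $\alpha$ would satisfy $\alpha^3=1$ and $\alpha\neq 1$, so $b=\alpha^{2^l}\in\{\alpha,\alpha^2\}$ and hence $b^2+b+1=0$, a contradiction. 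Writing a B\'{e}zout identity $1=u(y)(y^2+y+1)+v(y)(y^{2^l}+b)$ and multiplying by $z$ then shows $z\in J_{\mathfrak a}^x$, which gives $G$ as claimed. Once the Gr\"{o}bner bases are in hand, the vector-space dimensions $\dim_{\mathbb F}(\mathbb F[x,y,z]/J_{\mathfrak a})$ needed for the degeneracy formula are read off from the reduced monomials.

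I expect the difficulty to be bookkeeping rather than conceptual: pinning down the exact constant (the $b$ versus $b+1$, and the $+l$) requires careful tracking of $2^l\bmod 3$ and of the identities $b^3=1$ and $b+1=b^2$ over $\mathbb F=\overline{\mathbb F_2}$, and one must take care to phrase the ``$z$ is a unit'' and ``$y^2+y+1$ is a unit'' steps in cases~2 and~3 as honest ideal-membership assertions (backed by an explicit B\'{e}zout relation), rather than statements about a localization, since what is needed is $(G)\subset J_{\mathfrak a}^x$ on the nose.
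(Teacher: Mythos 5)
Your proposal is correct, and while it follows the paper's three-step skeleton (S-polynomial check, $J_{\mathfrak{a}}^x \subset (G)$ by division, then the substantive containment $(G) \subset J_{\mathfrak{a}}^x$), the arguments you give for the substantive containments are genuinely different from — and in two of the three cases simpler than — the paper's. For case 1 the paper inverts the factor $y+(b+1)$ modulo $J_{\mathfrak{a}}^x$ by an explicit geometric-series multiplier $P=(by)^{2^l-1}+\dotsb+1$ satisfying $(by-1)P\equiv b^{2^l+1}+1$, whereas your reduction identity $zf(y)\equiv z\bigl(f(y)\bmod (y^2+y+1)\bigr)$ together with $y^3\equiv 1$ gets the new generator in one line and makes the $b$ versus $b+1$ dichotomy transparent. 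For case 2 the paper runs an induction on $n$ producing $y^{2^{l-n}}+(b+n)\equiv 0$ via the combination $z^{2^l-2^{l-n}}\bigl(z(y^2+y+1)\bigr)^{2^{l-n}}+z^{2^l}(\cdots)$ and cancels the nonzero scalar $c$ at each step; your observation that $c^{-1}z^{2^l-1}\cdot z(y^2+y+1)$ exhibits $y^2+y+1$ itself as an element of $J_{\mathfrak{a}}^x$ (an honest ideal membership, as you note) short-circuits the induction entirely, and the subsequent reduction of $y^{2^l}+b$ to $y+(b+l)$ recovers the $l\bmod 2$ dependence correctly. For case 3 your coprimality-plus-B\'{e}zout argument is the abstract form of the paper's explicit computation $(y^2+y+1)\cdot\tfrac{P}{b^2+b+1}\equiv 1$, so the two are essentially equivalent there. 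Your S-polynomial bookkeeping (the reductions to $(b^{2^l}+b)z$ and $((b+1)^{2^l}+b)z$) and the verifications $(b+l)^2+(b+l)+1=0$, $(b+l)^{2^l}=b$ needed for $J_{\mathfrak{a}}^x\subset(G)$ in case 2 all check out. What the paper's route buys is a template (geometric-series units and the $n$-induction) that it reuses verbatim for CC$_1$, CC$_6$ and CCB$_{11}$; what yours buys is brevity and a clearer view of where the constants $b$, $b+1$ and $b+l$ come from.
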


\begin{proof}
It is a straight forward calculation using the S-polynomials of Buchberger's algorithm to verify that $G$ is a Gr\"{o}bner basis for the ideal $(G)$ for all the cases. It is also a straight forward calculation by polynomial divisions by elements in $G$ to verify that $J_{\mathfrak{a}}^x \subset (G)$ for all the cases. It remains to show that $(G) \subset J_{\mathfrak{a}}^x$ for all the cases. This is shown below assuming the same hypotheses and using the same set $G$ as in the lemma for the corresponding cases.

\medskip
\noindent
\textit{\cref{itm:CC11Eq1ZeroEq2Zero3NotDivide}.} The only nontrivial containment we need to show is $yz + bz \in J_{\mathfrak{a}}^x$, i.e., $yz + bz \equiv 0 \pmod{J_{\mathfrak{a}}^x}$. We calculate that
\begin{align*}
&(by - 1)\left((by)^{2^l - 1} + (by)^{2^l - 2} + \dotsb + by + 1\right) \\
\equiv{}&(by)^{2^l} - 1 \pmod{J_{\mathfrak{a}}^x} \\
\equiv{}&b^{2^l} \cdot y^{2^l} - 1 \pmod{J_{\mathfrak{a}}^x} \\
\equiv{}&b^{2^l + 1} + 1 \pmod{J_{\mathfrak{a}}^x}.
\end{align*}
The hypothesis $b^2 + b + 1 = 0$ implies $b \neq 1$ with $b^3 = 1$. Hence the hypothesis $3 \nmid 2^l + 1$ implies $b^{2^l + 1} + 1 \neq 0$. Let $P = (by)^{2^l - 1} + (by)^{2^l - 2} + \dotsb + by + 1$. Realizing that $z(y + b)(y + (b + 1)) = z(y^2 + y + 1)$, we multiply by $\frac{b}{b^{2^l + 1} + 1}P$ to get
\begin{align*}
&z(y + b)(y + (b + 1)) \cdot \frac{b}{b^{2^l + 1} + 1}P \equiv 0 \pmod{J_{\mathfrak{a}}^x} \\
\implies{}&z(y + b) \cdot \frac{(by - 1)P}{b^{2^l + 1} + 1} \equiv 0 \pmod{J_{\mathfrak{a}}^x} \\
\implies{}&yz + bz \equiv 0 \pmod{J_{\mathfrak{a}}^x}.
\end{align*}

\medskip
\noindent
\textit{\cref{itm:CC11Eq1ZeroEq2Zero3Divide}.} The only nontrivial containment we need to show is $yz + (b + 1)z \in J_{\mathfrak{a}}^x$, i.e., $yz + (b + 1)z \equiv 0 \pmod{J_{\mathfrak{a}}^x}$. By a similar calculation as in the proof of \cref{itm:CC11Eq1ZeroEq2Zero3NotDivide}, we have
\begin{align*}
&((b + 1)y - 1)\left(((b + 1)y)^{2^l - 1} + ((b + 1)y)^{2^l - 2} + \dotsb + (b + 1)y + 1\right) \\
\equiv{}&(b + 1)^{2^l + 1} + 1 \pmod{J_{\mathfrak{a}}^x}.
\end{align*}
This time, the hypothesis $3 \mid 2^l + 1$ implies $b^{2^l + 1} + 1 = 0$. Hence
\begin{align*}
(b + 1)^{2^l + 1} + 1 &= (b + 1)(b + 1)^{2^l} = (b + 1)(b^{2^l} + 1) = b^{2^l + 1} + b^{2^l} + b + 1 \\
&= b^{2^l} + b = \frac{b^{2^l + 1} + b^2}{b} = \frac{b^2 + 1}{b} = \frac{b}{b} = 1.
\end{align*}
Let $P = ((b + 1)y)^{2^l - 1} + ((b + 1)y)^{2^l - 2} + \dotsb + (b + 1)y + 1$. Then the above implies $((b + 1)y - 1)P \equiv 1 \pmod{J_{\mathfrak{a}}^x}$. Again using $z(y + (b + 1))(y + b) = z(y^2 + y + 1)$, we multiply by $(b + 1)P$ to get
\begin{align*}
&z(y + (b + 1))(y + b) \cdot (b + 1)P \equiv 0 \pmod{J_{\mathfrak{a}}^x} \\
\implies{}&z(y + (b + 1)) \cdot ((b + 1)y - 1)P \equiv 0 \pmod{J_{\mathfrak{a}}^x} \\
\implies{}&yz + (b + 1)z \equiv 0 \pmod{J_{\mathfrak{a}}^x}.
\end{align*}

\medskip
\noindent
\textit{\cref{itm:CC11Eq1ZeroEq2NotZero}.} The only nontrivial containment we need to show is $y + (b + l) \in J_{\mathfrak{a}}^x$, i.e., $y + (b + l) \equiv 0 \pmod{J_{\mathfrak{a}}^x}$. This follows if we show our claim $y^{2^{l - n}} + (b + n) \equiv 0 \pmod{J_{\mathfrak{a}}^x}$ for all integers $0 \leq n \leq l$ by taking the $n = l$ case. We show this by induction. The base case $n = 0$ is trivial. Suppose that the claim holds for some integer $0 \leq n - 1 \leq l - 1$, i.e., $y^{2^{l - (n - 1)}} + (b + (n - 1)) \equiv 0 \pmod{J_{\mathfrak{a}}^x}$. Then
\begin{align*}
&z^{2^l - 2^{l - n}} \cdot (z(y^2 + y + 1))^{2^{l - n}} + z^{2^l}\left(y^{2^{l - (n - 1)}} + (b + (n - 1))\right) \equiv 0 \pmod{J_{\mathfrak{a}}^x} \\
\implies{}&z^{2^l}\left(y^{2^{l - n + 1}} + y^{2^{l - n}} + 1\right) + z^{2^l}\left(y^{2^{l - n + 1}} + (b + (n - 1))\right) \equiv 0 \pmod{J_{\mathfrak{a}}^x} \\
\implies{}&z^{2^l}(y^{2^{l - n}} + (b + n)) \equiv 0 \pmod{J_{\mathfrak{a}}^x} \\
\implies{}&c(y^{2^{l - n}} + (b + n)) \equiv 0 \pmod{J_{\mathfrak{a}}^x}.
\end{align*}
The hypothesis $c \neq 0$ implies $y^{2^{l - n}} + (b + n) \equiv 0 \pmod{J_{\mathfrak{a}}^x}$ as desired.

\medskip
\noindent
\textit{\cref{itm:CC11Eq1NotZeroEq2Zero}.} 
The only nontrivial containment we need to show is $z \in J_{\mathfrak{a}}^x$, i.e., $z \equiv 0 \pmod{J_{\mathfrak{a}}^x}$. We have
\begin{align*}
&((y^2 + y) - 1)\left((y^2 + y)^{2^l - 1} + (y^2 + y)^{2^l - 2} + \dotsb + (y^2 + y) + 1\right) \\
\equiv{}&(y^2 + y)^{2^l} - 1 \pmod{J_{\mathfrak{a}}^x} \\
\equiv{}&(y^{2^l})^2 + y^{2^l} + 1 \pmod{J_{\mathfrak{a}}^x} \\
\equiv{}&b^2 + b + 1 \pmod{J_{\mathfrak{a}}^x}.
\end{align*}
Now $b^2 + b + 1 \neq 0$ by hypothesis. Let $P = (y^2 + y)^{2^l - 1} + (y^2 + y)^{2^l - 2} + \dotsb + (y^2 + y) + 1$. Then the above implies $(y^2 + y + 1) \cdot \frac{P}{b^2 + b + 1} \equiv 1 \pmod{J_{\mathfrak{a}}^x}$. Thus
\begin{align*}
z \equiv z(y^2 + y + 1) \cdot \frac{P}{b^2 + b + 1} \equiv 0 \pmod{J_{\mathfrak{a}}^x}.
\end{align*}
\end{proof}
Now we simply read off the dimensions using the Gr\"{o}bner bases and obtain the formula
\begin{align*}
\dim_{\mathbb F}(\mathbb F[x, y]/J_{\mathfrak{a}}^x)
&=
\begin{cases}
2^{l + 1} - 1, & b^2 + b + 1 = 0 \text{ and } c = 0 \\
2^l, & b^2 + b + 1 = 0 \text{ and } c \neq 0 \\
2^l, & b^2 + b + 1 \neq 0 \text{ and } c = 0.
\end{cases}
\end{align*}
Recalling the definitions, the above gives the formula
\begin{align*}
k_{\mathfrak{a}, \mathfrak{m}}
&=
\begin{cases}
2^{l + 2} - 2, & y_0 \in \{\zeta_3, {\zeta_3}^2\} \text{ and } z_0 = 1 \\
2^{l + 1}, & y_0 \in \{\zeta_3, {\zeta_3}^2\} \text{ and } z_0 \neq 1 \\
2^{l + 1}, & y_0 \notin \{\zeta_3, {\zeta_3}^2\} \text{ and } z_0 = 1.
\end{cases}
\end{align*}
Note that this holds if $l = 0$ as well. Now we wish to calculate $k_{\mathfrak{a}} = \sum_{\mathfrak{m}} k_{\mathfrak{a}, \mathfrak{m}}$ where the sum is over all maximal ideals $\mathfrak{m} \subset \mathbb F[x, y, z]/I_{\mathfrak{a}}$. For the calculation, we first need to explicitly calculate the points in the variety $V(I_{\mathfrak{a}})$. By first finding solutions of $h = 0$, we easily calculate that the solutions of $f = g = 0$ is the set of points $V(\mathfrak{a}) = V_1(\mathfrak{a}) \cup V_2(\mathfrak{a}) \cup V_3(\mathfrak{a})$ where
\begin{align*}
V_1(\mathfrak{a}) &= \{(1, y_0, 1) \in \mathbb A_{\mathbb F}^3: y_0 \in \mathbb F\} \\
V_2(\mathfrak{a}) &= \{(\zeta_3 z_0' + 1, \zeta_3, z_0' + 1) \in \mathbb A_{\mathbb F}^3: z_0' \in \mathbb F\} \\
V_3(\mathfrak{a}) &= \{({\zeta_3}^2 z_0' + 1, {\zeta_3}^2, z_0' + 1) \in \mathbb A_{\mathbb F}^3: z_0' \in \mathbb F\}.
\end{align*}
Note that the intersection is
\begin{align*}
V_{\cap}(\mathfrak{a}) = V_1(\mathfrak{a}) \cap V_2(\mathfrak{a}) \cap V_3(\mathfrak{a}) = \{(1, \zeta_3, 1), (1, {\zeta_3}^2, 1)\}
\end{align*}
which consists of the only points satisfying both $y_0 \in \{\zeta_3, {\zeta_3}^2\}$ and $z_0 = 1$. Imposing the periodic conditions $x^L - 1 = y^L - 1 = z^L - 1 = 0$, we have the corresponding set of solutions $V(I_{\mathfrak{a}}) = V_1(I_{\mathfrak{a}}) \cup V_2(I_{\mathfrak{a}}) \cup V_3(I_{\mathfrak{a}})$ where $V_j(I_{\mathfrak{a}}) = \{(x_0, y_0, z_0) \in V_j(\mathfrak{a}): {x_0}^{L'} = {y_0}^{L'} = {z_0}^{L'} = 1\}$ for all $j \in \{1, 2, 3\}$. The intersection is
\begin{align*}
V_{\cap}(I_{\mathfrak{a}}) = V_1(I_{\mathfrak{a}}) \cap V_2(I_{\mathfrak{a}}) \cap V_3(I_{\mathfrak{a}}) =
\begin{cases}
\varnothing, & 3 \nmid L' \\
\{(1, \zeta_3, 1), (1, {\zeta_3}^2, 1)\}, & 3 \mid L'.
\end{cases}
\end{align*}
Thus we immediately calculate
\begin{align*}
\#V_{\cap}(I_{\mathfrak{a}}) &=
\begin{cases}
0, & 3 \nmid L' \\
2, & 3 \mid L'
\end{cases}\\
\#(V_1(I_{\mathfrak{a}}) \setminus V_{\cap}(I_{\mathfrak{a}})) &=
\begin{cases}
L', & 3 \nmid L' \\
L' - 2, & 3 \mid L'
\end{cases}\\
\#(V_2(I_{\mathfrak{a}}) \setminus V_{\cap}(I_{\mathfrak{a}})) &=
\begin{cases}
0, & 3 \nmid L' \\
\deg\left(\gcd((z + 1)^{L'} + 1, (\zeta_3 z + 1)^{L'} + 1)\right) - 1, & 3 \mid L'
\end{cases}\\
\#(V_3(I_{\mathfrak{a}}) \setminus V_{\cap}(I_{\mathfrak{a}})) &=
\begin{cases}
0, & 3 \nmid L' \\
\deg\left(\gcd((z + 1)^{L'} + 1, ({\zeta_3}^2 z + 1)^{L'} + 1)\right) - 1, & 3 \mid L'
\end{cases}.
\end{align*}
Note that $\#(V_2(I_{\mathfrak{a}}) \setminus V_{\cap}(I_{\mathfrak{a}}))$ and $\#(V_3(I_{\mathfrak{a}}) \setminus V_{\cap}(I_{\mathfrak{a}}))$ are in fact equal in all cases using any extension $\tilde{\sigma} \in \Gal(\mathbb F/\mathbb F_2)$ of $\sigma \in \Gal(\mathbb F_{2^2}/\mathbb F_2)$ specified by $\sigma(\zeta_3) = {\zeta_3}^2$, noting that $\mathbb F_2(\zeta_3) \cong \mathbb F_{2^2}$. We put everything together using the formula
\begin{align}
\begin{split}
k_{\mathfrak{a}} ={}&(2^{l + 2} - 2)\#V_{\cap}(I_{\mathfrak{a}}) + 2^{l + 1}\#(V_1(I_{\mathfrak{a}}) \setminus V_{\cap}(I_{\mathfrak{a}})) + 2^{l + 1}\#(V_2(I_{\mathfrak{a}}) \setminus V_{\cap}(I_{\mathfrak{a}})) \\
&{}+ 2^{l + 1}\#(V_3(I_{\mathfrak{a}}) \setminus V_{\cap}(I_{\mathfrak{a}}))
\end{split}\\
\label{eqn:CC11DegeneracyGCDFormula}
={}&
\begin{cases}
2L, & 3 \nmid L \\
2L - 4 + 2^{l + 2}\deg\left(\gcd((z + 1)^{L'} + 1, (\zeta_3 z + 1)^{L'} + 1)\right), & 3 \mid L.
\end{cases}
\end{align}
In general, it is difficult to obtain a more explicit formula for the case $3 \mid L'$. However, it is possible for a special case.

Suppose $3 \mid L' = 4^n - 1$ for some $n \in \mathbb Z_{>0}$. We calculate that
\begin{align*}
&(z + 1)^{4^n} = z^{4^n} - 1 = (z - 1)(z^{4^n - 1} + z^{4^n - 2} + \dotsb + z + 1) \\
\implies{}&(z + 1)^{4^n - 1} + 1 = z^{4^n - 1} + z^{4^n - 2} + \dotsb + z^2 + z.
\end{align*}
Thus we have the factorization
\begin{align*}
(z + 1)^{4^n - 1} + 1 &= z(z^{4^n - 2} + z^{4^n - 3} + \dotsb + z + 1) \\
&= z(z^2 + z + 1)(z^{4^n - 4} + z^{4^n - 7} + \dotsb + z^3 + 1) \\
&= z(z + \zeta_3)(z + {\zeta_3}^2)(z^{4^n - 4} + z^{4^n - 7} + \dotsb + z^3 + 1).
\end{align*}
We apply the same equation for $(\zeta_3 z + 1)^{4^n - 1} + 1$ and use ${\zeta_3}^3 = 1$ to get the factorization
\begin{align*}
(\zeta_3 z + 1)^{4^n - 1} + 1 = z(z + 1)(z + \zeta_3)(z^{4^n - 4} + z^{4^n - 7} + \dotsb + z^3 + 1).
\end{align*}
It is easy to see that $z(z + \zeta_3)(z^{4^n - 4} + z^{4^n - 7} + \dotsb + z^3 + 1)$ is a greatest common divisor in \cref{eqn:CC11DegeneracyGCDFormula} whose degree is $4^n - 2 = L' - 1$. Using this in \cref{eqn:CC11DegeneracyGCDFormula} gives $k_{\mathfrak{a}} = 6L - 4(2^l + 1)$.

We summarize the results in the following theorem.

\begin{theorem}
\label{thm:NumberOfEncodedQubitsCC11}
Consider CC$_{11}$ defined by the ideal $\mathfrak{a} = (yz + x + y + 1, xy + x + y + z)$. Imposing the periodic conditions $x^L - 1 = y^L - 1 = z^L - 1 = 0$ for some $L = 2^l L'$ where $2 \nmid L' \in \mathbb Z_{>0}$ and $l \in \mathbb Z_{\geq 0}$, the number of encoded qubits is given by
\begin{align*}
k_{\mathfrak{a}} =
\begin{cases}
2L, & 3 \nmid L \\
2L - 4 + 2^{l + 2}\deg\left(\gcd((z + 1)^{L'} + 1, (\zeta_3 z + 1)^{L'} + 1)\right), & 3 \mid L.
\end{cases}
\end{align*}
Moreover, if $3 \mid L' = 4^n - 1$ for some $n \in \mathbb Z_{>0}$, then we have the explicit formula
\begin{align*}
k_{\mathfrak{a}} = 6L - 4(2^l + 1).
\end{align*}
\end{theorem}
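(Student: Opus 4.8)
The plan is to run the commutative-algebra machinery assembled at the start of this appendix. First I would write $k_{\mathfrak{a}} = 2\dim_{\mathbb F}(\mathbb F[x,y,z]/I_{\mathfrak{a}})$ with $I_{\mathfrak{a}} = (f, g, x^L - 1, y^L - 1, z^L - 1)$ and $\mathbb F = \overline{\mathbb F_2}$, decompose this Artinian ring into local factors by \cref{thm:DecomposionOfArtinianRings} so that $k_{\mathfrak{a}} = \sum_{\mathfrak{m}} k_{\mathfrak{a}, \mathfrak{m}}$ with the sum over maximal ideals, equivalently over points $(x_0, y_0, z_0) \in V(I_{\mathfrak{a}})$. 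Using \cref{lem:CriteriaForIsomorphismToLocalization} I would identify the local factor at $\mathfrak{m}$ with $\mathbb F[x,y,z]/J_{\mathfrak{a}}$, then eliminate $x$ and apply the substitutions $y \mapsto y+1$, $z \mapsto z+1$ to reach $J_{\mathfrak{a}}^x = (z(y^2+y+1),\, y^{2^l}+b,\, z^{2^l}+c)$ subject to $c(b^2+b+1) = 0$, where $b = y_0^{2^l}+1$ and $c = z_0^{2^l}+1$.

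Next I would feed $J_{\mathfrak{a}}^x$ into the Gr\"obner basis computation \cref{lem:CC11GrobnerBasis}, which splits into the three cases (i) $b^2+b+1 = 0$, $c = 0$; (ii) $b^2+b+1 = 0$, $c \neq 0$; (iii) $b^2+b+1 \neq 0$, $c = 0$. Counting the monomials reduced with respect to each basis via \cref{prp:VectorSpaceBasisUsingGrobnerBasis} gives $\dim_{\mathbb F}(\mathbb F[x,y]/J_{\mathfrak{a}}^x)$ equal to $2^{l+1}-1$, $2^l$, $2^l$ respectively, hence $k_{\mathfrak{a}, \mathfrak{m}}$ equal to $2^{l+2}-2$, $2^{l+1}$, $2^{l+1}$ (and this still holds at $l = 0$). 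Unwinding the definitions of $b$ and $c$ translates the three cases into conditions on the point: (i) $y_0 \in \{\zeta_3, \zeta_3^2\}$ and $z_0 = 1$; (ii) $y_0 \in \{\zeta_3, \zeta_3^2\}$ and $z_0 \neq 1$; (iii) $y_0 \notin \{\zeta_3, \zeta_3^2\}$ and $z_0 = 1$.

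Then I would compute $V(I_{\mathfrak{a}})$ explicitly. Solving $h = 0$ after elimination shows $V(\mathfrak{a})$ decomposes as $V_1 \cup V_2 \cup V_3$ with $V_1 = \{(1, y_0, 1)\}$ and $V_2$, $V_3$ the two one-parameter families attached to $y_0 = \zeta_3$ and $y_0 = \zeta_3^2$, with triple intersection $\{(1,\zeta_3,1),(1,\zeta_3^2,1)\}$. Imposing $x_0^{L'} = y_0^{L'} = z_0^{L'} = 1$, the intersection $V_{\cap}(I_{\mathfrak{a}})$ is empty when $3 \nmid L'$ and is that two-point set when $3 \mid L'$; the counts $\#(V_1(I_{\mathfrak{a}}) \setminus V_{\cap}(I_{\mathfrak{a}}))$, $\#(V_2(I_{\mathfrak{a}}) \setminus V_{\cap}(I_{\mathfrak{a}}))$, $\#(V_3(I_{\mathfrak{a}}) \setminus V_{\cap}(I_{\mathfrak{a}}))$ then come out as $L'$ or $L'-2$, as $0$ or $\deg\gcd((z+1)^{L'}+1,\, (\zeta_3 z+1)^{L'}+1) - 1$, and similarly with $\zeta_3^2$, where an extension to $\mathbb F/\mathbb F_2$ of the Galois automorphism sending $\zeta_3 \mapsto \zeta_3^2$ makes the $V_2$ and $V_3$ contributions equal. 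Summing $k_{\mathfrak{a}} = (2^{l+2}-2)\#V_{\cap}(I_{\mathfrak{a}}) + 2^{l+1}\sum_j \#(V_j(I_{\mathfrak{a}}) \setminus V_{\cap}(I_{\mathfrak{a}}))$ produces exactly \eqref{eqn:CC11DegeneracyGCDFormula}, which is the two-case formula in the theorem.

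For the ``moreover'' claim I would specialize to $3 \mid L' = 4^n - 1$ and evaluate the gcd by hand: from the geometric-series identity $(z+1)^{4^n - 1} + 1 = z^{4^n-1} + \dots + z^2 + z$ one factors $(z+1)^{4^n-1}+1 = z(z+\zeta_3)(z+\zeta_3^2)(z^{4^n-4}+z^{4^n-7}+\dots+z^3+1)$ and, using $\zeta_3^3 = 1$, $(\zeta_3 z+1)^{4^n-1}+1 = z(z+1)(z+\zeta_3)(z^{4^n-4}+\dots+z^3+1)$, so the gcd is $z(z+\zeta_3)(z^{4^n-4}+\dots+1)$ of degree $4^n - 2 = L'-1$; substituting $\deg\gcd = L'-1$ into the $3 \mid L$ branch and simplifying gives $2L - 4 + 2^{l+2}(L'-1) = 6L - 4(2^l+1)$. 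The main obstacle is not any single formula but the bookkeeping: the three-way case split must be matched correctly to the stratification $V_1, V_2, V_3$, the $3 \mid L'$ versus $3 \nmid L'$ dichotomy must be threaded through every point count, the Galois argument must genuinely equate the $V_2$ and $V_3$ counts, and in the special case one must check that no common root slips into the gcd beyond the displayed factor.
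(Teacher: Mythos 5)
Your proposal is correct and follows essentially the same route as the paper's own derivation in \cref{subsec:CC11_degen}: the Artinian decomposition and localization, elimination of $x$ with the shifts $y\mapsto y+1$, $z\mapsto z+1$, the three-case Gr\"obner basis of \cref{lem:CC11GrobnerBasis}, the stratification $V_1\cup V_2\cup V_3$ with the Galois symmetry equating the $V_2$ and $V_3$ counts, and the explicit gcd factorization for $L'=4^n-1$. The dimension counts, point counts, and final arithmetic all match the paper.
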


\begin{remark}
\cref{thm:NumberOfEncodedQubitsCC11} implies that $k_{\mathfrak{a}}$ as a function of $L$, obeys the scaling relation
\begin{align*}
k_{\mathfrak{a}}(2^r L) =
\begin{cases}
2^r k_{\mathfrak{a}}(L), & 3 \nmid L \\
2^r k_{\mathfrak{a}}(L) + 4(2^r - 1), & 3 \mid L.
\end{cases}
\end{align*}
for all $r \in \mathbb Z_{\geq 0}$.
\end{remark}

\subsection{Cubic code 11B}
The stabilizer generators of cubic code 11B ($\text{CCB}_{11}$) are given by 
\begin{align}
\begin{array}{c}
\xymatrix@!0{%
&  \ar@{-}[rr] && \ar@{-}[rr] && IX \ar@{-}[dl]\\
XI \ar@{-}[ur] && XX \ar@{-}[ll]\ar@{-}[ur]\ar@{-}[rr] && XX\\
&  \ar@{.}[uu]\ar@{.}[dl]\ar@{.}[rr] && \ar@{.}[rr] \ar@{.}[uu] &&  \ar@{-}[uu]\\
\ar@{-}[uu]\ar@{-}[rr] && IX \ar@{-}[uu]\ar@{.}[ur] \ar@{-}[rr] && \ar@{-}[uu] \ar@{-}[ur]
}
\quad
\xymatrix@!0{%
&  \ar@{-}[rr] && ZI \ar@{-}[rr] &&  \ar@{-}[dl] \\
   \ar@{-}[ur] &&  \ar@{-}[ll]\ar@{-}[ur] && \ar@{-}[ll]  \\
&  ZZ\ar@{.}[uu]\ar@{.}[dl]\ar@{.}[rr] && ZZ\ar@{.}[rr] \ar@{.}[uu] && IZ \ar@{-}[uu]  \\
  ZI \ar@{-}[uu]\ar@{-}[rr] &&  \ar@{-}[uu]\ar@{.}[ur] \ar@{-}[rr] && \ar@{-}[uu] \ar@{-}[ur]
}
\end{array}
\end{align}
Hence, the stabilizer ideal that defines $\text{CCB}_{11}$ is given by $\mathfrak{a} = (f, g) = (1 + y + y^2,1 + x + y + yz)$. We eliminate the variable $x$ to obtain
\begin{align*}
J_{\mathfrak{a}}^x = (h, y^{2^l} + b_0, z^{2^l} + c_0) = (y^2 + y + 1, y^{2^l} + b_0, z^{2^l} + c_0).
\end{align*}
Recalling definitions, we know that $b_0 = {y_0}^{2^l}$, $c_0 = {z_0}^{2^l}$ where ${y_0}^{L'} = {z_0}^{L'} = 1$ such that ${b_0}^2 + b_0 + 1 = 0$. Notice that we have not made the substitutions $y \mapsto y + 1,  z \mapsto z + 1$ in this example as the ideal is simple enough as it is. 

\begin{lemma}
\label{lem:CCB11GrobnerBasis}
For the ideal $J_{\mathfrak{a}}^x = (y^2 + y + 1, y^{2^l} + b_0, z^{2^l} + c_0)$, we have the following Gr\"{o}bner bases.
\begin{enumerate}
\item\label{itm:CCB111}	Suppose $3 \nmid 2^l + 1$. Then
\begin{align*}
G = \{y + b_0, z^{2^l} + c_0\}
\end{align*}
is a Gr\"{o}bner basis.
\item\label{itm:CCB112}	Suppose $3 \mid 2^l + 1$. Then
\begin{align*}
G = \{y + b_0 + 1, z^{2^l} + c_0\}
\end{align*}
is a Gr\"{o}bner basis.
\end{enumerate}
\end{lemma}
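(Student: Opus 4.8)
The plan is to follow the three-step template already used for \cref{lem:CC1GrobnerBasis,lem:CC11GrobnerBasis}. First I would verify that $G$ is a Gr\"obner basis of the ideal $(G)$: in either case the two members of $G$ have leading monomials $y$ and $z^{2^l}$ in the lexicographic order with $x<y<z$, and these are coprime, so by the coprimality (Buchberger's first) criterion the single $S$-polynomial reduces to $0$ and nothing further is needed. Second, the inclusion $J_{\mathfrak{a}}^x\subseteq(G)$ is a routine division: $z^{2^l}+c_0$ is literally an element of $G$; dividing $y^2+y+1$ by $y+b_0$ (resp.\ $y+b_0+1$) leaves the constant $b_0^2+b_0$ (resp.\ $(b_0+1)^2+(b_0+1)=b_0^2+b_0$), which vanishes since $b_0^2+b_0+1=0$ and $\Char(\mathbb F)=2$; and dividing $y^{2^l}+b_0$ by $y+b_0$ (resp.\ $y+b_0+1$) leaves $b_0^{2^l}+b_0$ (resp.\ $(b_0+1)^{2^l}+b_0=b_0^{2^l}+1+b_0$ by the Frobenius endomorphism), which vanishes once one knows $b_0^{2^l}=b_0$ in \cref{itm:CCB111} and $b_0^{2^l}=b_0^2=b_0+1$ in \cref{itm:CCB112}.

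The only substantive point is the reverse inclusion $(G)\subseteq J_{\mathfrak{a}}^x$, which reduces to showing $y+b_0\equiv 0\pmod{J_{\mathfrak{a}}^x}$ in \cref{itm:CCB111} and $y+b_0+1\equiv 0\pmod{J_{\mathfrak{a}}^x}$ in \cref{itm:CCB112}. The key observation is that $y^3-1=(y-1)(y^2+y+1)\equiv 0\pmod{J_{\mathfrak{a}}^x}$, so $y^3\equiv 1$ and hence $y^{2^l}$ depends only on $2^l\bmod 3$. Since $2^l\equiv 1\pmod 3$ exactly when $l$ is even (equivalently $3\nmid 2^l+1$) and $2^l\equiv 2\pmod 3$ exactly when $l$ is odd (equivalently $3\mid 2^l+1$), we get $y^{2^l}\equiv y$ in \cref{itm:CCB111} and $y^{2^l}\equiv y^2\equiv y+1\pmod{J_{\mathfrak{a}}^x}$ in \cref{itm:CCB112}, the last step using $y^2+y+1\equiv 0$ and $\Char(\mathbb F)=2$. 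Combining with the defining relation $y^{2^l}+b_0\equiv 0$ then yields $y\equiv b_0$, respectively $y\equiv b_0+1$, which are precisely the asserted linear relations (recall $-1=1$ in $\mathbb F$). The parallel facts $b_0^{2^l}=b_0$ resp.\ $b_0^{2^l}=b_0^2=b_0+1$ needed in the second step follow in the same way from $b_0^3=1$.

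I do not expect a genuine obstacle here: this lemma is essentially the $x$-eliminated, unsubstituted counterpart of \cref{lem:CC11GrobnerBasis}, specialized to the locus where $y_0\in\{\zeta_3,{\zeta_3}^2\}$ and $z_0\neq 0$, so $c_0$ plays no role beyond being nonzero. The two things that demand care are (i) keeping the two parity cases aligned with the congruence conditions $3\nmid 2^l+1$ and $3\mid 2^l+1$ when reducing the exponent $2^l$ modulo $3$, and (ii) remembering that, in contrast to the computation for CC$_{11}$, no substitution $y\mapsto y+1$ has been performed here, so the linear Gr\"obner basis element is $y+b_0$ (resp.\ $y+b_0+1$) rather than a shifted variant. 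Once those are pinned down, every verification is a short computation in $\mathbb F[y,z]$.
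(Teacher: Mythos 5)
Your proposal is correct, and on the one substantive point it takes a genuinely different (and cleaner) route than the paper. The paper proves the reverse inclusion $(G)\subseteq J_{\mathfrak{a}}^x$ by a geometric-series identity: it shows $(b_0y-1)P\equiv b_0^{2^l+1}+1\pmod{J_{\mathfrak{a}}^x}$ for $P=\sum_{j<2^l}(b_0y)^j$, checks that this constant is nonzero when $3\nmid 2^l+1$ (and analogously that $((b_0+1)y-1)P'\equiv 1$ when $3\mid 2^l+1$), and then cancels the unit factor $y+b_0+1$ (resp.\ $y+b_0$) from the factorization $y^2+y+1=(y+b_0)(y+b_0+1)$ to isolate the desired linear relation. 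You instead observe that $y^3\equiv 1\pmod{J_{\mathfrak{a}}^x}$, reduce the exponent $2^l$ modulo $3$ according to the parity of $l$, and combine $y^{2^l}\equiv y$ (resp.\ $y^{2^l}\equiv y^2\equiv y+1$) with $y^{2^l}\equiv b_0$. Both arguments exploit the same dichotomy $2^l\bmod 3\in\{1,2\}$, but yours replaces the invertibility bookkeeping with a two-line exponent reduction; it is shorter and, unlike the paper's computation, does not require verifying that any element is a unit in the quotient. The remaining steps (coprime leading monomials for Buchberger, divisions for $J_{\mathfrak{a}}^x\subseteq(G)$) match the paper's in substance.

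One small slip to fix: the remainder of $y^2+y+1$ upon division by $y+b_0$ is $b_0^2+b_0+1$, not $b_0^2+b_0$ (and likewise $(b_0+1)^2+(b_0+1)+1=b_0^2+b_0+1$ in the other case); as written, $b_0^2+b_0$ equals $1$, not $0$, under the hypothesis $b_0^2+b_0+1=0$. The corrected remainder vanishes exactly as you intend, so this does not affect the argument.
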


\begin{proof}
It is a straight forward calculation using the S-polynomials of Buchberger's algorithm to verify that $G$ is a Gr\"{o}bner basis for the ideal $(G)$ for all the cases. It is also a straight forward calculation by polynomial divisions by elements in $G$ to verify that $J_{\mathfrak{a}}^x \subset (G)$ for all the cases. It remains to show that $(G) \subset J_{\mathfrak{a}}^x$ for all the cases. This is shown below assuming the same hypotheses and using the same set $G$ as in the lemma for the corresponding cases.

\medskip
\noindent
\textit{\cref{itm:CCB111}.} The only nontrivial containment we need to show is $y + b_0 \in J_{\mathfrak{a}}^x$, i.e., $y + b_0 \equiv 0 \pmod{J_{\mathfrak{a}}^x}$. We calculate that
\begin{align*}
&(b_0y - 1)\left((b_0y)^{2^l - 1} + (b_0y)^{2^l - 2} + \dotsb + b_0y + 1\right) \\
\equiv{}&(b_0y)^{2^l} - 1 \pmod{J_{\mathfrak{a}}^x} \\
\equiv{}&{b_0}^{2^l} \cdot y^{2^l} - 1 \pmod{J_{\mathfrak{a}}^x} \\
\equiv{}&{b_0}^{2^l + 1} + 1 \pmod{J_{\mathfrak{a}}^x}.
\end{align*}
The hypothesis ${b_0}^2 + b_0 + 1 = 0$ implies $b_0 \neq 1$ with ${b_0}^3 = 1$. Hence the hypothesis $3 \nmid 2^l + 1$ implies ${b_0}^{2^l + 1} + 1 \neq 0$. Let $P = (b_0y)^{2^l - 1} + (b_0y)^{2^l - 2} + \dotsb + b_0y + 1$. Realizing that $(y + b_0)(y + (b_0 + 1)) = (y^2 + y + 1)$, we multiply by $\frac{b_0}{{b_0}^{2^l + 1} + 1}P$ to get
\begin{align*}
&(y + b_0)(y + (b_0 + 1)) \cdot \frac{b_0}{{b_0}^{2^l + 1} + 1}P \equiv 0 \pmod{J_{\mathfrak{a}}^x} \\
\implies{}&(y + b_0) \cdot \frac{(b_0y - 1)P}{{b_0}^{2^l + 1} + 1} \equiv 0 \pmod{J_{\mathfrak{a}}^x} \\
\implies{}&y + b_0 \equiv 0 \pmod{J_{\mathfrak{a}}^x}.
\end{align*}

\medskip
\noindent
\textit{\cref{itm:CCB112}.} The only nontrivial containment we need to show is $y + (b_0 + 1) \in J_{\mathfrak{a}}^x$, i.e., $y + (b_0 + 1) \equiv 0 \pmod{J_{\mathfrak{a}}^x}$. By a similar calculation as in the proof of \cref{itm:CCB111}, we have
\begin{align*}
&((b_0 + 1)y - 1)\left(((b_0 + 1)y)^{2^l - 1} + ((b_0 + 1)y)^{2^l - 2} + \dotsb + (b_0 + 1)y + 1\right) \\
\equiv{}&(b_0 + 1)^{2^l + 1} + 1 \pmod{J_{\mathfrak{a}}^x}.
\end{align*}
This time, the hypothesis $3 \mid 2^l + 1$ implies ${b_0}^{2^l + 1} + 1 = 0$. Hence
\begin{align*}
(b_0 + 1)^{2^l + 1} + 1 &= (b_0 + 1)(b_0 + 1)^{2^l} = (b_0 + 1)({b_0}^{2^l} + 1) = {b_0}^{2^l + 1} + {b_0}^{2^l} + b_0 + 1 \\
&= {b_0}^{2^l} + b_0 = \frac{{b_0}^{2^l + 1} + {b_0}^2}{b_0} = \frac{{b_0}^2 + 1}{b_0} = \frac{b_0}{b_0} = 1.
\end{align*}
Let $P = ((b_0 + 1)y)^{2^l - 1} + ((b_0 + 1)y)^{2^l - 2} + \dotsb + (b_0 + 1)y + 1$. Then the above implies $((b_0 + 1)y - 1)P \equiv 1 \pmod{J_{\mathfrak{a}}^x}$. Again using $(y + (b_0 + 1))(y + b_0) = (y^2 + y + 1)$, we multiply by $(b_0 + 1)P$ to get
\begin{align*}
&(y + (b_0 + 1))(y + b_0) \cdot (b_0 + 1)P \equiv 0 \pmod{J_{\mathfrak{a}}^x} \\
\implies{}&(y + (b_0 + 1)) \cdot ((b_0 + 1)y - 1)P \equiv 0 \pmod{J_{\mathfrak{a}}^x} \\
\implies{}&y + (b_0 + 1) \equiv 0 \pmod{J_{\mathfrak{a}}^x}.
\end{align*}
\end{proof}

Now we simply read off the dimensions using the Gr\"{o}bner bases and obtain the formula
\begin{align*}
\dim_{\mathbb F}(\mathbb F[x, y]/J_{\mathfrak{a}}^x)
&=
2^l
\end{align*}
Recalling the definitions, the above gives the formula
\begin{align*}
k_{\mathfrak{a}, \mathfrak{m}}
&=
2^{l + 1}
\end{align*}

Now we wish to calculate $k_{\mathfrak{a}} = \sum_{\mathfrak{m}} k_{\mathfrak{a}, \mathfrak{m}}$ where the sum is over all maximal ideals $\mathfrak{m} \subset \mathbb F[x, y, z]/I_{\mathfrak{a}}$. For the calculation, we first need to explicitly calculate the points in the variety $V(I_{\mathfrak{a}})$. By first finding solutions of $h = 0$, we easily calculate that the solutions of $f = g = 0$ is the set of points $V(\mathfrak{a}) = V_1(\mathfrak{a}) \cup V_2(\mathfrak{a})$ where
\begin{align*}
V_1(\mathfrak{a}) &= \{(\zeta_3 z_0' + 1, \zeta_3, z_0' + 1) \in \mathbb A_{\mathbb F}^3: z_0' \in \mathbb F\} \\
V_2(\mathfrak{a}) &= \{({\zeta_3}^2 z_0' + 1, {\zeta_3}^2, z_0' + 1) \in \mathbb A_{\mathbb F}^3: z_0' \in \mathbb F\}.
\end{align*}

Note that the intersection is $V_{\cap}(\mathfrak{a}) = V_1(\mathfrak{a}) \cap V_2(\mathfrak{a}) = \varnothing$. Imposing the periodic conditions $x^L - 1 = y^L - 1 = z^L - 1 = 0$, we have the corresponding set of solutions $V(I_{\mathfrak{a}}) = V_1(I_{\mathfrak{a}}) \cup V_2(I_{\mathfrak{a}})$ where $V_j(I_{\mathfrak{a}}) = \{(x_0, y_0, z_0) \in V_j(\mathfrak{a}): {x_0}^{L'} = {y_0}^{L'} = {z_0}^{L'} = 1\}$ for all $j \in \{1, 2\}$. The intersection is $V_{\cap}(I_{\mathfrak{a}}) = \varnothing$. Thus, $\#V_{\cap}(I_{\mathfrak{a}}) = 0$ and
\begin{align*}
\#(V_1(I_{\mathfrak{a}}) \setminus V_{\cap}(I_{\mathfrak{a}})) = \#(V_2(I_{\mathfrak{a}}) \setminus V_{\cap}(I_{\mathfrak{a}})) = \begin{cases}
0, & 3 \nmid L' \\
\deg\left(\gcd((z + 1)^{L'} + 1, (\zeta_3 z + 1)^{L'} + 1)\right), & 3 \mid L'.
\end{cases}
\end{align*}
Recalling arguments from \cref{subsec:CC11_degen}, we put everything together using the formula
\begin{align*}
k_{\mathfrak{a}} &= 2^{l + 1}\#V_{\cap}(I_{\mathfrak{a}}) + 2^{l + 1}\#(V_1(I_{\mathfrak{a}}) \setminus V_{\cap}(I_{\mathfrak{a}})) + 2^{l + 1}\#(V_2(I_{\mathfrak{a}}) \setminus V_{\cap}(I_{\mathfrak{a}})) \\
&= 
\begin{cases}
0, & 3 \nmid L' \\
2^{l + 2}\deg\left(\gcd((z + 1)^{L'} + 1, (\zeta_3 z + 1)^{L'} + 1)\right), & 3 \mid L'.
\end{cases}
\end{align*}
We also recall from \cref{subsec:CC11_degen} that for the special case $3 \mid L' = 4^n - 1$ for some $n \in \mathbb Z_{>0}$, we have the explicit calculation
\begin{align*}
\deg\left(\gcd((z + 1)^{L'} + 1, (\zeta_3 z + 1)^{L'} + 1)\right) = 4^n - 2 = L' - 1.
\end{align*}

We summarize the results in the following theorem.

\begin{theorem}
\label{thm:NumberOfEncodedQubitsCCB11}
Consider $\text{CCB}_{11}$ defined by the ideal $\mathfrak{a} = (1 + y + y^2,1 + x + y + yz)$. Imposing the periodic conditions $x^L - 1 = y^L - 1 = z^L - 1 = 0$ for some $L = 2^l L'$ where $2 \nmid L' \in \mathbb Z_{>0}$ and $l \in \mathbb Z_{\geq 0}$, the number of encoded qubits is given by
\begin{align*}
k_{\mathfrak{a}} =
\begin{cases}
0, & 3 \nmid L \\
2^{l + 2}\deg\left(\gcd((z + 1)^{L'} + 1, (\zeta_3 z + 1)^{L'} + 1)\right), & 3 \mid L.
\end{cases}
\end{align*}
Moreover, if $3 \mid L' = 4^n - 1$ for some $n \in \mathbb Z_{>0}$, then we have the explicit formula
\begin{align*}
k_{\mathfrak{a}} = 4(L - 2^l).
\end{align*}
\end{theorem}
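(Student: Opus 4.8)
The plan is to follow the general strategy already applied to CC$_1$, CC$_6$ and CC$_{11}$, now specialized to $\mathfrak{a} = (f,g) = (1+y+y^2,\, 1+x+y+yz)$. First I would recall that $k_{\mathfrak{a}} = 2\dim_{\mathbb F}(\mathbb F[x,y,z]/I_{\mathfrak{a}})$ with $I_{\mathfrak{a}} = \mathfrak{a} + (x^L-1,y^L-1,z^L-1)$, and invoke \cref{thm:DecomposionOfArtinianRings} to write $k_{\mathfrak{a}} = \sum_{\mathfrak{m}} k_{\mathfrak{a},\mathfrak{m}}$ over the maximal ideals $\mathfrak{m} = (x-x_0,y-y_0,z-z_0)/I_{\mathfrak{a}}$, one for each point of $V(I_{\mathfrak{a}})$. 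By \cref{lem:CriteriaForIsomorphismToLocalization} the localization at $\mathfrak{m}$ is $\mathbb F[x,y,z]/J_{\mathfrak{a}}$ with $J_{\mathfrak{a}} = (f,g,x^{2^l}+a_0,y^{2^l}+b_0,z^{2^l}+c_0)$, and since $f = 1+x+y+yz$ is linear in $x$ one eliminates $x$ to get $\mathbb F[x,y,z]/J_{\mathfrak{a}} \cong \mathbb F[y,z]/J_{\mathfrak{a}}^x$ with $J_{\mathfrak{a}}^x = (y^2+y+1,\, y^{2^l}+b_0,\, z^{2^l}+c_0)$, exactly the ideal treated in \cref{lem:CCB11GrobnerBasis}. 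Here $b_0 = y_0^{2^l}$, $c_0 = z_0^{2^l}$, $y_0^{L'} = z_0^{L'} = 1$, and the compatibility condition $b_0^2+b_0+1=0$ is forced.

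Next I would feed in \cref{lem:CCB11GrobnerBasis}: in both cases $3\nmid 2^l+1$ and $3\mid 2^l+1$ the Gröbner basis is $\{y + (b_0 \text{ or } b_0+1),\, z^{2^l}+c_0\}$, whose leading monomials are $y$ and $z^{2^l}$. By \cref{prp:VectorSpaceBasisUsingGrobnerBasis} the reduced monomials are precisely $\{z^j : 0 \le j < 2^l\}$, so $\dim_{\mathbb F}(\mathbb F[y,z]/J_{\mathfrak{a}}^x) = 2^l$ and hence $k_{\mathfrak{a},\mathfrak{m}} = 2^{l+1}$ for \emph{every} $\mathfrak{m}$ (this also covers $l=0$, where the localization is just the residue field). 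Then I would count $V(I_{\mathfrak{a}})$: solving $h = y^2+y+1 = 0$ forces $y_0 \in \{\zeta_3,\zeta_3^2\}$, and combining with $f=g=0$ yields the two disjoint lines $V_1(\mathfrak{a}) = \{(\zeta_3 z_0'+1,\zeta_3,z_0'+1)\}$ and $V_2(\mathfrak{a}) = \{(\zeta_3^2 z_0'+1,\zeta_3^2,z_0'+1)\}$ with $V_\cap(\mathfrak{a}) = \varnothing$. Imposing $x_0^{L'}=y_0^{L'}=z_0^{L'}=1$: when $3\nmid L'$ neither line contributes a point (as $\zeta_3^{L'}\neq 1$), and when $3\mid L'$ each contributes $\deg\!\left(\gcd((z+1)^{L'}+1,(\zeta_3 z+1)^{L'}+1)\right)$ points, the two counts being equal by applying a lift of the Galois automorphism $\zeta_3 \mapsto \zeta_3^2$ exactly as in \cref{subsec:CC11_degen}. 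Summing $k_{\mathfrak{a}} = 2^{l+1}\bigl(\#(V_1(I_{\mathfrak{a}})\setminus V_\cap) + \#(V_2(I_{\mathfrak{a}})\setminus V_\cap)\bigr)$ then yields the stated two-case formula.

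Finally, for the explicit special case $3 \mid L' = 4^n-1$, I would reuse verbatim the factorizations established in \cref{subsec:CC11_degen}, namely $(z+1)^{4^n-1}+1 = z(z+\zeta_3)(z+\zeta_3^2)(z^{4^n-4}+z^{4^n-7}+\dotsb+z^3+1)$ and $(\zeta_3 z+1)^{4^n-1}+1 = z(z+1)(z+\zeta_3)(z^{4^n-4}+z^{4^n-7}+\dotsb+z^3+1)$, read off that a greatest common divisor is $z(z+\zeta_3)(z^{4^n-4}+\dotsb+1)$ of degree $4^n-2 = L'-1$, and conclude $k_{\mathfrak{a}} = 2^{l+2}(L'-1) = 4(L-2^l)$.

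The genuinely load-bearing step is \cref{lem:CCB11GrobnerBasis} — specifically the telescoping identity $(b_0 y - 1)\sum_{i=0}^{2^l-1}(b_0 y)^i \equiv b_0^{2^l+1}+1 \pmod{J_{\mathfrak{a}}^x}$ together with the observation that $3 \mid 2^l+1 \Leftrightarrow b_0^{2^l+1} = 1$, which forces the case split and the shift $y \equiv b_0+1$ rather than $y \equiv b_0$; but that lemma is already proved above, so within the theorem proper the only remaining subtlety is the careful bookkeeping of $V_1$ versus $V_2$ under periodicity (handled by the Galois argument) and the reuse of the CC$_{11}$ factorization in the special case. Everything else is routine.
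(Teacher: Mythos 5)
Your proposal is correct and follows essentially the same route as the paper: eliminate $x$ via the generator linear in $x$, apply the Gr\"{o}bner basis of \cref{lem:CCB11GrobnerBasis} to get $k_{\mathfrak{a},\mathfrak{m}} = 2^{l+1}$ uniformly, count the two disjoint lines of $V(I_{\mathfrak{a}})$ under the periodicity and Galois symmetry, and reuse the CC$_{11}$ factorization for the $L' = 4^n-1$ case. The only blemish is the harmless swap of the labels $f$ and $g$ relative to the paper's convention; the argument itself is the paper's.
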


\begin{remark}
\cref{thm:NumberOfEncodedQubitsCCB11} implies that $k_{\mathfrak{a}}$ as a function of $L$, obeys the scaling relation
\begin{align*}
k_{\mathfrak{a}}(2^rL)= 2^rk_{\mathfrak{a}}(L)
\end{align*}
for all $r \in \mathbb Z_{\geq 0}$. This is consistent with the self-bifurcating behavior of $\text{CCB}_{11}$.
\end{remark}

\section{Charge annihilators}
In our argument for choosing the coarse-graining factor to be 2, we argued how the charge annihilator or the set of trivial charges shows self-reproducing behavior under coarse-graining by a factor of 2. To be precise, the charge annihilator given by the stabilizer ideal $\langle f,g \rangle$ changes to $\langle f^2,g^2 \rangle$ after coarse-graining. In the {\small{MATHEMATICA}} file SMERG.nb, we show the calculation for all the cubic codes. In fact, for all of them the annihilator after coarse-graining is given by $\av{f^2,g^2}$ where $f$ and $g$ are polynomials that give the stabilizer ideal as well as the charge annihilator of the original model. We now explain how we use elimination theory with Gr\"{o}bner bases to compute the annihilator of the charge module after coarse-graining. The charge annihilator after coarse-graining is defined as the intersection of the original annihilator and the coarse-grained Laurent polynomial ring, i.e., $\av{f,g}\cap \mathbb{F}_2[x^{\pm 2},y^{\pm 2},z^{\pm 2}]$.

Let $R = \mathbb{F}_2[x, y, z]$ and $R' = \mathbb{F}_2[x^2, y^2, z^2] \subset R$ be a subring. Consider an ideal $I = \langle f, g \rangle \subset R$ for some $f, g \in R$. We wish to compute the annihilator $\Ann_{R'}(R/I)$ of the left $R'$-module $R/I$. Since $\Ann_{R'}(R/I) = I \cap R'$, we develop a method to compute $I \cap R'$.

Let $\tilde{R} = \mathbb{F}_2[x, y, z, a, b, c]$ and $\tilde{R}' = \mathbb{F}_2[a, b, c] \subset \tilde{R}$ be a subring. Consider the ideal $\tilde{I} = I + \langle x^2 - a, y^2 - b, z^2 - c \rangle \subset \tilde{R}$. Then elimination theory directly provides an algorithm to compute $\tilde{I} \cap \tilde{R}'$ using a Gr\"{o}bner basis according to \cite[Theorem 2.3.4]{AL94}. We now focus on arguing that computing $\tilde{I} \cap \tilde{R}'$ is essentially the same as computing $I \cap R'$.

Define the surjective homomorphism $\phi: \tilde{R} \to R$ uniquely determined by the mappings
\begin{align*}
x &\mapsto x & a &\mapsto x^2 \\
y &\mapsto y & b &\mapsto y^2 \\
z &\mapsto z & c &\mapsto z^2.
\end{align*}
Note that $\phi(\tilde{R}') = R'$ and $\phi(\tilde{I}) = I$. A simple set theoretic calculation gives $\phi(\tilde{I} \cap \tilde{R}') \subset \phi(\tilde{I}) \cap \phi(\tilde{R}') = I \cap R'$. However, the reverse containment does not hold in general and this is the nontriviality which is to be shown.

\begin{theorem}
We have $I \cap R' = \phi(\tilde{I} \cap \tilde{R}')$. Hence, viewing $I$ as an ideal generated in $\mathbb{F}_2[x^{\pm 1}, y^{\pm 1}, z^{\pm 1}]$, the charge annihilator after coarse-graining can be calculated by $I \cap \mathbb{F}_2[x^{\pm 2}, y^{\pm 2}, z^{\pm 2}] = \langle \phi(\tilde{I} \cap \tilde{R}')\rangle \subset \mathbb{F}_2[x^{\pm 2}, y^{\pm 2}, z^{\pm 2}]$.
\end{theorem}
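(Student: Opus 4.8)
The plan is to prove the nontrivial containment $I \cap R' \subseteq \phi(\tilde{I} \cap \tilde{R}')$ by a careful lifting argument, since the reverse containment was already established set-theoretically in the discussion preceding the theorem. First I would fix the (graded) structure that makes everything work: the ring $\tilde{R} = \mathbb{F}_2[x,y,z,a,b,c]$ carries a natural $(\mathbb{Z}/2)^3$-grading in which $x,y,z$ have degree $(1,0,0),(0,1,0),(0,0,1)$ respectively and $a,b,c$ have degree $(0,0,0)$; equivalently, one can use the $\mathbb{Z}^3$-grading on $R = \mathbb{F}_2[x,y,z]$ and observe that $\phi$ collapses it modulo $2$. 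The key point is that $\tilde{I} = I + \langle x^2-a, y^2-b, z^2-c\rangle$ is generated by elements that are homogeneous for this grading (each of $f$, $g$ may be split into its graded pieces, and $x^2-a$, etc.\ sit in degree $(0,0,0)$), so $\tilde{I}$ is a graded ideal, and hence $\tilde{I}\cap\tilde{R}'$ is spanned by graded elements — which, being in $\tilde{R}' = \mathbb{F}_2[a,b,c]$, automatically lie in degree $(0,0,0)$.

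Next I would set up the lifting. Given $p \in I \cap R'$, write $p = p(x^2,y^2,z^2)$ for a unique $\hat p \in \tilde{R}'$ with $\phi(\hat p) = p$, namely $\hat p = p(a,b,c)$. Since $p \in I$, write $p = u f + v g$ for some $u,v \in R$; lift $u,v$ to $\tilde u, \tilde v \in \tilde{R}$ arbitrarily (say, pick any preimages under $\phi$) and set $q = \tilde u f + \tilde v g \in \tilde{I}$. Then $\phi(q) = p = \phi(\hat p)$, so $\hat p - q \in \ker\phi$. Now I would use the explicit description of the kernel: $\ker\phi = \langle x^2 - a, y^2 - b, z^2 - c\rangle$, which holds because the quotient $\tilde{R}/\langle x^2-a,y^2-b,z^2-c\rangle$ is visibly isomorphic to $R$ via the map induced by $\phi$ (one can eliminate $a,b,c$). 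Therefore $\hat p - q \in \langle x^2-a,y^2-b,z^2-c\rangle \subseteq \tilde{I}$, and so $\hat p = q + (\hat p - q) \in \tilde{I}$. Combined with $\hat p \in \tilde{R}'$ this gives $\hat p \in \tilde{I}\cap\tilde{R}'$, hence $p = \phi(\hat p) \in \phi(\tilde{I}\cap\tilde{R}')$, which is exactly the missing containment.

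The last equality in the statement, $I \cap \mathbb{F}_2[x^{\pm 2},y^{\pm 2},z^{\pm 2}] = \langle \phi(\tilde I\cap\tilde R')\rangle$ inside the Laurent ring, then follows formally: passing to the Laurent polynomial ring is localization at the multiplicatively closed set generated by $x,y,z$, and localization commutes with finite intersections of submodules (here, with the intersection $I \cap R'$), while the subring $\mathbb{F}_2[x^{\pm 2},y^{\pm 2},z^{\pm 2}]$ is the localization of $R'$ at the image of that same set. So the Laurent version of $I\cap R'$ is generated by $I \cap R'$, and we have just identified the latter with $\phi(\tilde I\cap\tilde R')$. Combining with \cite[Theorem 2.3.4]{AL94}, which computes $\tilde I\cap\tilde R'$ via a Gr\"obner basis of $\tilde I$ with respect to an elimination order eliminating $x,y,z$, gives the claimed algorithm.

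I expect the main obstacle to be making the first reduction fully rigorous — i.e.\ pinning down that the "obvious" lift $\hat p$ of an element of $I\cap R'$ actually lands in $\tilde I$, not merely in $\tilde I + \ker\phi$ in a way that might fail to intersect $\tilde R'$ correctly. The argument above sidesteps the need for the grading by exploiting $\ker\phi \subseteq \tilde I$ directly, so the grading remark becomes optional cleanup; but if one wanted a cleaner conceptual statement (that $\phi$ induces a bijection between graded ideals of $\tilde R$ containing $\langle x^2-a,\dots\rangle$ and ideals of $R$, restricting to subrings), the homogeneity bookkeeping for the generators $f,g$ — which are genuine polynomials, not monomials — is the one place where a little care is required. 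Everything else is routine commutative algebra.
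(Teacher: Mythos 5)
Your proof is correct, and it rests on exactly the same two pillars as the paper's: the identification $\ker(\phi) = \langle x^2 - a, y^2 - b, z^2 - c\rangle$ and the consequence $\ker(\phi)\subseteq\tilde{I}$, i.e.\ $\phi^{-1}(I)=\tilde{I}$. Where you differ is in packaging. The paper proves the kernel computation carefully (factoring $\phi$ through three single-variable eliminations and invoking the division algorithm with monic divisors), then deduces $\phi(\tilde{I}\cap\tilde{R}')=I\cap R'$ abstractly via the second isomorphism theorem, tracing the induced isomorphism $\tilde{R}'/(\tilde{I}\cap\tilde{R}')\cong R'/(I\cap R')$ to see it comes from $\phi|_{\tilde{R}'}$. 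You instead do a direct element chase: lift $p\in I\cap R'$ to $\hat{p}=p(a,b,c)\in\tilde{R}'$ and observe $\hat{p}-p\in\ker(\phi)\subseteq\tilde{I}$, hence $\hat{p}\in\tilde{I}\cap\tilde{R}'$. This is the same mechanism unwound by hand, and in fact your lifting of $u,v$ is superfluous: since $R\subseteq\tilde{R}$ and $I\subseteq\tilde{I}$, you may take $q=p$ itself. The grading discussion at the start is, as you concede, not needed. Two small remarks on rigor: your justification of $\ker(\phi)=\langle x^2-a,y^2-b,z^2-c\rangle$ (``visibly isomorphic'') is the one place where the paper is more careful than you are, so you should at least gesture at the division-algorithm argument; on the other hand, your localization argument for passing to the Laurent rings (clearing denominators by monomials in $S'=\{x^{2i}y^{2j}z^{2k}\}$) is a step the paper's proof omits entirely, so that part of your write-up is a genuine addition.
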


\begin{proof}
First we show the useful fact $\ker(\phi) = \langle x^2 - a, y^2 - b, z^2 - c \rangle$. Define the surjective homomorphism $\phi_a: \mathbb{F}_2[x, y, z, a, b, c] \to \mathbb{F}_2[x, y, z, b, c]$ uniquely determined by the mappings
\begin{align*}
x &\mapsto x & a &\mapsto x^2 \\
y &\mapsto y & b &\mapsto b \\
z &\mapsto z & c &\mapsto c.
\end{align*}
and also define the surjective homomorphisms $\phi_b: \mathbb{F}_2[x, y, z, b, c] \to \mathbb{F}_2[x, y, z, c]$ and $\phi_c: \mathbb{F}_2[x, y, z, c] \to \mathbb{F}_2[x, y, z]$ in a similar fashion. Then we have the commutative diagram
\begin{equation*}
\begin{tikzcd}
\mathbb{F}_2[x, y, z, a, b, c] \arrow[r, "\phi_a"] \arrow[rrr, bend right = 15, "\phi"] & \mathbb{F}_2[x, y, z, b, c] \arrow[r, "\phi_b"] & \mathbb{F}_2[x, y, z, c] \arrow[r, "\phi_c"] & \mathbb{F}_2[x, y, z].
\end{tikzcd}
\end{equation*}
In light of the containments
\begin{align*}
\mathbb{F}_2[x, y, z] \subset \mathbb{F}_2[x, y, z, c] \subset \mathbb{F}_2[x, y, z, b, c] \subset \mathbb{F}_2[x, y, z, a, b, c]
\end{align*}
we have $\ker(\phi) = \ker(\phi_c \circ \phi_b \circ \phi_a) = \ker(\phi_a) + \ker(\phi_b) + \ker(\phi_c)$. Now, making the identification $\mathbb{F}_2[x, y, z, c] = \mathbb{F}_2[x, y, z][c]$ and using the division algorithm with the divisor being the monic polynomial $c - z^2 \in \mathbb{F}_2[x, y, z][c]$, we conclude that $\ker(\phi_c) = \langle z^2 - c \rangle \subset \mathbb{F}_2[x, y, z, c]$. By similar arguments, we also have $\ker(\phi_b) = \langle y^2 - b \rangle \subset \mathbb{F}_2[x, y, z, b, c]$ and $\ker(\phi_a) = \langle x^2 - a \rangle \subset \mathbb{F}_2[x, y, z, a, b, c]$. Thus $\ker(\phi) = \langle x^2 - a, y^2 - b, z^2 - c \rangle \subset \mathbb{F}_2[x, y, z, a, b, c]$.

We now compute that $\phi^{-1}(I) = I + \ker(\phi) = I + \langle x^2 - a, y^2 - b, z^2 - c \rangle = \tilde{I}$. Hence the surjective homomorphism $\phi|_{\tilde{R}' + \tilde{I}}: \tilde{R}' + \tilde{I} \to R' + I$ lifts to the isomorphism $\overline{\phi|_{\tilde{R}' + \tilde{I}}}: \frac{\tilde{R}' + \tilde{I}}{\tilde{I}} \to \frac{R' + I}{I}$. We use this and the second isomorphism theorem to get
\begin{align*}
\frac{\tilde{R}'}{\tilde{I} \cap \tilde{R}'} \cong \frac{\tilde{R}' + \tilde{I}}{\tilde{I}} \cong \frac{R' + I}{I} \cong \frac{R'}{I \cap R'}.
\end{align*}
More explicitly tracing the maps in the isomorphisms above, we have the commutative diagram
\begin{equation*}
\begin{tikzcd}
\dfrac{\tilde{R}' + \tilde{I}}{\tilde{I}} \arrow[rrr, "\overline{\phi|_{\tilde{R}' + \tilde{I}}}"] &[-20pt] & &[-20pt] \dfrac{R' + I}{I} \arrow[ddd] \\[-20pt]
& \tilde{r}' + \tilde{I} \arrow[r, mapsto] & \phi(\tilde{r}') + I \arrow[d, mapsto] & \\
& \tilde{r}' + \tilde{I} \cap \tilde{R}' \arrow[u, mapsto] \arrow[r, mapsto, dashed] & \phi(\tilde{r}') + I \cap R' & \\[-20pt]
\dfrac{\tilde{R}'}{\tilde{I} \cap \tilde{R}'} \arrow[uuu] \arrow[rrr, dashed] &&& \dfrac{R'}{I \cap R'}.
\end{tikzcd}
\end{equation*} 
We see that the isomorphism $\frac{\tilde{R}'}{\tilde{I} \cap \tilde{R}'} \cong \frac{R'}{I \cap R'}$ is induced by the surjective homomorphism $\phi|_{\tilde{R}'}: \tilde{R}' \to R'$. This implies $\tilde{I} \cap \tilde{R}' = \{\tilde{r}' \in \tilde{R}': \phi(\tilde{r}') \in I \cap R'\} = \phi^{-1}(I \cap R')$. Hence we conclude $\phi(\tilde{I} \cap \tilde{R}') = I \cap R'$.
\end{proof}
\end{document}